\documentclass[12pt,reqno]{amsart}
\usepackage{main}

\graphicspath{{Figures/}}

\usepackage{csquotes}
\usepackage[style=apa,backend=biber,natbib=true]{biblatex}
\usepackage{titletoc}
\newcommand\DoToC{%
  \startcontents
  \printcontents{}{0}{\textbf{Contents}\vskip1em\hrule\vskip1em}
  \vskip1em\hrule\vskip5pt
}

\title{\Large Design-Conditional Prior Elicitation for Dirichlet Process Mixtures:\\
A Unified Framework for Cluster Counts and Weight Control}
\author{JoonHo Lee}
\date{\footnotesize September 5, 2026. \\[0.5em]
Lee: College of Education, The University of Alabama, jlee296@ua.edu.
\\[0.5em]
This research was supported by the Institute of Education Sciences, U.S. Department of Education, through Grant R305D240078 to the University of Alabama. The opinions expressed are those of the author and do not represent views of the Institute or the U.S. Department of Education.}

\hypersetup{
  pdftitle={Design-Conditional Prior Elicitation for Dirichlet Process Mixtures: A Unified Framework for Cluster Counts and Weight Control},
  pdfauthor={JoonHo Lee},
  pdfsubject={Updated arXiv preprint with Online Supplemental Materials},
  pdfkeywords={Dirichlet process mixture, concentration parameter, prior elicitation, multisite trials, item response theory}
}

\numberwithin{equation}{section}

\makeatletter

\renewcommand{\theHALG@line}{\thealgorithm.\arabic{ALG@line}}
\makeatother

\ifcsname c@recipebox\endcsname\else
  \floatstyle{plaintop}
  \newfloat{recipebox}{t}{lob}
  \floatname{recipebox}{Box}
  \floatstyle{plain}
  \mdfsetup{linewidth=0.6pt,innertopmargin=9pt,innerbottommargin=7pt,innerleftmargin=10pt,innerrightmargin=10pt,skipabove=4pt,skipbelow=4pt}
\fi
\algrenewcommand{\algorithmicrequire}{\textbf{Inputs:}}
\algrenewcommand{\algorithmicensure}{\textbf{Output:}}

\begin{document}

\begin{abstract}
Dirichlet process mixtures describe variation in effects or latent traits across sites, studies, or examinees. The concentration parameter controls the number and relative sizes of the clusters, but its hyperprior is often chosen by default. We describe a method for choosing this hyperprior when the number of units is fixed by design. Analysts specify an expected number of clusters and their uncertainty about that count. We translate these judgments into a hyperprior and examine what it implies about the sizes of the clusters. When the count and size judgments conflict, the Dual-Anchor procedure chooses and reports the trade-off. Simulations show that the default hyperprior can favor a single cluster when estimates for individual units are noisy. Applications to a multisite trial, a meta-analysis, and a vocabulary test show sensitivity in heterogeneity summaries despite relatively small changes in individual estimates. The DPprior R package implements the calibration and diagnostics.
\end{abstract}

\maketitle

\noindent\textbf{Keywords:} Dirichlet process mixture; concentration parameter; prior elicitation; multisite trials; item response theory

\pagestyle{plain}
\newpage

\section{Introduction}
\label{sec:intro}

Multisite trials and meta-analyses in education often ask how an intervention's effect varies across schools, districts, or studies \citep{bloom2017using, weiss2017much}. A related question in psychometrics concerns the distribution of a latent trait among examinees. In both cases, we may wish to allow a skewed or multimodal distribution rather than assume normality. Yet the data may provide little information about its shape. Site and study effects are estimated with error \citep{lee2025improving}, and each examinee answers a limited set of items. The prior distribution can therefore have a substantial influence on the estimated heterogeneity.

Dirichlet process mixture (DPM) models provide a flexible way to model these distributions. They also partition the units into occupied mixture components, or clusters. Dirichlet process priors have been used for site effects in multisite trials \citep{lee2025improving}, cluster effects in observational multilevel data \citep{antonelli_mitigating_2016}, and study effects in meta-analysis \citep{burr2005bayesian, cao2024bayesian, muthukumarana2016meta}. In item response models, \citet{duncan2008nonparametric} and \citet{sanmartin2011bayesian} placed a Dirichlet process prior on the ability distribution; the latter also studied identification. Related developments include flexible item characteristic curves \citep{miyazaki2009bayesian}, nonnormal latent traits in Rasch models \citep{FinchEdwards2016Rasch}, latent class models with mixed item types \citep{qiu2025bayesian}, and multiple-rater models \citep{MignemiManolopoulou2025bnpMultipleRaters}.

The concentration parameter controls both the number of occupied clusters and the distribution of their weights. It is commonly assigned a Gamma hyperprior \citep{escobar1995bayesian}, often with a default setting such as Gamma(1, 1) \citep[e.g.,][]{cao2024bayesian, ma2020bayesian, MignemiManolopoulou2025bnpMultipleRaters}. This choice can affect conclusions about heterogeneity when the data are only moderately informative \citep{dorazio2009selecting, giordano2023evaluating, lee2025improving}. For 100 units, Gamma(1, 1) implies an expected cluster count of about 4.8 and a 59 percent prior probability that a randomly chosen unit belongs to a cluster holding more than half of the population mass. A setting adopted as vague may thus express a strong preference for pooling.

These implications matter for interpretation. In a multisite trial, several occupied components may suggest that school characteristics deserve further investigation. In a meta-analysis, they may motivate a search for study-level moderators. In an item response model, the component count describes how the fitted mixture represents the ability distribution. None of these counts establishes the existence of distinct substantive groups. Their interpretation depends on the likelihood, the prior, and the information available about individual units. The distinction between clusters in a sample and those in a population is also important \citep{mccullagh2008many}.

Elicitation can begin with the number of clusters rather than the concentration parameter itself \citep{murugiah2012selecting}. For example, \citet{bonafos2024dirichlet} used prior knowledge about infant vocalizations to choose an expected count, and \citet{kottas2006dirichlet} compared Gamma settings through their implied counts. Several authors have used the induced cluster-count distribution to guide prior choice \citep{greve2022spying, paganin2023computational}. \citet{dorazio2009selecting} selected Gamma parameters by minimizing the Kullback--Leibler divergence from a target count distribution. \citet{murugiah2012selecting} instead specified tail probabilities of the count and solved for the parameters at the design size. \citet{paganin2023computational} compared candidate settings through the prior mean and variance of the number of clusters at their sample sizes.

Two difficulties remain. First, translating a judgment about the count into Gamma parameters requires numerical calibration. The optimization in \citet{dorazio2009selecting} requires numerical search. Evaluating posterior sensitivity adds the cost of fitting alternative priors \citep{giordano2023evaluating}; empirical Bayes estimation of the concentration parameter can also require repeated MCMC runs \citep{antonelli_mitigating_2016, dorazio2008modeling}. Some applications instead use a fixed Gamma setting and assess sensitivity to alternative values \citep{leslie2007general, ma2020bayesian}. Second, matching the count does not ensure plausible cluster sizes. Expecting ten clusters among a hundred units need not imply ten moderately sized groups. \citet{vicentini2025prior} show that priors that appear diffuse over the count can be highly informative about the weights. We use the term \emph{unintended prior} for a discrepancy between these implications and the analyst's judgment.

Count calibration also depends on the number of units. \citet{vicentini2025prior} criticize this sample-size dependence and propose sample-size-independent (SSI) anchors based on the weights. Their objection is especially relevant when units accumulate over time or several data sets share a concentration parameter. Here we consider a fixed set of sites, a defined collection of studies, or a specified sample of examinees. Our question concerns heterogeneity among those units, so we condition the count judgment on their number. We retain the weight diagnostics because even a plausible count calibration can imply implausible concentration of mass.

In this paper we develop design-conditional elicitation for these fixed designs. We translate a stated mean and variance of the cluster count into a Gamma hyperprior using a closed-form approximation followed by Newton refinement to the exact moments. We then report the probability that a randomly chosen unit belongs to a cluster holding most of the population mass, together with bounds for the corresponding probability concerning the largest cluster. If the count and size judgments conflict, we provide a procedure for choosing a count/weight compromise and a separate sensitivity analysis for a required weight bound. We also give guidance for expressing count judgments as intervals and confidence levels, and a reporting checklist. The \texttt{DPprior} R package implements these calculations.

The workflow combines established properties of the Dirichlet process with calibration and reporting procedures for applied analyses. We assess sensitivity to the stated inputs in simulations and in applications to the Tennessee STAR class-size experiment \citep{krueger1999experimental,nye2000effects}, a writing-to-learn meta-analysis \citep{bangertdrowns2004effects}, and a Rasch analysis of vocabulary checklist data \citep{openpsychometrics2016gcbs,domingue2025irw}. The examples use our judgments and examine alternative counts and confidence levels; evaluating how practitioners form these judgments is a separate task. We retain the DPM specification used in the analyses we address and discuss alternatives in Section~\ref{sec:discussion}. Streaming data and hierarchical Dirichlet processes with a shared concentration parameter are outside our scope.

In Section~\ref{sec:theory} we describe the cluster-count and weight distributions. Section~\ref{sec:method} presents the elicitation and calibration procedure. Sections~\ref{sec:simulation} and~\ref{sec:applications} report the simulations and applications, and Section~\ref{sec:discussion} discusses the implications and limitations. Derivations, extended results, and software details are in the Online Supplemental Materials (OSM).


\section{Two Interpretations of the Concentration Parameter}
\label{sec:theory}

A prior on $\alpha$ determines both the number of occupied clusters $K_J$ among $J$ units and the distribution of their mixture weights $(w_1, w_2, \ldots)$. The units may be sites, studies or examinees. A plausible cluster count can coexist with a high probability that one cluster dominates. We use these two features of the prior to specify expected heterogeneity and check its implications for cluster sizes.

\subsection{Setup and Notation}
\label{subsec:setup}

We consider $J$ exchangeable units indexed by $j = 1, \ldots, J$, each with a latent effect $\tau_j$ (for instance a site-specific treatment effect) observed with known estimation error \citep{rubin1981estimation}. The model we fit throughout is a Dirichlet process mixture of normals on the unit effects \citep{escobar1995bayesian},
\begin{equation}\label{eq:model}
\begin{array}{@{}r@{\qquad}l@{}}
  \hat{\tau}_j \mid \tau_j
    \sim \Norm\!\left(\tau_j,\; \widehat{se}_j^{\,2}\right),
  & \tau_j \mid (\mu_j, \sigma_j^2)
    \sim \Norm\!\left(\mu_j,\; \sigma_j^2\right), \\[2pt]
  (\mu_j, \sigma_j^2) \mid G \iid G,
  & G \sim \DP(\alpha, G_0), \qquad \alpha \sim \mathrm{Gamma}(a, b).
\end{array}
\end{equation}
The first line separates estimation error in $\hat{\tau}_j$ from variation within an effect component. The second line assigns the component parameters to a Dirichlet process. The base measure $G_0$ is a normal distribution on the component mean times an inverse gamma distribution on the component variance, and $\alpha > 0$ controls the tendency to create new components \citep{ferguson1973bayesian, escobar1995bayesian}.

Because $G$ is discrete \citep{sethuraman1994constructive}, some units share component parameters $(\mu_j, \sigma_j^2)$. These ties partition the units into $K_J$ occupied components, which we call clusters. Unit effects are continuous draws around their component means; $K_J$ therefore counts effect components rather than distinct effect values. For item response models, we replace the observation model in Equation~\eqref{eq:model} by the Rasch likelihood $y_{ij} \sim \Bernoulli\{\mathrm{logit}^{-1}(\eta_j - \beta_i)\}$ for examinee $j$ and item $i$. The latent trait $\eta_j$ replaces $\tau_j$, and $J = N$ examinees \citep{paganin2023computational}.

The stick-breaking representation \citep{sethuraman1994constructive, ishwaran2000markov} is $G = \sum_{h \ge 1} w_h \delta_{\theta_h}$, where $\theta_h \iid G_0$ and the weights follow the $\GEM(\alpha)$ law. In particular, $w_1 \mid \alpha \sim \Beta(1, \alpha)$. The weights occur in size-biased order, not decreasing order \citep{vicentini2025prior}. We use the shape--rate parameterization $\alpha \sim \mathrm{Gamma}(a, b)$ throughout, so $\E(\alpha) = a/b$.

\subsection{Partition Lens and Weight Lens}
\label{subsec:lenses}

The cluster count $K_J$ describes how many distinct patterns occur among these $J$ units. Given $\alpha$, it follows the Antoniak distribution \citep{antoniak1974mixtures}. The indicator that unit $i$ opens a new cluster has probability $\alpha/(\alpha + i - 1)$, independently of the preceding indicators. Summing their means and variances gives
\begin{equation}\label{eq:antoniak}
  \E(K_J \mid \alpha) = \sum_{i=1}^{J} \frac{\alpha}{\alpha + i - 1},
  \qquad
  \Var(K_J \mid \alpha) = \sum_{i=1}^{J} \frac{\alpha\,(i - 1)}{(\alpha + i - 1)^{2}}
\end{equation}
(the exact probability mass function and Gamma-mixed forms are in OSM Appendix~A). Larger $\alpha$ gives more clusters. In Section~\ref{sec:method}, we match the Gamma-mixed mean and variance to the analyst's stated count judgment.

The mixture weights describe how mass is distributed among clusters. The first stick-breaking weight has distribution $w_1 \mid \alpha \sim \Beta(1, \alpha)$. A fresh size-biased pick $W_{\mathrm{SB}}$ from the ranked masses has the same marginal law \citep{sethuraman1994constructive, vicentini2025prior}. \citet{vicentini2025prior} derive its marginal distribution under a Gamma hyperprior. Equivalently, integrating the conditional tail $(1 - t)^{\alpha}$ over $\alpha \sim \mathrm{Gamma}(a, b)$ gives:
\begin{equation}\label{eq:w1-tail}
  \Pr(W_{\mathrm{SB}} > t \mid \alpha) = (1 - t)^{\alpha},
  \qquad
  \Pr(W_{\mathrm{SB}} > t) = \left( \frac{b}{b - \log(1 - t)} \right)^{\!a},
  \qquad 0 < t < 1 .
\end{equation}
This is the prior probability that a new unit belongs to a population cluster with more than a fraction $t$ of the total mass. By exchangeability, it is also the probability for any given unit. We use this size-biased tail as the diagnostic and second anchor in Section~\ref{subsec:dual-anchor}.

Two related quantities answer different questions. The largest-atom probability $\Pr(W_{\max} > t)$ concerns whether any population cluster exceeds $t$; the largest observed share is the fraction of the $J$ units assigned to the biggest cluster in a fitted model. Section~\ref{sec:applications} reports the observed share beside the largest population atom. Proposition~1 bounds the largest-atom probability using the size-biased tail; the software also computes it exactly.

\medskip\noindent\textbf{Proposition~1 (Bounds on the largest weight).}
Let $W_{\max}$ be the largest ranked mass, let $W_{\mathrm{SB}}$ be a fresh size-biased pick from those masses, and let $t \ge 1/2$. Then, for every fixed $\alpha$, $\Pr(W_{\mathrm{SB}} > t \mid \alpha) \le \Pr(W_{\max} > t \mid \alpha) \le \Pr(W_{\mathrm{SB}} > t \mid \alpha)/t$. The same inequalities hold after averaging over $\alpha \sim \mathrm{Gamma}(a, b)$.
\medskip

To prove the result, condition on the ranked masses (OSM Appendix~E). At most one mass can exceed $t \ge 1/2$, so $\Pr(W_{\mathrm{SB}} > t) = \E\{W_{\max}\,\mathbf{1}(W_{\max} > t)\}$ lies between $t\Pr(W_{\max} > t)$ and $\Pr(W_{\max} > t)$. At $t = 0.9$ the size-biased tail determines the largest-weight probability to within 11 percent; at $t = 0.5$, to within a factor of two. Exact values of $\Pr(W_{\max} > t)$ require only one-dimensional integration (OSM Appendix~E).

We also use the co-clustering index $\rho = \sum_h w_h^2$, the probability that two randomly selected units share a cluster, with $\E(\rho \mid \alpha) = 1/(1 + \alpha)$ \citep{antoniak1974mixtures}. Table~\ref{tab:functionals} summarizes the interpretation and reporting of each quantity.

\begin{table}[!tp]
\centering
\caption{Partition and weight functionals of the concentration parameter, with the question each one answers for the analyst and the form in which it is reported}
\label{tab:functionals}
\begingroup
\normalsize
\setstretch{1.08}
\renewcommand{\arraystretch}{1.18}
\setlength{\tabcolsep}{5pt}
\begin{tabularx}{\textwidth}{@{}>{\raggedright\arraybackslash}p{0.25\textwidth} >{\raggedright\arraybackslash}X >{\raggedright\arraybackslash}X@{}}
\toprule
Quantity & Question for the analyst & How to report \\
\midrule
$\E(K_J)$, $\Var(K_J)$ & How many distinct patterns do I expect among these $J$ units, and how sure am I? & ``About $X$ patterns; 80\% interval $[L, U]$.'' \\ \addlinespace[5pt]
$\Pr(K_J = k)$ & What is the chance of exactly $k$ patterns? & Mode and central interval of $K_J$. \\ \addlinespace[5pt]
\makecell[l]{$\Pr(W_{\mathrm{SB}} > 0.5)$\\$\Pr(W_{\mathrm{SB}} > 0.9)$} & If I pick a unit at random, could it fall in a cluster holding more than half (more than 90\%) of the mass? & ``$X$\% chance of belonging to a majority-mass (near-universal-mass) population cluster.'' \\ \addlinespace[5pt]
$\Pr(W_{\max} > t)$ & Could the largest cluster hold more than a fraction $t$ of the mass? & Bounds from Proposition~1 or the exact value. \\ \addlinespace[5pt]
$\rho = \sum_h w_h^2$ & If I pick two units at random, will they share a pattern? & ``$X$\% chance of same-cluster membership.'' \\
\bottomrule
\end{tabularx}
\endgroup

\medskip
\noindent\parbox{\textwidth}{\small\textit{Note.} All quantities have closed forms or one-dimensional integrals under $\alpha \sim \mathrm{Gamma}(a, b)$ (OSM Appendices~A and~E).}
\end{table}

\subsection{Fixed Designs and the Unintended-Prior Phenomenon}
\label{subsec:unintended}

In multisite trials, meta-analyses and psychometric studies, the design fixes the number of units. When interest concerns the distribution of effects or traits across these $J$ units, $K_J$ provides a natural starting point for elicitation. Figure~\ref{fig:1} shows why we also examine the weights. The calibrated hyperprior reproduces a plausible target for $K_{50}$ (Panel~A), yet implies $\Pr(W_{\mathrm{SB}} > 0.5) = 0.49$ and $\Pr(W_{\mathrm{SB}} > 0.9) = 0.18$ (Panel~B). Thus, a randomly chosen unit has nearly even odds of belonging to a cluster with more than half of the mass. An analyst who intended ``about five groups'' to mean groups of comparable size would not have specified that prior deliberately.

\begin{figure}[!tp]
  \centering
  \includegraphics[width=\textwidth]{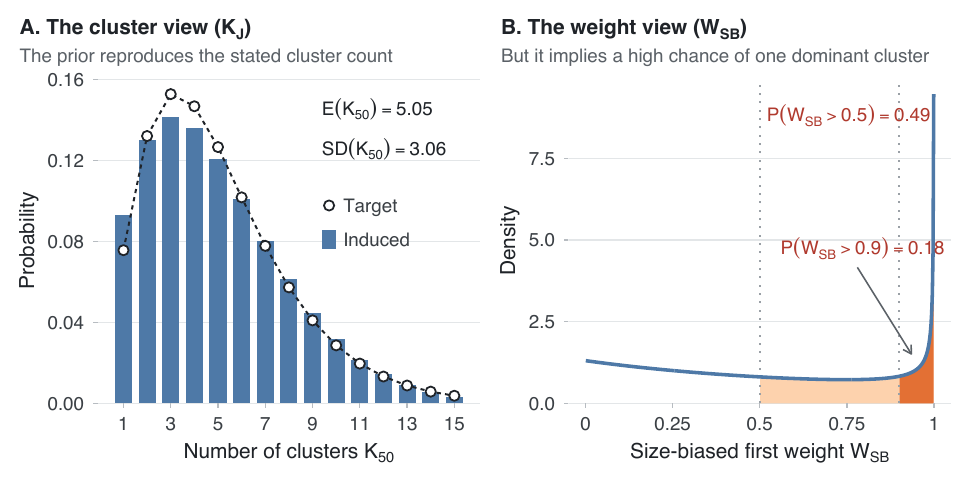}
  \caption{The unintended prior for $J = 50$ under $\alpha \sim \mathrm{Gamma}(1.60, 1.22)$. Panel~A shows the target (circles) and induced (bars) distributions of $K_{50}$. Panel~B shows the marginal density of the size-biased first weight, with tails above 0.5 and 0.9 shaded. \textit{Note.} The target is a negative binomial count distribution with the reproduced moments $\E(K_{50}) = 5.05$ and $\mathrm{SD}(K_{50}) = 3.06$. The shaded tails give the probabilities quoted in the text.}
  \label{fig:1}
\end{figure}

This difficulty arises from the relation between counts and weights. In the calibrations in Table~\ref{tab:vif}, a small target count keeps $\E(\alpha)$ near one and the majority-mass probability high. For example, at $J = 100$ and $\mu_K = 5$, $\E(\alpha) \approx 1$. As $\Var(K_J)$ increases from 5 to 20, the majority-mass tail changes only from $0.53$ to $0.61$, whereas the extreme tail $\Pr(W_{\mathrm{SB}} > 0.9)$ increases from $0.15$ to $0.36$ (Table~\ref{tab:vif}, Section~\ref{subsec:elicit}). Over this grid, the mean has more influence on majority dominance; the variance has more influence on extreme dominance and the mass near $\alpha = 0$.

A one-parameter Dirichlet process cannot freely specify both few clusters and little dominance. A Gamma hyperprior has two parameters, but matching the mean and variance of the cluster count uses both; no free parameter remains for an independent weight constraint. We therefore pair count elicitation with weight diagnostics at fixed thresholds. If the implied dominance is unacceptable, the analyst must choose and report the count/weight trade-off produced by Dual-Anchor refinement.

Design size also changes the interpretation of a default. In item response models, $J = N$ may range from a few hundred to several thousand examinees. Under a fixed hyperprior, the expected count grows roughly as $\log J$: each additional unit can open a cluster (Equation~\ref{eq:antoniak}). The size-biased majority probability depends only on $\alpha$ (Equation~\ref{eq:w1-tail}).

Figure~\ref{fig:defaults} shows both quantities for $J = 10$ to 5,000 under three Gamma settings in use, $\alpha = 1$, and two forms of a count-calibrated prior. Under $\mathrm{Gamma}(1, 1)$, the expected count rises from about 4 at $J = 50$ to about 8 at $J = 2{,}000$ (Panel~A), while the majority tail remains constant (Panel~B). Calibrating once at $J = 500$ and transporting the prior produces the same upward drift. The default thus expects twice as many groups among 2,000 examinees as among 50 sites, with the same probability of assigning a unit to a dominant group. Section~\ref{sec:method} gives a procedure for choosing the prior at the intended design size and checking its weight implications.

\begin{figure}[!tp]
  \centering
  \includegraphics[width=\textwidth]{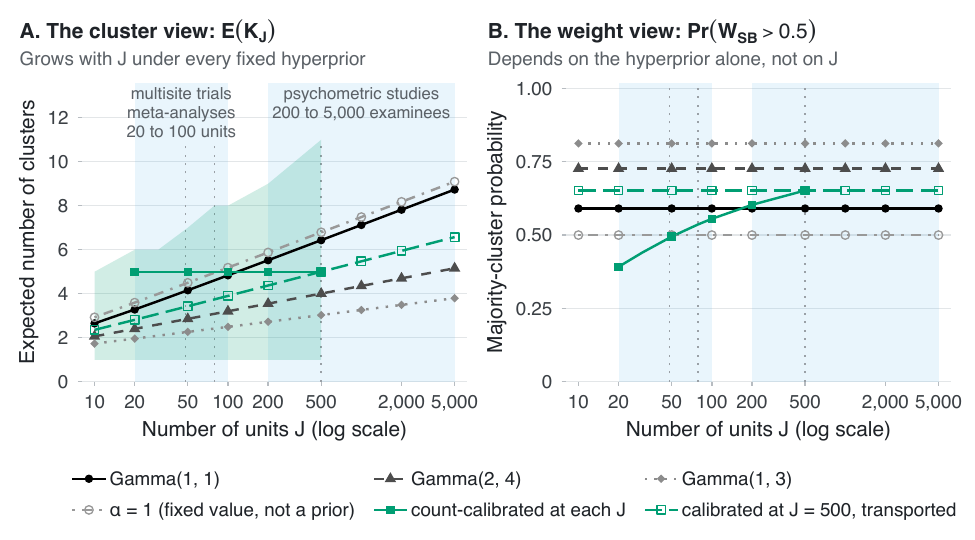}
  \caption{Expected cluster counts $\E(K_J)$ (Panel~A) and size-biased majority probabilities $\Pr(W_{\mathrm{SB}} > 0.5)$ (Panel~B) across design sizes. Curves show $\mathrm{Gamma}(1, 1)$, $\mathrm{Gamma}(2, 4)$, $\mathrm{Gamma}(1, 3)$ and $\alpha = 1$, for $J$ from 10 to 5,000 on a logarithmic axis. The count-calibrated prior for ``about five, between 2 and 10 with 80 percent probability'' is either recalibrated at each $J$ (solid) or calibrated at $J = 500$ and transported (dashed). \textit{Note.} Shaded columns mark illustrative design-size ranges for multisite trials and meta-analyses (20 to 100 units) and psychometric studies (200 to 5,000 examinees). Dotted guides mark the three applications in Section~\ref{sec:applications}. The band gives the transported prior's 5 to 95 percent interval. Quantiles of $K_J$ are computed only up to $J = 500$; beyond this, only moments are computed. Recalibration starts at $J = 20$ because the stated judgment cannot be matched below about 20 units. The $\mathrm{Gamma}(2, 4)$ and $\mathrm{Gamma}(1, 3)$ priors are the simulation and real-data settings of \citet{paganin2023computational}.}
  \label{fig:defaults}
\end{figure}


\section{Design-Conditional Elicitation}
\label{sec:method}

We ask analysts to state how many clusters they expect among their sites, schools, studies or examinees, and how uncertain they are. We use this statement to calibrate $\alpha \sim \mathrm{Gamma}(a, b)$. Because the same hyperprior implies different counts at different $J$, we condition the calibration on the design size. We call this procedure design-conditional elicitation (DCE). It matches the stated mean and spread of $K_J$, then checks the implied cluster sizes (Section~\ref{subsec:unintended}).

\subsection{Overview}
\label{subsec:overview}

Figure~\ref{fig:2} follows one statement through four stages: state, calibrate, diagnose and decide. The analyst specifies $J$, an expected count $\mu_K = \E(K_J)$ and an uncertainty judgment, which we convert to $\sigma_K^2 = \Var(K_J)$ (Section~\ref{subsec:elicit}). Two-stage moment matching (TSMM, Section~\ref{subsec:tsmm}) obtains $(a, b)$ with these count moments at this $J$. We then calculate the distribution of $K_J$ and the weight summaries in Table~\ref{tab:functionals}. If the size-biased majority probability exceeds a threshold fixed in advance, we apply the Dual-Anchor refinement of Section~\ref{subsec:dual-anchor}. Algorithm~\ref{alg:tsmm} specifies the inputs and outputs.

We fix five policy choices before fitting: the size-biased threshold $t = 0.5$, a trigger of 0.40, a soft target of 0.25, a grid of Dual-Anchor tuning values $\lambda$, and a separate hard bound of 0.25 for sensitivity analysis. Section~\ref{subsec:dual-anchor} defines their roles. We report failed calibrations explicitly.

\begin{figure}[t]
  \centering
  \includegraphics{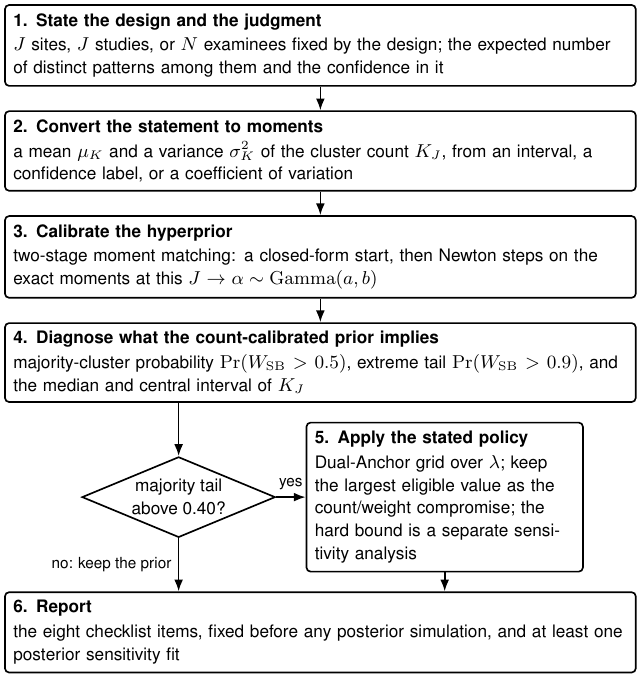}
  \caption{Design-conditional elicitation in the six steps of Box~\ref{box:recipe}. We convert the analyst's statement about $J$ units to target count moments and calibrate a Gamma hyperprior using two-stage moment matching. We then compare the size-biased diagnostic with the 0.40 trigger and either retain the prior or apply Dual-Anchor refinement. The 0.25 hard bound is evaluated separately as a sensitivity analysis.}
  \label{fig:2}
\end{figure}

\begin{algorithm}[t]
  \caption{Design-conditional elicitation}
  \label{alg:tsmm}
  \begin{algorithmic}[1]
    \Require the design size $J$; the target mean $\mu_K$ and variance $\sigma_K^2$ of $K_J$ from Section~\ref{subsec:elicit}; the policy constants $t = 0.5$, trigger $0.40$, soft target $0.25$, the fixed $\lambda$ grid, and hard bound $0.25$.
    \Ensure a Gamma hyperprior $(a^\star, b^\star)$ for $\alpha$ with its prior-predictive summaries, or a report that the calibration did not pass its checks.
    \State \textsc{State.} Record $J$, the statement about $K_J$, the route by which it was converted, and $(\mu_K, \sigma_K^2)$.
    \State \textsc{Calibrate.} Obtain $(a_K, b_K)$ by two-stage moment matching, a closed-form initializer from a large-$J$ approximation followed by Newton steps in $(\log a, \log b)$ until $\E_{a,b}(K_J) = \mu_K$ and $\Var_{a,b}(K_J) = \sigma_K^2$ (Section~\ref{subsec:tsmm}; the constants and the stopping rule are in OSM Appendices~B and~C).
    \State \textsc{Diagnose.} Recompute the two moments at a higher quadrature order, and report the median and central interval of $K_J$, $\Pr(W_{\mathrm{SB}} > 0.5)$, $\Pr(W_{\mathrm{SB}} > 0.9)$, the bounds on $\Pr(W_{\max} > t)$ from Proposition~1, and optionally $\E(\rho)$.
    \State \textsc{Decide.} If $\Pr(W_{\mathrm{SB}} > 0.5) \le 0.40$, keep $(a_K, b_K)$; otherwise apply Algorithm~\ref{alg:dual-anchor}. A calibration that does not pass its checks at any stage is reported as such and is never replaced by an approximate one.
  \end{algorithmic}
\end{algorithm}

\subsection{Eliciting the Inputs in Practice}
\label{subsec:elicit}

We ask analysts for the expected number of patterns, $\mu_K$, and their uncertainty about that count. Such judgments may be difficult and may differ across analysts. We offer three ways to state the uncertainty: a central interval, a qualitative confidence label, or a coefficient of variation. We convert an interval either with the normal proxy below or with a discrete maximum-entropy distribution; confidence labels and coefficients of variation map directly to count moments. \texttt{DPprior} implements the maximum-entropy and direct-moment routes, and the accompanying analysis code implements the normal proxy.

An interval statement specifies that $K_J$ lies in $[k_L, k_U]$ with probability $q$. We can convert it using the normal proxy $\sigma_K \approx (k_U - k_L)/(2 z_{(1+q)/2})$, taking $\mu_K$ as the midpoint unless stated otherwise. Alternatively, we use the discrete maximum-entropy distribution on $\{1, \ldots, J\}$ with the stated interval mass. For $J = 100$, ``between 3 and 10 patterns with 80 percent probability'' gives $\mu_K = 6.5$ and $\sigma_K^2 \approx 7.5$ under the normal proxy. TSMM returns $\alpha \sim \mathrm{Gamma}(6.82, 4.82)$, with $\Pr(W_{\mathrm{SB}} > 0.5) = 0.40$ and a probability of $0.07$ that a unit belongs to a cluster holding more than 90 percent of the mass.

A tighter judgment may be incompatible with the Gamma family. Interpreting the same interval as a 95 percent statement gives $\sigma_K^2 \approx 3.2$, for which the calibration finds no solution. Along the numerical profile with $\mu_K = 6.5$, the variance falls to $4.5224$ as the hyperprior degenerates to $\alpha^* = 1.38597$, where $\E(K_{100} \mid \alpha^*) = 6.5$. We found no candidate with a smaller variance. This numerical profile does not prove a lower bound, but it documents the conflict; the code refuses a fit that misses the requested moments. Such conflicts usually arise from tight intervals around means small relative to $J$. Widening the interval or lowering its coverage can resolve them.

For a qualitative confidence judgment, we use a variance inflation factor relative to the shifted Poisson proxy of Section~\ref{subsec:tsmm}: $\sigma_K^2 = \mathrm{VIF}\,(\mu_K - 1)$. A larger VIF represents greater uncertainty about the count, without changing its stated mean. The defaults are $\mathrm{VIF} = 1.5$, $2.5$ and $5$ for high, medium and low confidence. At $\mu_K = 5$, these imply $\sigma_K^2 = 6$, $10$ and $20$, or standard deviations of about 2.4, 3.2 and 4.5 clusters. Analysts who can specify relative uncertainty may instead supply $\mathrm{CV}_K = \sigma_K/\mu_K$, giving $\sigma_K^2 = (\mathrm{CV}_K\,\mu_K)^2$.

In the simulations, posterior counts are sensitive to the stated mean (Section~\ref{subsec:sim-misspec}). At a fixed mean, the prior calculations below show that the stated variance also changes the probability of extreme dominance. Table~\ref{tab:vif} holds $J = 100$ and $\mu_K = 5$ fixed while varying VIF from 1.25 to 5, with $\mathrm{Gamma}(1, 1)$ for comparison. The majority tail changes from 0.53 to 0.61, the extreme tail more than doubles, and the mass below $\alpha = 0.1$ grows from nil to 0.18. Both count inputs therefore require justification and sensitivity analysis (Section~\ref{subsec:sim-misspec}). We also fix the soft target $\delta$, threshold $t$, trigger and rule for selecting $\lambda$ before fitting, and report them with the full frontier described in Section~\ref{subsec:dual-anchor}.

\begin{table}[t]
\centering
\caption{Calibrated hyperpriors for $J = 100$ and an elicited mean of five under four confidence levels, with the $\mathrm{Gamma}(1, 1)$ default}
\label{tab:vif}
\begingroup
\normalsize
\setstretch{1.08}
\renewcommand{\arraystretch}{1.18}
\setlength{\tabcolsep}{9pt}
\noindent\textit{Panel A. Count calibration}\par\smallskip
\begin{tabularx}{\textwidth}{@{}>{\raggedright\arraybackslash}X c c c@{}}
\toprule
Setting & $\sigma_K^2$ & $(a, b)$ & $\E(\alpha)$ \\
\midrule
VIF $=1.25$ & 5  & (7.13, 7.41) & 0.96 \\ \addlinespace[3pt]
VIF $=1.5$  & 6  & (4.35, 4.47) & 0.97 \\ \addlinespace[3pt]
VIF $=2.5$  & 10 & (1.67, 1.65) & 1.01 \\ \addlinespace[3pt]
VIF $=5$    & 20 & (0.62, 0.56) & 1.11 \\ \addlinespace[3pt]
$\mathrm{Gamma}(1, 1)$ default & {--} & (1, 1) & 1.00 \\
\bottomrule
\end{tabularx}

\medskip
\noindent\textit{Panel B. Weight and concentration diagnostics}\par\smallskip
\begin{tabularx}{\textwidth}{@{}>{\raggedright\arraybackslash}X c c c@{}}
\toprule
Setting & $\Pr(W_{\mathrm{SB}} > 0.5)$ & $\Pr(W_{\mathrm{SB}} > 0.9)$ & $\Pr(\alpha < 0.1)$ \\
\midrule
VIF $=1.25$ & 0.53 & 0.15 & 0.000 \\ \addlinespace[3pt]
VIF $=1.5$  & 0.53 & 0.16 & 0.001 \\ \addlinespace[3pt]
VIF $=2.5$  & 0.56 & 0.23 & 0.030 \\ \addlinespace[3pt]
VIF $=5$    & 0.61 & 0.36 & 0.182 \\ \addlinespace[3pt]
$\mathrm{Gamma}(1, 1)$ default & 0.59 & 0.30 & 0.095 \\
\bottomrule
\end{tabularx}
\endgroup

\medskip
\noindent\parbox{\textwidth}{\small\textit{Note.} Each VIF row is the exact-moment TSMM solution for the stated variance. The last row in each panel is the common default, which at $J = 100$ implies $\E(K_J) = 4.84$ and is therefore close to the $\mu_K = 5$ calibrations in both lenses.}
\end{table}

\subsection{Two-Stage Moment Matching}
\label{subsec:tsmm}

The Antoniak moments have no closed-form inverse from $(\mu_K, \sigma_K^2)$ to $(a, b)$; existing calibrations use numerical search \citep{dorazio2009selecting}. Large-$J$ Poisson approximations for $K_J$ motivate the initializer \citep{arratia2000poisson, vicentini2025prior}. We use the shifted proxy $K_J - 1 \mid \alpha \approx \Poisson(\alpha \log J)$ so that it respects $K_J \ge 1$. Gamma mixing gives a negative binomial proxy. Matching its mean $\mu_0 = \mu_K - 1$ and variance $\sigma_K^2$ yields $a_0 = \mu_0^2/(\sigma_K^2 - \mu_0)$ and $b_0 = \mu_0 \log J/(\sigma_K^2 - \mu_0)$, provided $\mu_0 > 0$ and $\sigma_K^2 > \mu_0$. OSM Appendix~B gives the derivation and the projection used near this boundary.

At moderate $J$, this initializer is biased. The conditional count is a sum of independent Bernoulli indicators and is underdispersed relative to the Poisson proxy. Moreover, $\E(K_J \mid \alpha) \approx 1 + \alpha \log J$ omits a term of order $\alpha$. In Stage~2, we solve $\E_{a,b}(K_J) = \mu_K$ and $\Var_{a,b}(K_J) = \sigma_K^2$ using Newton iteration in $(\log a, \log b)$, starting at $(a_0, b_0)$. We evaluate the Gamma expectations by Gauss--Laguerre quadrature and obtain the Jacobian from score-function identities (OSM Appendix~C).

The numerical refinement is needed throughout the regimes studied here; OSM Appendix~D examines when it can be omitted. We check the resulting moments at a higher quadrature order. If the requested moments are not attained or the recomputation disagrees, we report calibration failure.

\subsection{Dual-Anchor Protocol}
\label{subsec:dual-anchor}

The TSMM solution $(a_K, b_K)$ is the \emph{count-calibrated prior} for the \emph{stated count judgment}. We check its weight implications using $\Pr(W_{\mathrm{SB}} > 0.5)$ and $\Pr(W_{\mathrm{SB}} > 0.9)$ from Equation~\eqref{eq:w1-tail}, the bounds on $\Pr(W_{\max} > t)$ in Proposition~1, and optionally $\E(\rho)$, a one-dimensional Gamma integral (OSM Appendix~E). Our rule, fixed in advance, applies soft Dual-Anchor refinement when the majority probability exceeds 0.40. We use 0.25 as the soft target and as the bound in a separate hard analysis.

Matching the two count moments uses both degrees of freedom in $(a, b)$. Reducing an excessive majority probability therefore requires a count/weight compromise. We measure departures from the count and weight targets on fixed scales. With $s_\mu = \max(|\mu_K|, 1)$ and $s_v = \max(\sigma_K^2, 1)$, let
\begin{equation}\label{eq:dual-losses}
\begin{aligned}
  D_K(a,b) &=
  \left\{\frac{\E_{a,b}(K_J)-\mu_K}{s_\mu}\right\}^{\!2}
  {}+\left\{\frac{\Var_{a,b}(K_J)-\sigma_K^2}{s_v}\right\}^{\!2},\\
  D_w(a,b) &= \{\Pr(W_{\mathrm{SB}}>t\mid a,b)-\delta\}^2 .
\end{aligned}
\end{equation}
The refined hyperprior minimizes
\begin{equation}\label{eq:dual-objective}
  \mathcal{L}_\lambda(a,b)=\lambda D_K(a,b)+(1-\lambda)D_w(a,b),
  \qquad \lambda \in (0, 1],
\end{equation}
where $\lambda$ controls the relative weight given to the count discrepancy. It neither imposes a constraint nor gives a probability or guarantee that the target is attained.

Before posterior simulation, we fix the grid
\[
  \Lambda = \{0.01, \ldots, 0.20, 0.25, 0.30, 0.40, \ldots, 0.90, 1.00\}.
\] If the count-calibrated tail exceeds 0.40, we minimize Equation~\eqref{eq:dual-objective} at each grid value and verify each solution at a higher quadrature order. Among finite, interior solutions with $\Pr(W_{\mathrm{SB}} > 0.5) \le 0.40$, we select the largest $\lambda$. This gives the greatest weight to the count moments among eligible grid points. We call the resulting Dual-Anchor prior the \emph{count/weight compromise}.

If no grid point is eligible, we report the unresolved conflict and fit no model. The soft target $\delta = 0.25$ guides the optimization, whereas 0.40 determines eligibility. The selected Dual-Anchor compromise can therefore satisfy the trigger without reaching the soft target.

We evaluate a strict bound separately by minimizing $D_K$ subject to $\Pr(W_{\mathrm{SB}} > t) \le \delta$ (\texttt{DPprior\_dual\_hard()}). This is the \emph{weight-bound sensitivity}. We record whether the constraint is active, the solution lies on the solver boundary, or the hyperprior approaches a point mass. These results supplement the Dual-Anchor compromise. Algorithm~\ref{alg:dual-anchor} gives the full protocol, and OSM Appendix~G plots the frontier over $\Lambda$ for the worked example. We recommend reporting this frontier.

\begin{algorithm}[t]
  \caption{Dual-Anchor refinement}
  \label{alg:dual-anchor}
  \begin{algorithmic}[1]
    \Require the count-calibrated prior $(a_K, b_K)$ and its targets $(\mu_K, \sigma_K^2)$; $t = 0.5$; soft target $\delta = 0.25$; trigger $\eta = 0.40$; the fixed grid $\Lambda$.
    \Ensure the Dual-Anchor count/weight compromise, or a report that no grid value passed the checks; the weight-bound sensitivity.
    \State If $\Pr(W_{\mathrm{SB}} > t \mid a_K, b_K) \le \eta$, keep the count-calibrated prior.
    \State Otherwise, for every $\lambda \in \Lambda$, minimize Equation~\eqref{eq:dual-objective} in $(\log a, \log b)$ and recompute the moments and the tail probability of the solution at a higher quadrature order.
    \State Discard solutions that are not finite, whose moments at the selection and verification quadrature orders disagree beyond tolerance, or that lie on the solver boundary or near a point mass. Among the remaining solutions with $\Pr(W_{\mathrm{SB}} > t) \le \eta$, select the largest $\lambda$; if there is none, report that the calibration did not pass the checks and fit no model.
    \State Separately solve $\min D_K(a, b)$ subject to $\Pr(W_{\mathrm{SB}} > t) \le \delta$, and report whether the constraint is active and whether the solution lies on a boundary.
    \State Report $(a^\star, b^\star)$, $\E(K_J)$, $\Var(K_J)$, both size-biased tail probabilities, the ranked-weight bounds, $\lambda$, and the four status fields recorded by the software (convergence, whether the solution is interior, agreement of the independent recomputation, and the boundary classification).
  \end{algorithmic}
\end{algorithm}

\subsection{Worked Example, Software, and Reporting}
\label{subsec:worked-example}

Suppose that an analyst expects ``about five patterns, with moderate uncertainty'' among $J = 50$ units. We set $\mu_K = 5$ and use the medium confidence label to obtain $\sigma_K^2 = 10$. Stage~1 gives $(a_0, b_0) \approx (2.67, 2.61)$. Stage~2 gives $\mathrm{Gamma}(1.4082, 1.0770)$, which reproduces the count moments (5.000 and 10.000), with median 4 and a 90 percent interval of $[1, 11]$ for $K_{50}$. The probabilities that a unit belongs to a cluster holding more than half or more than 90 percent of the mass are 0.497 and 0.200. The majority probability exceeds the 0.40 trigger.

The largest eligible grid value is $\lambda = 0.30$. The resulting Dual-Anchor solution is the count/weight compromise $\mathrm{Gamma}(2.3158, 1.4204)$. It gives $\E(K_{50}) = 5.884$, $\Var(K_{50}) = 9.738$ and tails 0.398 and 0.107. Reducing the majority probability below the trigger thus adds about one expected cluster, while leaving the 0.25 soft target unmet. The weight-bound sensitivity, $\mathrm{Gamma}(5.2745, 2.3059)$, gives $\E(K_{50}) = 7.488$, $\Var(K_{50}) = 9.051$ and majority probability 0.250, with the constraint active. It shows the further count discrepancy required by the strict bound. OSM Appendix~G reports all 29 grid values.

The following calls in \texttt{DPprior} (version 2.0.0) return the count-calibrated prior and the Dual-Anchor prior at the selected $\lambda$. The accompanying analysis code evaluates the grid, applies the selection rule, fits the hard variant and verifies the quadrature (OSM Appendix~G). These operations, including selection of $\lambda$, are separate from the two calls. Box~\ref{box:recipe} summarizes the workflow.
{\small
\begin{verbatim}
fit  <- DPprior_fit(J = 50, mu_K = 5, var_K = 10, method = "A2-MN")
soft <- DPprior_dual_soft(fit, lambda = 0.30, target = list(
          metric = "wsb_tail", threshold = 0.5,
          relation = "target", value = 0.25))
\end{verbatim}}

\begin{recipebox}
\begin{minipage}{\textwidth}
\caption{Specifying and reporting a prior in six steps}
\label{box:recipe}
\end{minipage}
\begin{mdframed}
\begin{minipage}{\linewidth}
\normalsize\setstretch{1.10}
\begin{enumerate}[label=(\arabic*),leftmargin=2em,itemsep=5pt,topsep=4pt]
\item State the design size $J$, the number of sites, studies or examinees fixed by the design (in an item response model $J$ is the number of examinees, $N$), the number of distinct patterns expected among them, and the confidence in that expectation.
\item Convert the statement to a mean $\mu_K$ and a variance $\sigma_K^2$ of $K_J$ by one of the three routes of Section~\ref{subsec:elicit}, and record the route.
\item Calibrate the Gamma hyperprior with \texttt{DPprior\_fit} (\texttt{DPprior} 2.0.0), which returns $(a, b)$ matching the two moments at this $J$ or reports that the calibration did not pass its checks.
\item Diagnose the calibrated prior by its majority-cluster probability $\Pr(W_{\mathrm{SB}} > 0.5)$, its extreme tail $\Pr(W_{\mathrm{SB}} > 0.9)$, and the median and central interval of $K_J$.
\item If the majority probability exceeds 0.40, apply the fixed Dual-Anchor policy with \texttt{DPprior\_dual\_soft} over the grid and keep the largest eligible $\lambda$ as the count/weight compromise; otherwise keep the calibrated prior.
\item Report the eight items of the checklist in Section~\ref{subsec:worked-example}, fixed before any posterior simulation, and fit at least one posterior sensitivity variant.
\end{enumerate}
\end{minipage}
\end{mdframed}
\end{recipebox}

Reporting both size-biased tails makes the prior's pooling implications explicit, even when Dual-Anchor refinement is unnecessary. This follows the recommendation to examine both count and weight implications \citep{vicentini2025prior}. A pre-analysis plan should specify:
\begin{enumerate}[label=(\arabic*),leftmargin=2em,itemsep=2pt,topsep=3pt]
\item the design size $J$ and why it is treated as fixed;
\item the stated count judgment, conversion route, resulting $(\mu_K, \sigma_K^2)$ and justification;
\item the calibrated shape--rate parameters $(a, b)$ and software version;
\item the prior-predictive median and central interval of $K_J$;
\item both size-biased tails, the ranked-weight bounds and, optionally, $\E(\rho)$;
\item if refinement is triggered, the trigger, soft target, grid, selection rule, selected $\lambda$, changes in the moments and four software status fields;
\item the separate hard-bound result, including any active constraint or near-point-mass solution;
\item at least one posterior sensitivity variant.
\end{enumerate}
OSM Appendix~E provides an example statement and extended checklist. In Section~\ref{sec:simulation}, we examine the consequences of these choices when the data provide little information about the partition.


\section{Simulation Study}
\label{sec:simulation}

\subsection{Design}
\label{subsec:sim-design}

We simulated fixed designs with known numbers of effect patterns to examine how the influence of the concentration hyperprior depends on the precision of unit estimates. We fitted the same model under five hyperpriors and compared posterior partitions and effect estimates. Because components overlap, the generating count $K^*$ serves as a reference; exact recovery is not expected.

We drew true effects $\tau_j$ for $J$ units from a $K^*$-component mixture with evenly spaced means on $[-2.5, 2.5]$, within-component standard deviation 0.15 and every component occupied. We rescaled the effects to between-unit standard deviation $\sigma_\tau = 0.25$, then generated $\hat{\tau}_j \mid \tau_j \sim N(\tau_j, \widehat{se}_j^2)$. Here $\widehat{se}_j^2 = 4/n_j$, with unit sizes $n_j$ drawn from a truncated Gamma distribution with coefficient of variation 0.5 and minimum 5 \citep{lee2025improving}. We measure informativeness by the index $I$ of \citet{lee2025improving}: between-unit variance divided by that variance plus the geometric mean of the squared standard errors. Values near zero represent imprecise unit estimates relative to between-unit variation; values near one represent more precise estimates. We chose mean unit sizes so that the Monte Carlo mean of $I$ was within 0.01 of its target.

The main comparison uses $J = 100$ units, equal component weights and 200 replications for each combination of $K^* \in \{5, 10, 30\}$ and $I \in \{0.2, 0.5, 0.8\}$. We compare five hyperpriors:
\begin{itemize}[leftmargin=2em,itemsep=2pt,topsep=3pt]
\item Vague, the common $\mathrm{Gamma}(1, 1)$ default;
\item DORO-Unif, which applies the Kullback--Leibler calibration of \citet{dorazio2009selecting} to a uniform target on $\{1, \ldots, \min(2\mu_K - 1, J)\}$. Its discrepancy exceeds our calibration tolerance, so we include it only as an external reference;
\item TSMM, calibrated to $\mu_K = K^*$ at medium confidence ($\mathrm{VIF} = 2.5$);
\item Dual-Anchor, which applies Algorithm~\ref{alg:dual-anchor} when the TSMM majority tail exceeds 0.40 and otherwise retains TSMM. Selected $\lambda$ values are in OSM Appendix~F; and
\item SSI-Weight, our sample-size-independent comparator in the spirit of \citet{vicentini2025prior}, calibrated to $\Pr(W_{\mathrm{SB}} > 0.5) = 0.25$ and $\Pr(W_{\mathrm{SB}} > 0.9) = 0.05$ without a count target.
\end{itemize}
The grand-mean prior on component locations is $N(0, 1)$ in these 45 cells. Two further branches refit the same replications under $N(0, 0.01)$ and $N(0, 100)$. The sensitivities in Sections~\ref{subsec:sim-misspec} and~\ref{subsec:sim-ext} use $N(0, 100)$, the grand-mean prior stated in the original manuscript; OSM Appendix~F.4 gives the variance parameterization.

We planned 64,400 fits in two batches of 50,200 and 14,200, using the same prespecified code, convergence criteria and random seeds. Calibration failed for 2,800 planned fits, which we did not attempt (reasons are in OSM Appendix~F.2). We started each of the remaining 61,600 fits with four chains of 12,000 iterations, including 4,000 warmup iterations. We checked rank-normalized and folded $\widehat{R}$, bulk and tail effective sample sizes, and Monte Carlo standard errors \citep{vehtari2021rank}. We also checked within-chain stability of the cluster count. Failed fits were rerun once with four chains of 120,000 iterations and 60,000 warmup.

The summaries include the 57,151 fits that converged; OSM Appendix~F lists the 4,449 that did not. Tables and figures report converged fits out of attempts. We record the posterior mean of $K_J$, the probability that $K_J = 1$, the 95 percent interval and the RMSE of posterior mean effects averaged over units. Further analyses vary the stated count (Section~\ref{subsec:sim-misspec}), design size and cluster weights (Section~\ref{subsec:sim-ext}).

\subsection{Main Results}
\label{subsec:sim-main}

At low informativeness, the default concentrates the posterior on a single cluster regardless of the generating count. Under $\mathrm{Gamma}(1, 1)$ at $I = 0.2$, the posterior mean of $K_J$ is about 1.5 and the probability of a single cluster, which we call collapse, is about 0.72 for all three generating counts (Figure~\ref{fig:3}; Table~\ref{tab:sim-results}). When the partition likelihood is weak, the default's mass near small $\alpha$ has greater influence (Section~\ref{subsec:unintended}).

Calibrating to the correct count reduces collapse much more for large counts than for small counts. At $K^* = 30$, TSMM gives a posterior mean near 25 and collapse probability near zero. At $K^* = 5$, its collapse probability remains 0.616, compared with 0.725 under the default. Here the size-biased tail exceeds the 0.40 trigger. The resulting Dual-Anchor prior lowers collapse probability to 0.315 and raises the posterior mean from 1.75 to 3.00 (Table~\ref{tab:sim-results}). At targets of ten and thirty, the trigger is not exceeded and Dual-Anchor coincides with TSMM. Thus, for a small stated count, the second anchor reduces collapse at the cost of increasing the prior expected count.

SSI-Weight gives posterior means between 3.39 and 3.42 across generating counts of 5, 10 and 30. Its intervals cover $K^* = 5$ and 10 but never $K^* = 30$ (Table~\ref{tab:sim-results}). The fixed weight targets therefore give little protection against count misspecification in these conditions (Section~\ref{subsec:ssd-ssi}).

Effect-estimation accuracy varies little across priors. The mean RMSE over units is 0.234--0.238 at $I = 0.2$, 0.178--0.180 at $I = 0.5$ and 0.117--0.119 at $I = 0.8$. Similar errors relative to the truth do not imply similar estimates for each unit. We therefore examine paired differences in the applications. For the normal model and designs studied here, the main sensitivity concerns the reported partition.

\begin{figure}[!tp]
  \centering
  \includegraphics[width=\textwidth]{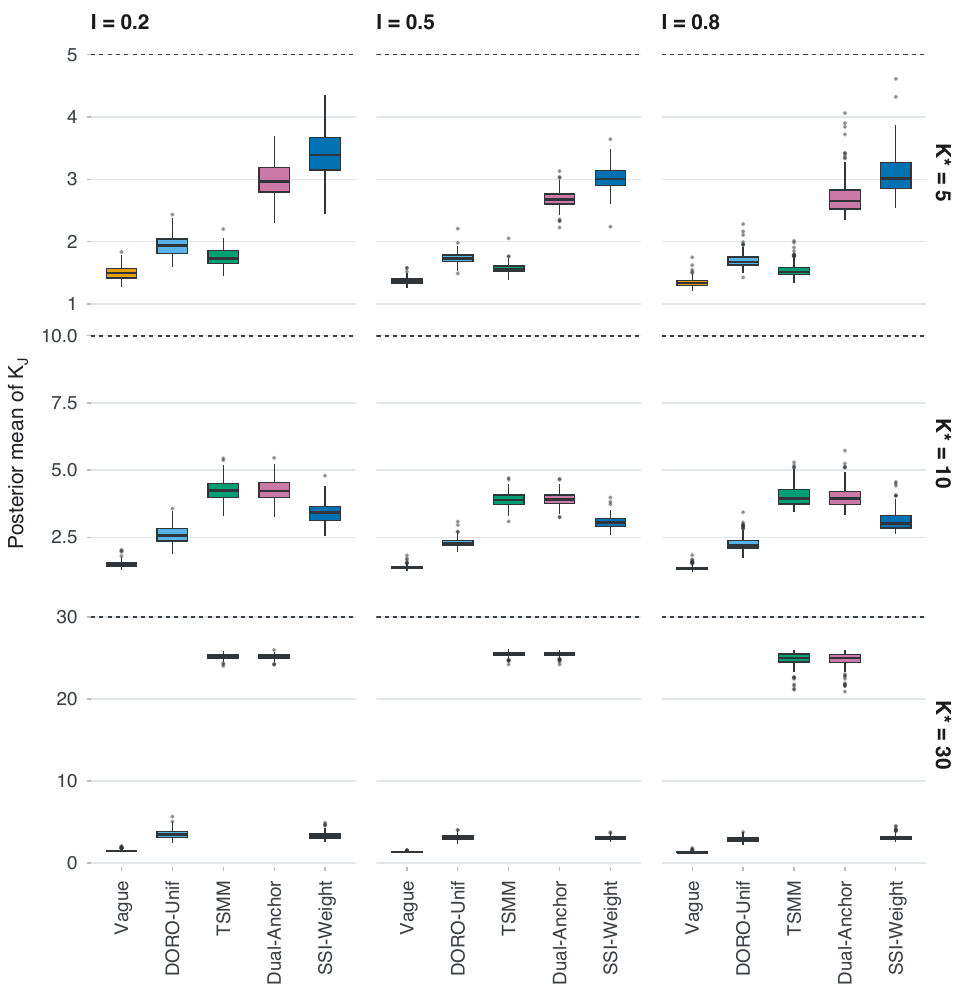}
  \caption{Posterior mean cluster count $K_J$ under five hyperpriors in the main simulation ($J = 100$, $N(0, 1)$ grand-mean prior). Rows give generating counts $K^*$ and columns give informativeness $I$; dashed lines mark $K^*$. Each horizontal-axis tick gives the full hyperprior name. \textit{Note.} Boxes summarize converged fits from 200 replications per cell, ranging from 155 to 200 (exact counts in OSM Appendix~F). DORO-Unif is an external reference only.}
  \label{fig:3}
\end{figure}

\begin{table}[!tp]
\centering
\caption{Posterior cluster-count summaries at low informativeness ($I = 0.2$) under five hyperpriors}
\label{tab:sim-results}
\begingroup
\normalsize
\setstretch{1.05}
\renewcommand{\arraystretch}{1.22}
\setlength{\tabcolsep}{3pt}
\ifdefined\maintabbox\else\newsavebox{\maintabbox}\fi
\begin{lrbox}{\maintabbox}%
\begin{tabularx}{\textwidth}{@{}c >{\raggedright\arraybackslash}X ccccc@{}}
\toprule
$K^*$ & Hyperprior & \makecell{Posterior mean\\of $K_J$} & Bias & RMSE & \makecell{$\Pr(K_J{=}1$\\$\mid\,\text{data})$} & Coverage \\
\midrule
5 & Vague & 1.50 (0.11) & $-3.50$ & 3.50 & 0.725 & 0.535 \\
 & DORO-Unif\textsuperscript{$\dagger$} & 1.94 (0.17) & $-3.06$ & 3.07 & 0.536 & 1.000 \\
 & TSMM & 1.75 (0.14) & $-3.25$ & 3.25 & 0.616 & 0.965 \\
 & Dual-Anchor & 3.00 (0.27) & $-2.00$ & 2.02 & 0.315 & 1.000 \\
 & SSI-Weight & 3.41 (0.37) & $-1.59$ & 1.63 & 0.383 & 1.000 \\
\midrule
10 & Vague & 1.51 (0.12) & $-8.49$ & 8.49 & 0.723 & 0.000 \\
 & DORO-Unif\textsuperscript{$\dagger$} & 2.60 (0.31) & $-7.40$ & 7.40 & 0.516 & 0.980 \\
 & TSMM & 4.26 (0.40) & $-5.74$ & 5.76 & 0.269 & 1.000 \\
 & Dual-Anchor & 4.26 (0.39) & $-5.74$ & 5.75 & 0.269 & 1.000 \\
 & SSI-Weight & 3.42 (0.38) & $-6.58$ & 6.59 & 0.382 & 1.000 \\
\midrule
30 & Vague & 1.51 (0.13) & $-28.49$ & 28.49 & 0.723 & 0.000 \\
 & DORO-Unif\textsuperscript{$\dagger$} & 3.57 (0.57) & $-26.43$ & 26.44 & 0.600 & 0.097 \\
 & TSMM & 25.17 (0.32) & $-4.83$ & 4.84 & 0.001 & 1.000 \\
 & Dual-Anchor & 25.18 (0.31) & $-4.82$ & 4.83 & 0.001 & 1.000 \\
 & SSI-Weight & 3.39 (0.41) & $-26.61$ & 26.61 & 0.386 & 0.000 \\
\bottomrule
\end{tabularx}%
\end{lrbox}%

\usebox{\maintabbox}
\endgroup

\medskip
\noindent\parbox{\textwidth}{\small\textit{Note.} Entries are means (between-replication SDs) over converged fits from 200 replications per cell; the number of converged fits ranges from 167 to 200 and is listed for each row, with the mean width of the 95 percent interval, in OSM Appendix~F. Coverage is the proportion of converged fits whose 95 percent posterior interval for $K_J$ contains $K^*$. \textsuperscript{$\dagger$}External reference only (Section~\ref{subsec:sim-design}).}
\end{table}

\subsection{Grand-Mean and Elicitation Sensitivities}
\label{subsec:sim-misspec}

We examined sensitivity to both the grand-mean prior and error in the stated count. Widening the grand-mean prior shifts partition summaries in the same direction across all 45 main-comparison cells. Relative to $N(0, 0.01)$, $N(0, 1)$ lowers the posterior mean of $K_J$ by 1.61 and raises collapse probability by 0.157 on average; for $N(0, 100)$, the changes are 2.08 and 0.249. Paired changes in effect-estimation RMSE are about $2.0 \times 10^{-4}$ and $1.2 \times 10^{-4}$, based on at least 111 converged pairs per cell (OSM Appendix~F.7).

Figure~\ref{fig:5} relates the posterior mean of $K_J$ to the ratio $r = \mu_K / K^*$ of the stated to the generating count. It includes 30 medium-confidence ($\mathrm{VIF} = 2.5$) cells with complete three-point trajectories: all 18 TSMM cells and 12 Dual-Anchor cells at $K^* \in \{10, 30\}$. The full grid contains 60 cells for Vague, DORO-Unif, TSMM and Dual-Anchor. Results are available for 56; calibration failed for four Dual-Anchor and DORO-Unif cells at $\mu_K = 2.5$, so no model was fitted. OSM Table~F.7 also includes six low-confidence ($\mathrm{VIF} = 5$) Dual-Anchor cells, giving 36 rows.

Every TSMM trajectory increases with the stated mean. At $K^* = 10$ and $I = 0.2$, the posterior mean rises from 1.51 to 3.41 to 15.03 as $\mu_K$ increases from 5 to 10 to 20; collapse probability falls from 0.709 to 0.391 to 0.029. At $K^* = 30$, the posterior means are 8.53, 25.12 and 47.43 for $\mu_K$ of 15, 30 and 60. Understating the count increases collapse; overstating it increases the posterior count.

The stated mean has more influence than the generating count. The cells $(K^* = 5, r = 2)$ and $(K^* = 10, r = 1)$ both use $\mu_K = 10$ and give posterior mean 3.41 and collapse probability 0.391 at $I = 0.2$, despite a twofold difference in the generating count. This is a posterior consequence of the unintended prior described in Section~\ref{sec:theory}.

The weight anchor affects only small stated counts. At $\mu_K = 5$ ($K^* = 10$, $r = 0.5$, $I = 0.2$), Dual-Anchor raises the posterior mean from 1.51 to 2.48 and lowers collapse probability from 0.709 to 0.432. At $\mu_K \geq 10$, it coincides with TSMM because the trigger is not exceeded. Two low-confidence Dual-Anchor trajectories in OSM Table~F.7 are nonmonotone: the fixed grid selects different $\lambda$ values as the target changes. These results support varying both the stated count and its uncertainty.

\begin{figure}[!tp]
  \centering
  \includegraphics[width=\textwidth]{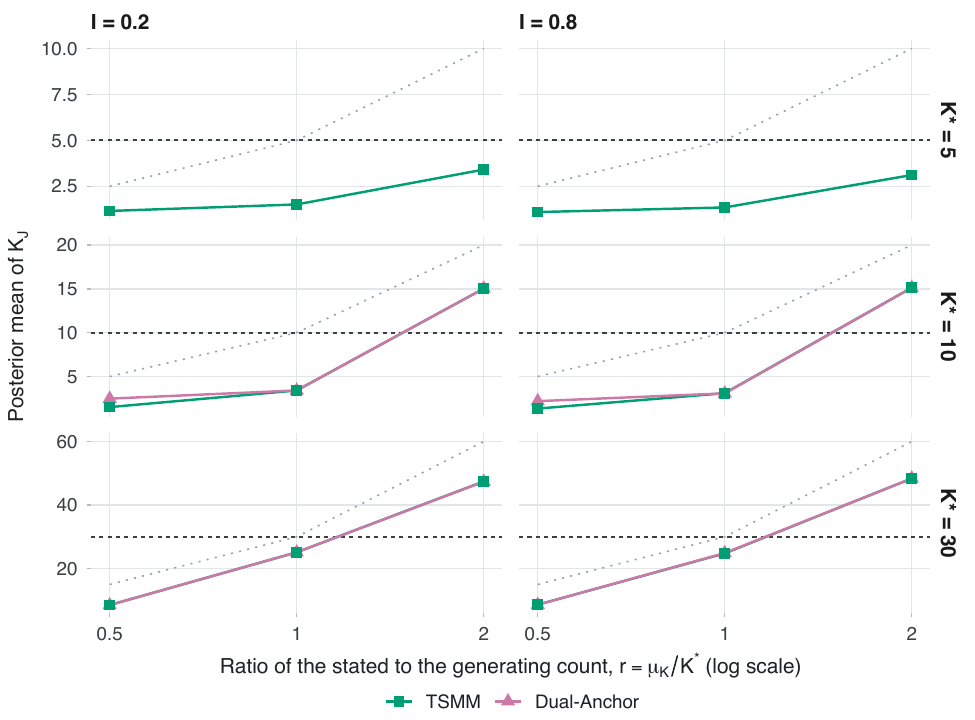}
  \caption{Sensitivity of posterior mean cluster count $K_J$ to the stated mean $\mu_K$ under TSMM and Dual-Anchor ($J = 100$, $N(0, 100)$ grand-mean prior). Rows give generating counts $K^*$ and columns give informativeness $I$. The horizontal axis shows $r = \mu_K / K^*$ on a log scale; dashed lines mark $K^*$ and dotted lines the stated mean $r K^*$. \textit{Note.} Converged fits range from 152 to 200 of 200 per cell (OSM Table~F.7). The figure includes only complete three-point trajectories. Dual-Anchor is omitted at $K^* = 5$ because calibration failed at $r = 0.5$; the $r = 1$ and 2 cells are in OSM Table~F.7.}
  \label{fig:5}
\end{figure}

\subsection{Design Size and Unequal Cluster Weights}
\label{subsec:sim-ext}

We next varied $J \in \{25, 50, 100, 200\}$ under the $N(0, 100)$ grand-mean prior, crossing five hyperpriors with $K^* \in \{5, 10\}$ and $I \in \{0.2, 0.8\}$ (20 trajectories, 80 cells; Figure~\ref{fig:6}). Among converged fits, the default remains concentrated near one cluster: posterior means range from 1.18 to 1.46 and collapse probability is at least 0.765 (about 0.77), including at $J = 200$. Under $\mathrm{Gamma}(1, 1)$, the prior expected count grows only logarithmically with $J$ (Section~\ref{subsec:unintended}).

At $K^* = 5$, only Dual-Anchor approaches the reference count as $J$ increases. For $I = 0.2$, its posterior mean rises from 1.53 at $J = 25$ to 3.50 at $J = 200$, while collapse probability falls from 0.716 to 0.271. For $I = 0.8$, the mean rises from 1.31 to 4.15. TSMM remains near 1.5, so matching $\mu_K = 5$ alone does not avoid collapse at any tested $J$. Dual-Anchor's movement reflects the policy: the selected $\lambda$ decreases from 0.70 to 0.30, 0.16 and 0.09 across design sizes.

At $K^* = 10$, the trigger is not exceeded. Dual-Anchor and TSMM coincide, with posterior means increasing from 2.14--2.15 at $J = 25$ to 3.57--3.58 at $J = 200$ for $I = 0.2$. SSI-Weight gives nearly identical trajectories for the two generating counts: 2.17, 2.39, 2.67 and 2.99 versus 2.16, 2.38, 2.67 and 2.99 at $I = 0.2$, with collapse probability near 0.5. Its weight calibration contains no information about the intended count (Section~\ref{subsec:ssd-ssi}).

We examined unequal cluster sizes in 32 further cells. One component contains exactly 60 percent of the $J = 100$ units and is placed either at the central mean or at a rotating position among the ordered means. We use four priors, $K^* \in \{5, 10\}$ and $I \in \{0.2, 0.8\}$. At $I = 0.2$, the placements differ in posterior mean $K_J$ by at most 0.043 (about 0.04) across the eight prior-by-$K^*$ pairs. At $I = 0.8$, central placement lowers the posterior mean by 0.33 to 1.06 and increases collapse. For TSMM at $K^* = 5$, the posterior mean and collapse probability change from 2.47 and 0.247 to 1.41 and 0.763. With informative data, the fitted mixture thus separates an off-center dominant component more readily than a central one.

In a separate equal-weight design with $J = 100$ and $K^* = 5$, we compared the selected Dual-Anchor weight with two fixed values of $\lambda$. At $I = 0.2$, the selected $\lambda = 0.16$ gives posterior mean 2.47 and collapse probability 0.434, compared with 1.76 and 0.617 at $\lambda = 0.5$, and 1.54 and 0.698 at $\lambda = 0.9$. The corresponding pairs at $I = 0.8$ are (2.18, 0.546), (1.53, 0.722) and (1.36, 0.788). Within each informativeness level, aggregate RMSE differs by less than 0.001 (OSM Appendix~F.8). This comparison covers one equal-weight configuration at two informativeness levels.

\begin{figure}[!tp]
  \centering
  \includegraphics[width=\textwidth]{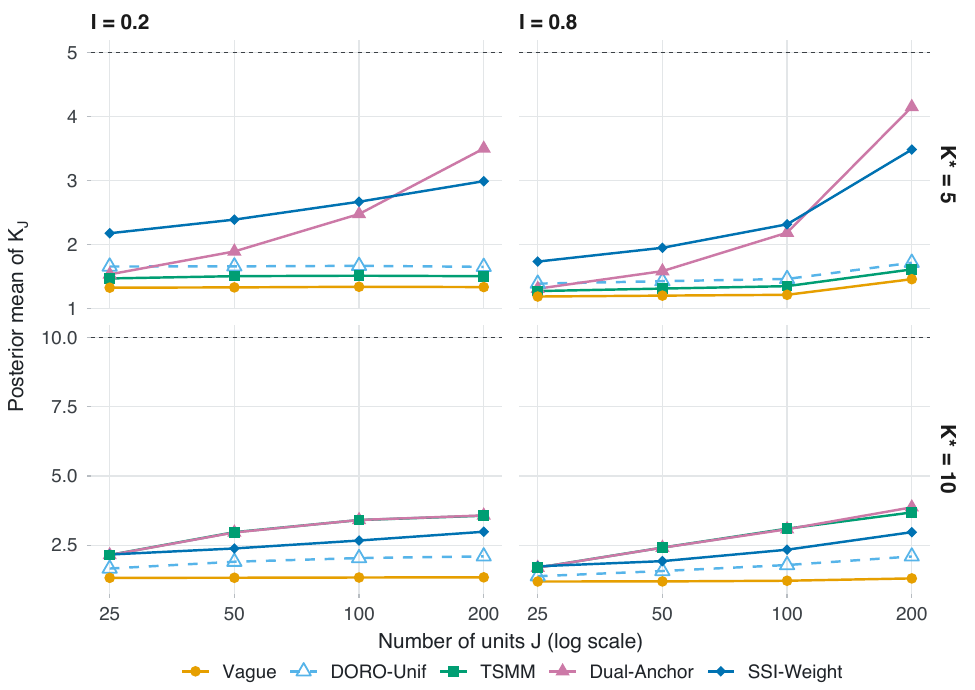}
  \caption{Posterior mean cluster count $K_J$ across design sizes $J = 25$ to 200 under five hyperpriors and the $N(0, 100)$ grand-mean prior. Rows give generating counts $K^*$ and columns give informativeness $I$; dashed lines mark $K^*$. \textit{Note.} At $I = 0.2$, all 200 fits converge through $J = 100$, and 191--200 converge at $J = 200$. At $I = 0.8$, convergence counts are 200, 198--200, 173--186 and 30--93 at $J = 25$, 50, 100 and 200, respectively (exact counts in OSM Appendix~F). DORO-Unif is an external reference only.}
  \label{fig:6}
\end{figure}

\subsection{Practical Implications}
\label{subsec:sim-summary}

Partition summaries are sensitive to the concentration hyperprior, grand-mean prior and stated count. Among converged fits, the default remains concentrated near one cluster even at $J = 200$, whereas Dual-Anchor approaches the reference count as $J$ increases. At $I = 0.8$ and $J = 200$, however, only 30 to 93 of 200 fits converge, so this comparison is conditional on convergence. Aggregate effect-estimation accuracy is much more stable; individual estimates still require separate examination.

At moderate $J$, a vague default can impose strong pooling. Increasing the number of units or their precision did not by itself remove this effect in the conditions we could assess. We recommend reporting the count and weight anchors, grand-mean and scale choices, convergence rates and prespecified sensitivities.


\section{Applications}
\label{sec:applications}

We illustrate the workflow using 79 schools in the Tennessee STAR class-size experiment \citep{krueger1999experimental,nye2000effects}, 48 studies of writing-to-learn interventions \citep{bangertdrowns2004effects}, and 500 examinees assessed by a vocabulary checklist \citep{openpsychometrics2016gcbs,domingue2025irw}. We place a Dirichlet process mixture over the effects or latent traits for each fixed set of units. These case studies show how the workflow operates outside the simulation family in Section~\ref{sec:simulation}.

For schools and studies, we use Equation~\eqref{eq:model} with a $N(0,1)$ prior on the grand mean. The base-measure variance of the component means has a uniform prior on $(0,1)$. Within-component variances have separate inverse-gamma priors (OSM Appendix~H). The Rasch analysis uses an item response likelihood (Section~\ref{subsec:app-irt}). We compare Vague, $\mathrm{Gamma}(1,1)$, with the count-calibrated prior (TSMM) and count/weight compromise (Dual-Anchor). DORO-Unif serves as an external reference and Dual-Anchor (hard) as the weight-bound sensitivity. For the Rasch analysis, $\mathrm{Gamma}(1,3)$ replaces DORO-Unif as the external reference. We report posterior cluster counts, single-cluster probabilities, the largest observed allocation share, the largest population atom weight and root mean squared differences in unit estimates relative to Vague.

The wider-scale sensitivity uses a $N(0,100)$ grand-mean prior and a uniform prior on $(0,10)$ for the base-measure variance of component means. We also examine an alternative STAR first stage using schoolwise ordinary least squares with HC3 standard errors. The school and study analyses comprise 35 fits (15 primary, 15 wide-scale and five HC3), including the nearly homogeneous Online Credit Recovery Study reported descriptively in OSM Appendix~H. The Rasch analysis adds 10 fits.

We start each fit with four chains of 12,000 iterations, including 4,000 warmup, and retain 4,000 draws per chain. For the school and study analyses, failed fits receive one rerun of 120,000 iterations with 60,000 warmup. Twenty-five of the 35 fits converged initially and ten after rerunning. Section~\ref{subsec:app-irt} describes the Rasch schedule.

\subsection{Small-Class Effects Across Schools in Project STAR}
\label{subsec:app-star}

Project STAR randomly assigned kindergarten students within schools to small or regular-size classes \citep{krueger1999experimental, nye2000effects}. We ask how many effect patterns the fitted mixture assigns to the 79 participating schools. The Vague prior favors one shared pattern; the count-calibrated and Dual-Anchor priors favor several. School estimates change little, although they are more sensitive to the first-stage uncertainty estimates and model scale.

The public data, \texttt{mlmRev::star} version 1.0-9 \citep{bates2026mlmrev}, contain 6,325 kindergarten records. Prespecified complete-case restrictions exclude 539, leaving 5,786 students in 337 classrooms and $J = 79$ school contrasts. To account for classroom dependence, we fit the first-stage random-intercept model \texttt{y $\sim$ 0 + school + school:small + (1 $\mid$ classroom)}, pooling the two regular-class arms.

The mean contrast is 0.191 standard deviations and the mean standard error is 0.428. The informativeness index $I$ is 0.107, compared with 0.721 for schoolwise ordinary least squares and 0.675 for its HC3 variant (Section~\ref{subsec:sim-design}). These are plug-in summaries from the standard errors and REML $\widehat{\tau}^2$, not bounds on information under other models. A normal random-effects benchmark gives pooled effect 0.194 (95 percent interval 0.097 to 0.291), $\widehat{\tau}^2 = 0.0215$ and $I^2 = 11.2$ percent.

Suppose an analyst expects ``about five patterns, between 2 and 8 with 80 percent probability.'' At 79 schools, the count-calibrated prior is $\mathrm{Gamma}(5.134, 4.939)$, with $\E(K_{79}) = 5.00$ and a 90 percent interval of $[2, 9]$. Its size-biased majority probability, 0.510, exceeds the 0.40 trigger. The fixed grid selects $\lambda = 0.20$, giving the Dual-Anchor prior $\mathrm{Gamma}(15.770, 11.112)$ with expected count 6.24, interval $[3, 10]$ and tail 0.385.

For the weight-bound sensitivity of 0.25, the count discrepancy continues to fall as the hyperprior concentrates. The solver stops at its domain boundary with $\alpha$ effectively fixed at 2. This solution expects 7.93 patterns, with interval $[4, 12]$, and is reported only as a sensitivity analysis. Tables~\ref{tab:applications} and~\ref{tab:applications-posterior} give the priors and posterior summaries for all three applications.

The low informativeness of the school contrasts leaves the partition sensitive to the hyperprior. Single-cluster probabilities are 0.560, 0.200 and 0.064 under Vague, the count-calibrated prior and Dual-Anchor; posterior means of $K_{79}$ are 1.96, 3.12 and 4.47. In the same order, the posterior mean largest observed share falls from 0.920 to 0.825 to 0.724, and the mean largest population weight from 0.917 to 0.818 to 0.715 (Figure~\ref{fig:7}, top row).

The observed share and population weight are close here, but estimate different quantities. Under Dual-Anchor, the posterior probability that the largest population atom exceeds one half is 0.804. The alpha-only functional $\E(0.5^{\alpha} \mid \text{data}) = 0.419$ averages the conditional size-biased tail over posterior draws of $\alpha$, but ignores the fitted allocations and residual mass. It therefore estimates neither the largest observed share nor the posterior probability of population dominance (OSM Appendix~H).

School estimates change much less. The root mean squared difference in posterior means between Dual-Anchor and Vague is 0.0088 standard deviations across 79 schools; the largest shift is 0.049 (Table~\ref{tab:applications-posterior}). Three school intervals exclude zero under either prior, compared with two or three across all five priors. The small shifts are consistent with a mean first-stage standard error of 0.428 that is large relative to the spacing of the fitted cluster locations.

Using schoolwise ordinary least squares with HC3 standard errors changes school estimates about twenty times as much as changing the hyperprior. Across the five priors, root mean squared shifts are 0.169 to 0.174; 22 of 79 HC3 intervals exclude zero, changing 19 to 20 interval decisions. The schoolwise standard errors ignore classroom dependence and may therefore produce less shrinkage. The partition also becomes more complex: under Dual-Anchor, the posterior mean of $K_{79}$ rises from 4.47 to 4.92 and single-cluster probability falls from 0.064 to 0.040 (OSM Appendix~H). This sensitivity limits the interpretation of the hyperprior comparison.

Under the wider-scale specification, the Dual-Anchor posterior mean of $K_{79}$ falls from 4.47 to 2.85 and single-cluster probability rises from 0.064 to 0.318. This comparison changes both the grand-mean prior and the prior on the variance of component means. STAR partition summaries therefore depend on the surrounding model specification as well as the concentration hyperprior.

\begin{figure}[!tp]
  \centering
  \includegraphics[width=\textwidth]{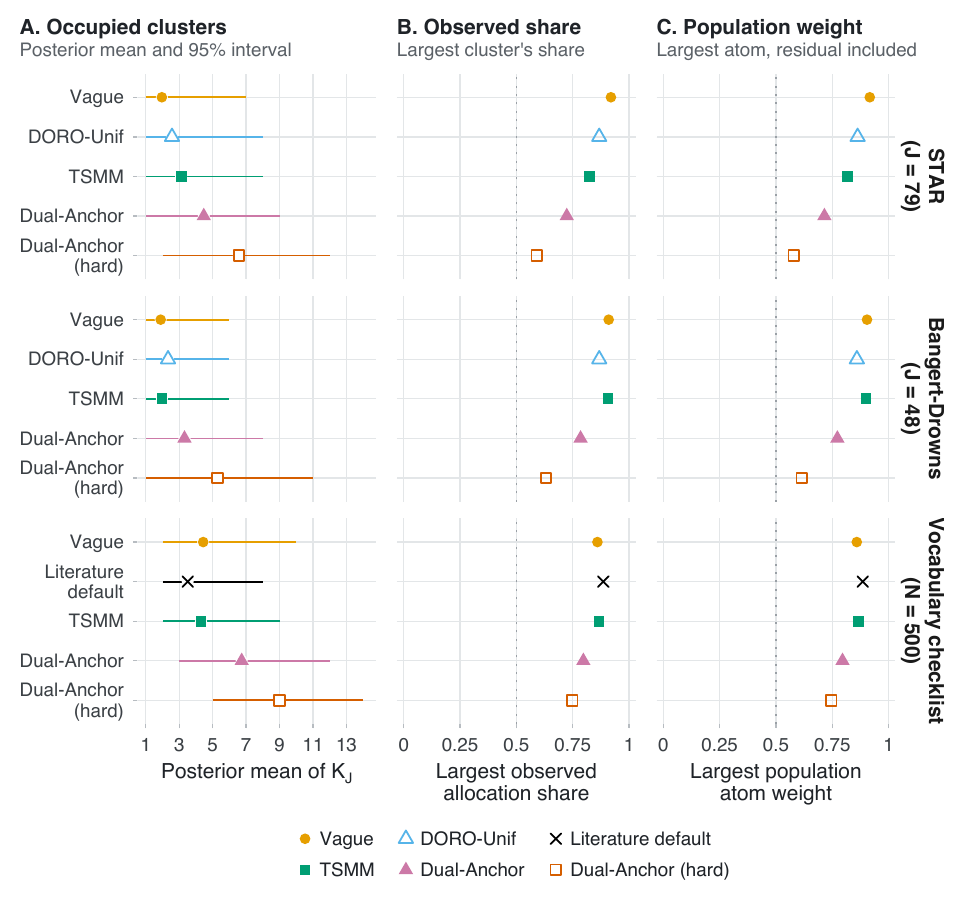}
  \caption{Posterior partition summaries under five hyperpriors: mean and 95 percent interval of $K_J$ (A), mean largest observed allocation share (B), and mean largest population atom weight (C). Rows show Project STAR ($J = 79$ schools), the writing-to-learn meta-analysis ($J = 48$ studies) and the vocabulary checklist ($N = 500$ examinees). Dotted guides in B and C mark one half. \textit{Note.} Open or crossed symbols indicate the external references, DORO-Unif and, in the bottom row, the $\mathrm{Gamma}(1, 3)$ setting used by \citet{paganin2023computational}. Dual-Anchor (hard) is the sensitivity analysis. Panel~B concerns the largest cluster among observed units; Panel~C concerns the largest atom in the population mixture.}
  \label{fig:7}
\end{figure}

\begin{table}[!tp]
\centering
\caption{Calibrated hyperpriors for the three applications}
\label{tab:applications}
\small
\ifdefined\maintabbox\else\newsavebox{\maintabbox}\fi
\begin{lrbox}{\maintabbox}%
\begin{tabular}{llcccc}
\toprule
Data set & Hyperprior & $(a, b)$ & $\E(K_J)$ [90\%] & $\Pr(W_{\mathrm{SB}} > 0.5)$ & $\Pr(W_{\mathrm{SB}} > 0.9)$ \\
\midrule
STAR & Vague & (1.00, 1.00) & 4.61 [1, 11] & 0.591 & 0.303 \\
STAR & DORO-Unif & (2.33, 2.18) & 4.99 [1, 10] & 0.527 & 0.187 \\
STAR & TSMM & (5.13, 4.94) & 5.00 [2, 9] & 0.510 & 0.140 \\
STAR & Dual-Anchor & (15.77, 11.11) & 6.24 [3, 10] & 0.385 & 0.051 \\
STAR & Dual-Anchor (hard) & $\alpha \approx 2.00$ & 7.93 [4, 12] & 0.250 & 0.010 \\
\midrule
Bangert--Drowns & Vague & (1.00, 1.00) & 4.12 [1, 10] & 0.591 & 0.303 \\
Bangert--Drowns & DORO-Unif & (2.81, 3.19) & 3.99 [1, 8] & 0.576 & 0.217 \\
Bangert--Drowns & TSMM & (1.15, 1.22) & 4.00 [1, 9] & 0.597 & 0.297 \\
Bangert--Drowns & Dual-Anchor & (3.78, 2.50) & 5.66 [2, 10] & 0.395 & 0.084 \\
Bangert--Drowns & Dual-Anchor (hard) & (11.56, 5.44) & 7.15 [3, 12] & 0.250 & 0.017 \\
\midrule
Vocabulary checklist & Vague & (1.00, 1.00) & 6.44 [1, 17] & 0.591 & 0.303 \\
Vocabulary checklist & Literature default & (1.00, 3.00) & 3.03 [1, 8] & 0.812 & 0.566 \\
Vocabulary checklist & TSMM & (2.11, 3.08) & 5.00 [1, 11] & 0.652 & 0.308 \\
Vocabulary checklist & Dual-Anchor & (20.30, 14.51) & 8.75 [4, 14] & 0.388 & 0.050 \\
Vocabulary checklist & Dual-Anchor (hard) & $\alpha \approx 2.00$ & 11.59 [7, 17] & 0.250 & 0.010 \\
\bottomrule
\end{tabular}%
\end{lrbox}%

\begin{adjustbox}{max width=\textwidth}
\usebox{\maintabbox}
\end{adjustbox}

\medskip
\noindent\parbox{\textwidth}{\small\textit{Note.} Entries are the shape and rate of the Gamma hyperprior on $\alpha$, the prior mean of $K_J$ with its 90 percent interval, and the two size-biased tails; the count/weight compromise (Dual-Anchor) uses $\lambda = 0.20$ for STAR, 0.13 for the meta-analysis, and 0.05 for the checklist. DORO-Unif and, for the checklist, the $\mathrm{Gamma}(1, 3)$ literature default used by \citet{paganin2023computational} are external references only. Dual-Anchor (hard) is a sensitivity analysis only; where $\alpha$ is effectively fixed at 2 the hard solution lies at the edge of the solver's domain.}
\end{table}

\begin{table}[!tp]
\centering
\caption{Posterior summaries for the three applications}
\label{tab:applications-posterior}
\small
\ifdefined\maintabbox\else\newsavebox{\maintabbox}\fi
\begin{lrbox}{\maintabbox}%
\begin{tabular}{llccccc}
\toprule
Data set & Hyperprior & $\E(K_J \mid \text{data})$ [95\%] & $\Pr(K_J{=}1 \mid \text{data})$ & Obs.\ largest & Pop.\ largest & $\Delta\hat\tau$ \\
\midrule
STAR & Vague & 1.96 [1, 7] & 0.560 & 0.920 & 0.917 & 0.0000 \\
STAR & DORO-Unif & 2.57 [1, 8] & 0.354 & 0.868 & 0.863 & 0.0032 \\
STAR & TSMM & 3.12 [1, 8] & 0.200 & 0.825 & 0.818 & 0.0071 \\
STAR & Dual-Anchor & 4.47 [1, 9] & 0.064 & 0.724 & 0.715 & 0.0088 \\
STAR & Dual-Anchor (hard) & 6.57 [2, 12] & 0.006 & 0.590 & 0.579 & 0.0114 \\
\midrule
Bangert--Drowns & Vague & 1.89 [1, 6] & 0.548 & 0.910 & 0.904 & 0.0000 \\
Bangert--Drowns & DORO-Unif & 2.33 [1, 6] & 0.361 & 0.868 & 0.859 & 0.0030 \\
Bangert--Drowns & TSMM & 1.96 [1, 6] & 0.525 & 0.907 & 0.901 & 0.0025 \\
Bangert--Drowns & Dual-Anchor & 3.31 [1, 8] & 0.190 & 0.785 & 0.773 & 0.0071 \\
Bangert--Drowns & Dual-Anchor (hard) & 5.29 [1, 11] & 0.034 & 0.632 & 0.615 & 0.0103 \\
\midrule
Vocabulary checklist & Vague & 4.43 [2, 10] & 0.006 & 0.860 & 0.859 & 0.0000 \\
Vocabulary checklist & Literature default & 3.51 [2, 8] & 0.010 & 0.886 & 0.885 & 0.0067 \\
Vocabulary checklist & TSMM & 4.30 [2, 9] & 0.003 & 0.867 & 0.866 & 0.0132 \\
Vocabulary checklist & Dual-Anchor & 6.73 [3, 12] & 0.000 & 0.797 & 0.796 & 0.0123 \\
Vocabulary checklist & Dual-Anchor (hard) & 8.99 [5, 14] & 0.000 & 0.748 & 0.745 & 0.0142 \\
\bottomrule
\end{tabular}%
\end{lrbox}%

\begin{adjustbox}{max width=\textwidth}
\usebox{\maintabbox}
\end{adjustbox}

\medskip
\noindent\parbox{\textwidth}{\small\textit{Note.} Results use the primary specification and 16,000 retained draws per fit (four chains of 4,000). ``Obs.\ largest'' is the posterior mean share of the observed units in the largest cluster, ``Pop.\ largest'' the posterior mean weight of the largest population atom including residual unoccupied mass, and $\Delta\hat\tau$ the root mean squared difference between the unit posterior means under the row prior and under Vague; for the checklist ($\Delta\hat\theta$, in logits) it is computed over raw-score groups weighted by group size, which under the Rasch model equals the unit-level quantity with less Monte Carlo noise. DORO-Unif and the $\mathrm{Gamma}(1, 3)$ literature default of \citet{paganin2023computational} are external references only, and Dual-Anchor (hard) is a sensitivity analysis only.}
\end{table}

\subsection{A Meta-Analysis of Writing-to-Learn Interventions}
\label{subsec:app-bd}

\citet{bangertdrowns2004effects} synthesized 48 studies of school-based writing-to-learn interventions. We examine whether the study effects follow one pattern or several. Here the judgment ``about four patterns, between 1 and 8 with 80 percent probability'' is close to the count distribution under Vague. Count calibration consequently changes the posterior little; the weight anchor has a larger effect.

The data, \texttt{metadat::}\allowbreak\texttt{dat.bangertdrowns2004} version 1.6-0 \citep{viechtbauer2026metadat}, cover 5,576 participants. A normal random-effects benchmark fitted with \texttt{metafor} \citep{viechtbauer2010conducting} gives pooled standardized mean difference 0.222 (95 percent interval 0.132 to 0.312), $\widehat{\tau}^2 = 0.0499$ and $I^2 = 58.4$ percent. Heterogeneity thus accounts for a larger share of variation than in STAR.

For 48 studies, count calibration gives $\mathrm{Gamma}(1.145, 1.221)$, with $\E(K_{48}) = 4.00$ and interval $[1, 9]$, compared with 4.12 and $[1, 10]$ under Vague (Table~\ref{tab:applications}). The size-biased majority tail of 0.597 exceeds the trigger. The selected $\lambda = 0.13$ gives the Dual-Anchor prior $\mathrm{Gamma}(3.784, 2.495)$, with expected count 5.66, interval $[2, 10]$ and tail 0.395. The weight-bound sensitivity has an ordinary constrained solution, $\mathrm{Gamma}(11.56, 5.44)$, with expected count 7.15 and interval $[3, 12]$ at the 0.25 bound.

Vague and the count-calibrated prior give similar single-cluster probabilities, 0.548 and 0.525, and posterior means of $K_{48}$, 1.89 and 1.96. Under Dual-Anchor, these become 0.190 and 3.31 (Figure~\ref{fig:7}, middle row). The posterior mean largest observed shares are 0.910, 0.907 and 0.785, and the largest population weights are 0.904, 0.901 and 0.773. Under Dual-Anchor, the probability that the largest population atom exceeds one half is 0.863, compared with 0.520 for the alpha-only functional.

The root mean squared difference in study posterior means between Dual-Anchor and Vague is 0.0071 across 48 studies. Under either prior, 13 of 48 intervals exclude zero, compared with 13 to 15 across all five priors. As in STAR, widening the model scale reduces the Dual-Anchor posterior mean of $K_{48}$ to 1.89 and increases single-cluster probability to 0.553. Cluster profiles by grade and intervention length are descriptive and do not estimate moderator effects (OSM Appendix~H). In this example, the main partition change comes from the weight anchor, with little change in study estimates.

\subsection{Latent Ability Groups Among the Examinees of a Vocabulary Checklist}
\label{subsec:app-irt}

In 2016, the Open-Source Psychometrics Project administered a 16-entry vocabulary checklist online to 2,495 respondents after an interactive version of the Generic Conspiracist Beliefs Scale developed by \citet{brotherton2013measuring}; its codebook documents the collection \citep{openpsychometrics2016gcbs}. The Item Response Warehouse later distributed the checklist responses \citep{domingue2025irw}. We analyze the 13 real-word entries for a random sample of 500 respondents selected with a fixed seed. The excluded entries, \texttt{VCL6}, \texttt{VCL9} and \texttt{VCL12}, are non-words used as validity checks in the codebook \citep{openpsychometrics2016gcbs}.

We fit a Rasch model in which the log-odds that respondent $j$ marks word $i$ are $\eta_j - \beta_i$, with the 13 difficulties constrained to sum to zero. The latent trait $\eta_j$ follows the normal Dirichlet process mixture in Equation~\eqref{eq:model}.

Here persons are the units, so $J$ is the sample size. We examine the number and sizes of occupied mixture components among the 500 respondents, referring to components as latent groups. Examinees in a group share a component mean and variance, but their latent traits need not be equal. The sum-to-zero constraint fixes the usual location indeterminacy \citep{paganin2023computational}, but a separate finite-item limit remains. With finitely many Rasch items, only finitely many functionals of the latent-trait distribution are identified; identifying the full distribution requires infinitely many items \citep[Theorems~5--6]{sanmartin2011bayesian}. Thus, with 13 binary items, $K_{500}$ describes components of the fitted model rather than substantive populations of examinees.

Eight of the 13 words are marked by more than 90 percent of respondents, and 63 of the 500 mark all 13. The top of the scale is therefore coarse. The raw score, which contains all the person-level information used by the Rasch model, takes every value from 0 to 13 \citep{sanmartin2011bayesian}. Using the posterior mean item difficulties from the prespecified Dual-Anchor fit, we obtained Warm's weighted-likelihood estimates \citep{warm1989weighted}; their sample reliability is 0.71 under the measurement-error definition of \citet{adams2005reliability}. The coefficient also rounds to 0.71 with item difficulties from each of the other four main-arm fits. Under this definition, estimated measurement error is about 29 percent of the across-person WLE variance. The 63 perfect scores, whose WLE standard error is 1.73 logits, show weak resolution at the upper limit (OSM Appendix~H).

At 500 persons, $\mathrm{Gamma}(1, 1)$ expects about six groups. The $\mathrm{Gamma}(1, 3)$ setting used in the real-data analysis of \citet{paganin2023computational} expects about three. Their size-biased majority probabilities are 0.59 and 0.81, respectively (Table~\ref{tab:applications}). We use $\mathrm{Gamma}(1, 3)$ as the external reference because it is both a published item response model choice and the default in the DPMirt package \citep{dpmirt2026package}.

We use the illustrative stated count judgment ``about five latent groups, between 2 and 10 with 80 percent probability.'' We fixed it after inspecting the data but before the new fits. For 500 persons, count calibration gives $\mathrm{Gamma}(2.11, 3.08)$, with $\E(K_{500}) = 5.00$ and interval $[1, 11]$. Its majority tail of 0.65 triggers refinement. The largest eligible grid value, $\lambda = 0.05$, gives $\mathrm{Gamma}(20.30, 14.51)$, with expected count about 8.8, interval $[4, 14]$ and tail 0.39, below 0.40.

We reviewed these achieved prior summaries and selected Dual-Anchor as primary before fitting, retaining that choice throughout. Its higher count records the cost of the weight anchor. For the hard bound of 0.25, minimizing the count discrepancy gives a boundary solution with $\alpha$ effectively fixed at 2 (mean 2.00, standard deviation 0.001). This prior expects about 12 groups and is used only for sensitivity analysis.

The two stated-count specifications produce ten labeled fits but only six distinct hyperpriors, because several roles coincide across specifications. Five of these six distinct-prior fits required a third schedule of 120,000 iterations with 60,000 warmup to meet the convergence criteria. This exceeded the single prespecified rerun. OSM Appendix~H reports the departure and diagnostics; all summaries below use 16,000 retained draws from converged fits.

The checklist responses provide more information about the partition than the school contrasts. Under every prior, single-group probability is at most about one percent. Relative to their prior expected counts, posterior means move toward roughly four groups: from 6.4 to 4.4 under Vague and from 3.0 to 3.5 under the literature default. The count-calibrated prior gives 4.3, Dual-Anchor 6.7 and Dual-Anchor (hard) 9.0, with the posterior distributions shifting correspondingly (Table~\ref{tab:applications-posterior}; Figure~\ref{fig:irt-count}).

The prior remains influential for the number of small groups. Every fit assigns at most about one percent posterior probability to a single group, but the WLE reliability of 0.71 and 63 tied perfect scores leave the number of additional small components weakly resolved. In the second stated-count specification, with 25 groups rather than five, the posterior mean is 11.7 and the mean largest population weight is 0.69 (OSM Appendix~H).

\begin{figure}[!tp]
  \centering
  \includegraphics[width=\textwidth]{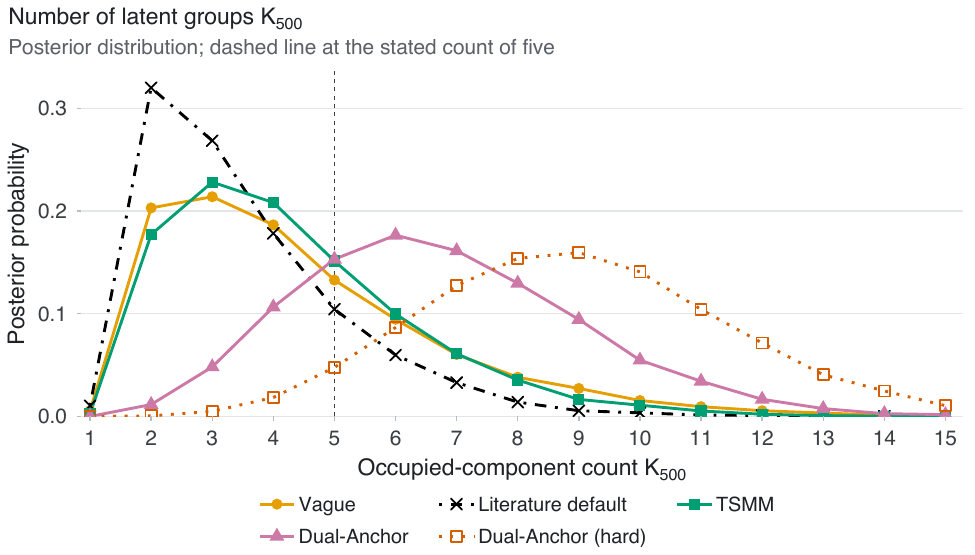}
  \caption{Posterior distribution of the occupied-component count $K_{500}$ for the vocabulary checklist under five hyperpriors. The dashed line marks the stated count of five. \textit{Note.} The $\mathrm{Gamma}(1,3)$ literature default follows \citet{paganin2023computational}; Dual-Anchor is the count/weight compromise and Dual-Anchor (hard) the weight-bound sensitivity. The figure shows values with posterior probability above 0.005 under at least one prior; each distribution uses 16,000 retained draws.}
  \label{fig:irt-count}
\end{figure}

The mean largest observed share is about 0.86 under Vague, 0.80 under Dual-Anchor and 0.75 under Dual-Anchor (hard). The probability that the largest population atom exceeds one half is 0.97 under Vague, 0.94 under Dual-Anchor and above 0.89 under all five main-arm priors (Table~\ref{tab:applications-posterior}; Figure~\ref{fig:7}, bottom row). Thus, the fitted partitions retain one large group and a few small ones. The weight anchor reduces the largest observed share by less than a tenth. The alpha-only functional, 0.42 under Dual-Anchor, does not estimate either dominance quantity.

OSM Appendix~H presents and interprets the posterior mean fitted-density diagnostic and distinguishes it from an empirical density of person posterior means.

The Rasch model orders persons by raw score under every prior. The hyperprior can change the spacing between score groups and interval widths, but these changes are small here. The root mean squared difference between Dual-Anchor and Vague posterior means is 0.012 logits across 500 persons, compared with 0.007 to 0.014 for the other three priors and a typical person's posterior standard deviation near 1 logit (Table~\ref{tab:applications-posterior}). The largest score-group shift is 0.03 logits at score 13, with Monte Carlo standard error 0.016, and interval widths differ by less than one percent. These priors change the reported groups while leaving person scores nearly unchanged.

\subsection{What the Hyperprior Changes}
\label{subsec:app-summary}

Across the three applications, concentration hyperpriors mainly affect summaries of heterogeneity. For schools and studies, Dual-Anchor shifts posterior mass from one shared component toward several patterns. Root mean squared differences in unit posterior means are only 0.0088 standard deviations across 79 schools and 0.0071 across 48 studies.

In the Rasch analysis, Vague, the literature default and count calibration give posterior counts near four. Dual-Anchor and Dual-Anchor (hard) raise the posterior mean to 6.7 and 9.0; Dual-Anchor lowers the largest observed share from 0.86 to 0.80, while person scores change by about 0.01 logits. The HC3 first stage changes school estimates about twenty times as much as the concentration hyperprior, and widening the model scale shifts school and study partitions back toward one cluster. We therefore report the hyperprior, model scale, uncertainty input and precise dominance quantity together.

\section{Discussion}
\label{sec:discussion}

\subsection{Choosing and Reporting the Hyperprior}
\label{subsec:contributions}

We have described a procedure for choosing the concentration hyperprior of a Dirichlet process mixture when the number of units is fixed by design. The analyst specifies an expected cluster count and its uncertainty. We calibrate a Gamma hyperprior to those judgments, check its implications for cluster sizes, and report any resulting count/weight compromise. The procedure distinguishes the finite-sample partition from the population weights and from summaries based only on the concentration parameter. This distinction matters because the cluster count changes with the design size, whereas the size-biased population-weight distribution does not.

The simulations and applications show why these distinctions matter. Under the likelihood and scale specifications we examined, posterior partition summaries were sensitive to the concentration prior, whereas the root mean squared error of the shrunken unit effects, averaged over units, changed little (Section~\ref{subsec:sim-main}). Because an average can conceal changes for particular units, the applications also report paired differences. Project STAR further shows that first-stage standard errors affect both school estimates and partition summaries, while the scale prior affects the partition summaries (Section~\ref{subsec:app-summary}). Concentration-prior sensitivity should therefore be reported together with the likelihood and scale specifications.

\subsection{Alternatives to the Dirichlet Process}
\label{subsec:why-dp}

We retained the Dirichlet process because it is used in the applied analyses that motivated this work. A finite mixture with a prior on the number of components \citep{miller2018mixture, fruhwirth2021generalized} or a Pitman--Yor process \citep{pitman1997two} gives more control over the relation between counts and weights. These alternatives still require checks: a prior on the number of mixture components is not itself a prior on the number occupied by a sample, and the weight specification affects both.

The conflict between count and weight judgments reflects a restriction of the one-parameter Dirichlet process: the same concentration parameter governs both features \citep{antoniak1974mixtures,sethuraman1994constructive}. If the analyst has firm requirements for both and no Gamma hyperprior satisfies them, a model with additional flexibility is appropriate. The comparison in Section~\ref{subsec:dual-anchor} makes this incompatibility visible. A Pitman--Yor process adds a discount parameter that changes the partition and weight distributions \citep{pitman1997two}; a mixture of finite mixtures instead places a prior on a finite number of components \citep{miller2018mixture,fruhwirth2021generalized}. Extending the calibration to either model would allow a broader range of judgments.

\subsection{Design-Conditional Elicitation and Sample Size}
\label{subsec:ssd-ssi}

\citet{vicentini2025prior} prefer weight-based anchors because their distributions do not depend on sample size. We agree that these anchors are appropriate when observations accumulate over time or several data sets share one concentration parameter. Recalibrating a count prior at each sample size would then define a sequence of priors in place of a common specification.

Our focus is a fixed set of sites, studies, or examinees. A judgment about the number of patterns among these units naturally depends on their number. \citet{vicentini2025prior} also allow a role for sample-size-dependent priors depending on the problem; we regard fixed-design elicitation as one such use. We take the count as the primary input and require the analyst to examine the implied weights.

A weight-based specification has count implications that should likewise be checked. In our simulations, the hyperprior calibrated to $\Pr(W_{\mathrm{SB}}>0.5)=0.25$ and $\Pr(W_{\mathrm{SB}}>0.9)=0.05$ gave mean posterior $K_J$ between 3.39 and 3.42 across the complete $N(0,1)$, $I=0.2$ cells with $K^*=5,10,$ and 30. Interval coverage was zero in the $K^*=30$ cell (Section~\ref{subsec:sim-main}). Within this specification, plausible weight judgments therefore produced implausible counts. If the analyst's firmer judgment concerns the weights, the order can be reversed: calibrate to the weights and inspect the implied count.

\subsection{Reporting in Education and Psychometrics}
\label{subsec:reporting-practice}

The checklist in Section~\ref{subsec:worked-example} records the design size, stated inputs, calibrated prior, implied count and weight summaries, decision rule, and sensitivity analysis. These choices can be specified in a pre-analysis plan. The thresholds and Dual-Anchor rule express the analyst's judgments and should be reported as such.

The interpretation of the count depends on the application. In a multisite trial it concerns the schools evaluated for a particular program; in a meta-analysis it concerns the studies included in the review. Several occupied components may motivate investigation of moderators, but do not test for their existence. In an item response analysis, $J$ is the number of examinees and the count describes occupied components of the fitted ability mixture. Its distribution changes with $J$, whereas the majority-group probability for a randomly chosen examinee depends on $\alpha$ alone (Section~\ref{subsec:unintended}).

\citet{paganin2023computational} compared Gamma settings for item response models through the cluster-count moments implied at their sample sizes. Our procedure automates the inverse calculation, starting from the target moments, and adds weight diagnostics. Reporting both allows readers to assess the count and size implications of the prior together.

\subsection{Limitations and Future Directions}
\label{subsec:limitations}

We considered Gamma hyperpriors and did not compare the Stirling--gamma prior \citep{zito2024bayesian} or mixtures of finite mixtures. The simulations used one generating family of overlapping components and design sizes up to $J=200$. Summaries were based on 57,151 of 64,400 planned fits: calibration failed the specified checks for 2,800, and another 4,449 attempted fits failed the convergence checks. The fewest converged fits occurred at $I=0.8$ and $J=200$ (OSM Appendix~F). At a stated mean of 2.5, calibration failed for the four Dual-Anchor and DORO-Unif cells in the medium-confidence comparison. TSMM results and the reported low-confidence Dual-Anchor results remain available at that target. Two applications also showed sensitivity to scale specifications (Section~\ref{subsec:app-summary}).

The Rasch application used one sample of 500 examinees and the 13 real-word items of a vocabulary checklist. We did not examine models with discrimination parameters (2PL or 3PL), polytomous or longer tests, other base measures, or much larger samples. These settings may provide different amounts of information about the partition.

The inputs in this paper are our own. Future work should examine how consistently practitioners can supply count judgments. Until then, analysts should examine alternative counts and confidence levels when their judgments are uncertain. The inexpensive calibration makes this straightforward, although posterior comparisons still require refitting. For variational Bayes fits, the local linear approximation of \citet{giordano2023evaluating} estimates sensitivity to $\alpha$ without repeated optimization. Adapting such sensitivity methods to this workflow is a useful direction for further work.

\subsection{Conclusion}
\label{subsec:conclusion}

A default concentration hyperprior can strongly influence the estimated structure of heterogeneity when individual units provide limited information. For a fixed design, we recommend stating the expected number of patterns and the uncertainty about that count, checking the implied cluster sizes before fitting, and reporting these judgments with the posterior results. The calibration and diagnostics make these prior choices explicit and allow their consequences to be examined.


\printbibliography[title={References}]

\clearpage
\appendix
\setcounter{section}{0}
\renewcommand{\thesection}{\Alph{section}}
\numberwithin{equation}{section}
\numberwithin{figure}{section}
\numberwithin{table}{section}

\begin{center}
{\Large\bfseries Online Supplemental Materials}\\[0.75em]
{\large Design-Conditional Prior Elicitation for Dirichlet Process Mixtures:\\
A Unified Framework for Cluster Counts and Weight Control}\\[0.75em]
{\normalsize JoonHo Lee}
\end{center}

\vspace{1em}
\DoToC

\section{Exact Distribution and Moments of the Cluster Count}
\label{app:dp-crp}

\subsection{The Conditional Count Distribution}
\label{subsec:A1-purpose}
\label{subsec:A2-dist-theory}

We use the exact distribution of $K_J$ to calibrate the prior and report its count probabilities. Here $J$ is fixed, $G\sim\mathrm{DP}(\alpha,G_0)$ has a non-atomic base measure, and $K_J$ counts distinct component parameters among $J$ draws from $G$. In a normal mixture, it does not count distinct unit effects: effects within a component remain continuous. Throughout, the hyperprior has shape--rate density
\begin{equation}
p(\alpha\mid a,b)=\frac{b^a}{\Gamma(a)}\alpha^{a-1}e^{-b\alpha},\qquad a,b,\alpha>0.
\label{eq:A1}
\end{equation}

Under the Chinese restaurant process, unit $i$ starts a new cluster with probability $\alpha/(\alpha+i-1)$ \citep{blackwell1973ferguson}. This probability depends only on $i$ and $\alpha$, so the new-cluster indicators are mutually independent given $\alpha$. Thus the classical Bernoulli-sum representation is
\begin{equation}
K_J\overset{d}{=}\sum_{i=1}^J B_i,\qquad
B_i\mid\alpha\ \text{independent},\qquad
B_i\mid\alpha\sim\mathrm{Bernoulli}\!\left(\frac{\alpha}{\alpha+i-1}\right),
\label{eq:A5}
\end{equation}
with $B_1=1$ \citep[Sec.~5.2]{arratia2003logarithmic}. The equivalent Antoniak probability mass function is
\begin{equation}
\Pr(K_J=k\mid\alpha)
=|s(J,k)|\frac{\alpha^k\Gamma(\alpha)}{\Gamma(\alpha+J)},
\qquad k=1,\ldots,J,
\label{eq:A8}
\end{equation}
where $|s(J,k)|$ is an unsigned Stirling number of the first kind \citep{antoniak1974mixtures}. As $\alpha$ tends to zero, $K_J$ tends to one; as $\alpha$ increases without bound, $K_J$ tends to $J$.

\subsection{Exact Moments and Gamma Mixing}
\label{subsubsec:A23-moments}

Summing the Bernoulli means and variances gives
\begin{align}
\kappa_J(\alpha):=\E(K_J\mid\alpha)
&=\sum_{r=0}^{J-1}\frac{\alpha}{\alpha+r}
=\alpha\{\psi(\alpha+J)-\psi(\alpha)\},
\label{eq:A12}\\
v_J(\alpha):=\Var(K_J\mid\alpha)
&=\sum_{r=0}^{J-1}\frac{\alpha r}{(\alpha+r)^2}
=\kappa_J(\alpha)-\alpha^2\{\psi_1(\alpha)-\psi_1(\alpha+J)\},
\label{eq:A13}
\end{align}
where $\psi$ and $\psi_1$ are the digamma and trigamma functions. The finite sums follow directly from~\eqref{eq:A5}; the special-function forms allow evaluation without summing over $J$ terms. For $J\ge2$, $0<v_J(\alpha)<\kappa_J(\alpha)-1$: the nonconstant Bernoulli sum $K_J-1$ is underdispersed relative to a Poisson variable with the same mean. This is one reason to refine the Poisson-based initializer.

Integrating~\eqref{eq:A8} against~\eqref{eq:A1} gives the prior-predictive count distribution \citep{dorazio2009selecting,murugiah2012selecting}:
\begin{equation}
p_{a,b}(k)=\frac{b^a|s(J,k)|}{\Gamma(a)}
\int_0^\infty\alpha^{k+a-1}e^{-b\alpha}
\frac{\Gamma(\alpha)}{\Gamma(\alpha+J)}\,d\alpha.
\label{eq:A21}
\end{equation}
We evaluate these one-dimensional integrals when reporting a median, interval or individual count probability. For moment matching, we avoid computing the full PMF and instead use total expectation and total variance:
\begin{align}
M_1(a,b):=\E_{a,b}(K_J)&=\E_\alpha\{\kappa_J(\alpha)\},
\label{eq:A22}\\
V(a,b):=\Var_{a,b}(K_J)&=\E_\alpha\{v_J(\alpha)+\kappa_J(\alpha)^2\}-M_1(a,b)^2.
\label{eq:A23}
\end{align}
All expectations on the right are over $\alpha\sim\mathrm{Gamma}(a,b)$.

\subsection{Numerical Evaluation}
\label{subsubsec:A26-computation}

We compute the Stirling numbers in log space to avoid overflow. With $L_{n,k}=\log|s(n,k)|$ and $\log0=-\infty$, the standard Stirling recurrence becomes
\begin{equation}
L_{n,k}=\mathrm{logaddexp}\{L_{n-1,k-1},\log(n-1)+L_{n-1,k}\}.
\label{eq:A31}
\end{equation}
We initialize $L_{1,1}=0$, set entries outside the triangular support to $-\infty$, and use $\mathrm{logaddexp}(u,v)=\max(u,v)+\log\{1+e^{-|u-v|}\}$. The conditional log-PMF is then
\begin{equation}
\log p_{J,k}(\alpha)=L_{J,k}+k\log\alpha+\log\Gamma(\alpha)-\log\Gamma(\alpha+J).
\label{eq:A32}
\end{equation}
We normalize by log-sum-exp and check that probabilities are nonnegative and sum to one.

For the Gamma expectations, set $x=b\alpha$. Generalized Gauss--Laguerre quadrature for the weight $x^{a-1}e^{-x}$ gives
\begin{equation}
\E_{a,b}\{h(\alpha)\}
=\frac{1}{\Gamma(a)}\int_0^\infty h(x/b)x^{a-1}e^{-x}\,dx
\approx\sum_{m=1}^M\tilde w_m h(x_m/b),
\label{eq:A33}
\end{equation}
where $x_m$ are the nodes and $\tilde w_m$ are weights normalized by $\Gamma(a)$ (\href{https://dlmf.nist.gov/3.5#v}{DLMF \S3.5(v)}). We reuse these nodes within each evaluation of the moments and Jacobian. Moment evaluation costs $O(M)$; the full count PMF costs $O(JM)$ after the log-Stirling table has been cached.

The package default is $M=80$. We use 160 nodes for fitting and independently recompute at 320 nodes for the reported analyses. Higher-order checks are part of accepting a calibration, not an optional accuracy comparison (Appendix~\ref{app:stage2}).

\section{Stage 1: A Closed-Form Initializer}
\label{app:stage1}

\subsection{The Shifted Poisson--Gamma Approximation}
\label{subsec:B1-purpose}
\label{subsec:B2-proxy}

Stage~1 converts the target count moments $(\mu_K,\sigma_K^2)$ into starting values for the exact numerical calibration. We use
\begin{equation}
K_J-1\mid\alpha\approx\mathrm{Poisson}(\alpha c_J),\qquad c_J=\log J,
\label{eq:B1}
\end{equation}
so that the approximation respects the minimum count of one. Poisson approximations for the number of components are classical \citep{arratia2000poisson}; \citet[Sec.~3.1.2]{vicentini2025prior} use an unshifted negative-binomial approximation under Gamma mixing. Our shifted approximation is a computational starting point, not an exact finite-$J$ model or a limit to a fixed distribution.

If $X\mid\alpha\sim\mathrm{Poisson}(\alpha c_J)$ and $\alpha\sim\mathrm{Gamma}(a,b)$, direct integration gives
\begin{equation}
X\sim\mathrm{NegBin}\!\left(a,\frac{b}{b+c_J}\right),\qquad
\E(X)=\frac{ac_J}{b},\qquad
\Var(X)=\frac{ac_J}{b}+\frac{ac_J^2}{b^2}.
\label{eq:B5}
\end{equation}
Here the negative binomial has support $\{0,1,\ldots\}$. This identity is exact for the Poisson proxy; the approximation lies in replacing $K_J-1$ by $X$.

\subsection{Inversion and the Projection Used for Initialization}
\label{subsec:B3-inversion}
\label{subsubsec:B31-mapping}

Let $\mu_0=\mu_K-1$. Matching~\eqref{eq:B5} to $(\mu_0,\sigma_K^2)$ gives
\begin{equation}
a_0=\frac{\mu_0^2}{\sigma_K^2-\mu_0},\qquad
b_0=\frac{\mu_0\log J}{\sigma_K^2-\mu_0},
\qquad \mu_0>0,\quad\sigma_K^2>\mu_0.
\label{eq:B9}
\end{equation}
To obtain the result, the mean equation gives $b=ac_J/\mu_0$; substituting it into the variance equation gives $\sigma_K^2=\mu_0+\mu_0^2/a$. The initializer therefore has $\E(\alpha)=\mu_0/\log J$ and $\mathrm{CV}(\alpha)=1/\sqrt{a_0}$.

The inequality $\sigma_K^2>\mu_0$ is a condition on the proxy. It does not characterize feasibility under the exact Gamma-mixed count distribution, whose conditional variance is smaller than its shifted mean. If the supplied variance violates this condition, we use
\begin{equation}
\sigma_{K,\mathrm{eff}}^2=\max\{\sigma_K^2,\mu_0+\varepsilon\},\qquad
\varepsilon=\max\{10^{-8},10^{-6}\mu_0\},
\label{eq:B12}
\end{equation}
only to compute $(a_0,b_0)$. Stage~2 continues to target the original $(\mu_K,\sigma_K^2)$. Near the proxy boundary the initializer has a large shape parameter and approaches a point mass; it is not accepted in place of a verified exact-moment solution.

For $J=50$ and targets $(5,10)$,~\eqref{eq:B9} gives $(a_0,b_0)=(2.667,2.608)$. Its exact count moments are 4.415 and 5.618, well below the targets. Stage~2 gives $(a,b)=(1.4082,1.0770)$ and moments 5.000 and 10.000. Appendix~\ref{app:error-bounds} explains why increasing $J$ alone does not make the initializer reliable for every target.

\subsection{Converting an Uncertainty Judgment}
\label{subsec:B4-confidence}

The qualitative confidence interface uses
\begin{equation}
\sigma_K^2=\mathrm{VIF}(\mu_K-1),
\label{eq:B13}
\end{equation}
with VIF values 1.5, 2.5 and 5 for high, medium and low confidence. At $\mu_K=5$, these correspond to variances 6, 10 and 20. The labels are starting points for elicitation: the analyst should examine the resulting count distribution and decide whether it expresses the intended uncertainty. A coefficient of variation gives $\sigma_K^2=(\mathrm{CV}_K\mu_K)^2$ directly.

For an interval $[k_L,k_U]$ assigned probability $q$, the analysis code can use the normal proxy
\begin{equation}
\sigma_K^2\approx\left\{\frac{k_U-k_L}{2z_{(1+q)/2}}\right\}^{2},
\label{eq:B14}
\end{equation}
taking the midpoint as $\mu_K$ unless a mean is supplied separately. The package also offers a discrete maximum-entropy conversion on $\{1,\ldots,J\}$ (main text \S3.2). The normal proxy does not preserve the stated interval mass exactly after Gamma calibration, so we report the induced interval as well as the target moments. Tight targets may be incompatible with the Gamma family; neither conversion nor the Stage~1 projection establishes feasibility.

\section{Stage 2: Exact-Moment Newton Refinement and Verification}
\label{app:stage2}

\subsection{Target Equations and Jacobian}
\label{subsec:C1-purpose}
\label{subsec:C2-problem}
\label{subsec:C3-jacobian}

Stage~2 solves
\begin{equation}
F(a,b)=\begin{pmatrix}M_1(a,b)-\mu_K\\V(a,b)-\sigma_K^2\end{pmatrix}=\mathbf0,
\label{eq:C6}
\end{equation}
where the exact moments are given by~\eqref{eq:A22}--\eqref{eq:A23}. We differentiate the Gamma expectations to obtain the Jacobian. For a function $h(\alpha)$ that does not itself depend on $(a,b)$, differentiation under the integral gives
\begin{equation}
\partial_\theta\E_{a,b}\{h(\alpha)\}
=\E_{a,b}\{h(\alpha)s_\theta(\alpha)\},\qquad\theta\in\{a,b\},
\label{eq:C7}
\end{equation}
with Gamma scores
\begin{equation}
s_a(\alpha)=\log b-\psi(a)+\log\alpha,\qquad
s_b(\alpha)=a/b-\alpha.
\label{eq:C8}
\end{equation}
This follows from $\partial_\theta p=p\,\partial_\theta\log p$. The conditional count moments are bounded for fixed $J$, and the Gamma law has the log and polynomial moments needed to justify the differentiation.

Writing $M_1=\E(\kappa_J)$ and $V=\E(v_J+\kappa_J^2)-M_1^2$, we obtain
\begin{align}
\partial_\theta M_1&=\E\{\kappa_J(\alpha)s_\theta(\alpha)\},
\label{eq:C10}\\
\partial_\theta V&=\E\{[v_J(\alpha)+\kappa_J(\alpha)^2]s_\theta(\alpha)\}
-2M_1\partial_\theta M_1.
\label{eq:C11}
\end{align}
We evaluate these expectations on the same quadrature nodes as the moments. This avoids differentiating quadrature nodes or approximating derivatives by finite differences.

To keep both parameters positive, we work with $\eta=(\log a,\log b)$ and $G(\eta)=F(e^{\eta_1},e^{\eta_2})$. The Jacobian in these coordinates is
\begin{equation}
\mathbf J_G(\eta)=
\begin{pmatrix}\partial_aM_1&\partial_bM_1\\\partial_aV&\partial_bV\end{pmatrix}
\mathrm{diag}(a,b).
\label{eq:C14}
\end{equation}

\subsection{Algorithm and Numerical Settings}
\label{subsec:C4-algorithm}

\noindent\textbf{Algorithm C.1: exact-moment Newton refinement (A2-MN).}
The inputs are $J$, $(\mu_K,\sigma_K^2)$, starting values $(a_0,b_0)$, fitting and verification quadrature orders, and stopping tolerances. The output is a verified calibration or a failure status with the final candidate retained for diagnosis.

\begin{enumerate}[leftmargin=2em,itemsep=5pt,topsep=4pt]
\item Check the necessary support conditions
\begin{equation}
1<\mu_K<J,\qquad0<\sigma_K^2\le(J-1)^2/4.
\label{eq:C15}
\end{equation}
These do not establish that a Gamma solution exists. Obtain the Stage~1 initializer, applying~\eqref{eq:B12} if needed, and set $\eta_0=(\log a_0,\log b_0)$.

\item At iteration $r$, evaluate $f_r=G(\eta_r)$ and $\mathbf J_G(\eta_r)$. If $\|f_r\|_\infty\le\mathtt{tol}_F$, proceed to independent verification. Otherwise solve
\begin{equation}
\mathbf J_G(\eta_r)\Delta\eta_r=-f_r.
\label{eq:C18}
\end{equation}
If $|\det\mathbf J_G|<10^{-12}$, add $10^{-8}I_2$ before solving. A step with $\|\Delta\eta_r\|_\infty\le\mathtt{tol}_\eta$ but a residual above tolerance is a failure to converge, not a calibrated prior.

\item Start with step length $u=1$ and halve it until
\begin{equation}
\|G(\eta_r+u\Delta\eta_r)\|_2\le(1-0.5u)\|f_r\|_2.
\label{eq:C19}
\end{equation}
If $u<10^{-8}$, report a stalled line search. Otherwise update $\eta_{r+1}=\eta_r+u\Delta\eta_r$ and repeat, up to the iteration limit.

\item Recompute the exact moments of a provisionally converged candidate at the higher quadrature order, without reusing fitting-order summaries. Accept only finite, interior candidates that also meet the verification residual tolerance. Small steps, iteration-limit termination, or disagreement at the higher order produce a rejected calibration.
\end{enumerate}

The package defaults are $M=80$, $\mathtt{tol}_F=10^{-8}$, $\mathtt{tol}_\eta=10^{-10}$ and 20 iterations. For the reported analyses, we fit at 160 nodes and verify at 320 nodes. The software records convergence, usability, verification, both residuals and the termination reason. Rejected candidates do not enter the posterior analyses. Local Newton convergence requires a nonsingular root and a sufficiently close start; the initializer, log coordinates and line search do not establish global convergence or feasibility.

\subsection{Distributional Projection as an Optional Alternative}
\label{subsec:C7-KL}

For a full target PMF $p^\star(k)$, the software also provides the distributional projection of \citet{dorazio2009selecting}:
\begin{equation}
(a^\dagger,b^\dagger)\in\arg\min_{a,b>0}
\sum_{k=1}^Jp^\star(k)\log\frac{p^\star(k)}{p_{a,b}(k)},
\label{eq:C22}
\end{equation}
where $p_{a,b}$ is~\eqref{eq:A21}. The full PMF and its score-based derivatives require $O(JM)$ work per evaluation. A stable optimum need not reproduce the target adequately: the software checks numerical stability and target discrepancy separately. The analyses in the main text use exact-moment calibration. Historical DORO-uniform projections are retained only as labelled comparisons where their prespecified PMF-adequacy checks failed; they are not treated as calibrated primary priors.

\section{Why the Finite-\texorpdfstring{$J$}{J} Refinement Matters}
\label{app:error-bounds}

\subsection{The Two Approximation Errors}
\label{subsec:D1-purpose}
\label{subsec:D2-decomposition}

Stage~1 first approximates the Bernoulli sum $S_J=K_J-1$ by a Poisson variable with the same mean, then replaces that mean by $\alpha\log J$. These are different approximations. Write
\begin{equation}
\lambda_J(\alpha)=\kappa_J(\alpha)-1
=\alpha\{\psi(\alpha+J)-\psi(\alpha+1)\}.
\label{eq:D6}
\end{equation}
For total variation distance $d_{\mathrm{TV}}$, define
\begin{align}
E_1(\alpha)&=d_{\mathrm{TV}}\{\mathcal L(S_J\mid\alpha),\mathrm{Pois}(\lambda_J(\alpha))\},
\label{eq:D7}\\
E_2(\alpha)&=d_{\mathrm{TV}}\{\mathrm{Pois}(\lambda_J(\alpha)),\mathrm{Pois}(\alpha\log J)\}.
\label{eq:D8}
\end{align}
If $p_{a,b}$ is the exact law of $S_J$ and $\tilde p_{a,b}$ its negative-binomial proxy, then
\begin{equation}
d_{\mathrm{TV}}(p_{a,b},\tilde p_{a,b})\le\E_\alpha E_1(\alpha)+\E_\alpha E_2(\alpha).
\label{eq:D9}
\end{equation}
For each $\alpha$, the triangle inequality gives the sum of the two errors. Integrating the absolute difference of the conditional PMFs bounds the absolute difference of their integrals, giving~\eqref{eq:D9}. Thus Gamma mixing propagates the conditional errors without adding a third approximation. The exact PMF is extended by zero outside $\{0,\ldots,J-1\}$; the proxy has unbounded support.

\subsection{Conditional Bounds}
\label{subsec:D3-conditional}

The Stein--Chen bound for an independent Bernoulli sum \citep[Sec.~3.4]{arratia2003logarithmic} gives
\begin{equation}
E_1(\alpha)\le\min\{1,1/\lambda_J(\alpha)\}
\alpha^2\{\psi_1(\alpha+1)-\psi_1(\alpha+J)\}.
\label{eq:D10}
\end{equation}
The last factor is $\sum_{j=2}^Jp_j(\alpha)^2=\lambda_J(\alpha)-v_J(\alpha)$, with $p_j(\alpha)=\alpha/(\alpha+j-1)$. For fixed $\alpha$, this factor remains bounded while $\lambda_J(\alpha)$ grows as $\alpha\log J$. The resulting $O(1/\log J)$ rate is slow.

The mean approximation has a different error. The digamma expansion and its first-omitted-term bound (\href{https://dlmf.nist.gov/5.11#ii}{DLMF \S5.11(ii)}) yield
\begin{equation}
\left|\lambda_J(\alpha)-\alpha\log J+\alpha\psi(\alpha+1)\right|
\le\frac{\alpha^2}{J}+\frac{\alpha}{2J}+\frac{\alpha}{12J^2}.
\label{eq:D12}
\end{equation}
To obtain this bound, write the expression inside the absolute value as $\alpha\{\psi(\alpha+J)-\log J\}$, then use
$|\psi(x)-\log x|\le1/(2x)+1/(12x^2)$ and $\log(1+\alpha/J)\le\alpha/J$.
Consequently,
\begin{equation}
\lambda_J(\alpha)=\alpha\log J-\alpha\psi(\alpha+1)
+O\{\alpha(1+\alpha)/J\}.
\label{eq:D13}
\end{equation}
The residual $-\alpha\psi(\alpha+1)$ persists as $J$ increases for fixed $\alpha$, except at its zero. The approximation can therefore retain an absolute mean bias even when its relative error decreases.

Pinsker's inequality gives the following bound for the second error
\citep[Lemma~11.6.1]{cover2006elements}, using natural logarithms:
\begin{equation}
E_2(\alpha)\le
\left[\frac12\left\{\lambda_J(\alpha)\log\frac{\lambda_J(\alpha)}{\alpha\log J}
+\alpha\log J-\lambda_J(\alpha)\right\}\right]^{1/2}.
\label{eq:D15}
\end{equation}
The expression in braces is the KL divergence between the two Poisson laws. Equations~\eqref{eq:D10} and~\eqref{eq:D15}, averaged over the specified Gamma prior, bound the finite-$J$ error without relying on an asymptotic rate alone.

\subsection{Mixing, Moment Accuracy and Practical Use}
\label{subsec:D4-marginal}
\label{subsec:D5-thresholds}

For fixed $a,b>0$, the bounds imply $\E E_1=O(1/\log J)$ and $\E E_2=O(1/\sqrt{\log J})$. The constants depend on the Gamma hyperparameters. In particular, for $J\ge2$,
\begin{align*}
(\log J)E_1(\alpha)&\le\frac{\pi^2}{6}\alpha(1+\alpha),\\
\sqrt{\log J}\,E_2(\alpha)&\le\sqrt{\alpha/2}
\{|\psi(\alpha+1)|+\alpha/2+1/4+1/48\}.
\end{align*}
The first bound uses $\sum p_j^2\le\alpha^2\pi^2/6$ and $\lambda_J\ge\alpha\log J/(1+\alpha)$. For the second, apply
$\mathrm{KL}\{\mathrm{Pois}(\lambda)\Vert\mathrm{Pois}(\lambda')\}\le(\lambda-\lambda')^2/\lambda'$
to~\eqref{eq:D15}, then use~\eqref{eq:D12}. Both envelopes have finite expectation under every fixed Gamma law. This justifies mixing the rates; it does not provide uniform guarantees for sequences of hyperpriors that change with $J$.

Distributional accuracy also does not establish moment accuracy for the unbounded negative-binomial proxy. Write $\tilde K=1+\tilde S$, where
$\tilde S\sim\mathrm{NegBin}\{a,b/(b+\log J)\}$. The bounded count $\min(\tilde K,J)$ can be compared with $K_J$ using total variation, but the unbounded proxy requires tail corrections, including $\E\{(\tilde K-J)_+\}$ for the mean. We therefore check the exact moments themselves before accepting Stage~1 alone.

The discrepancies in the $J=50$ example in Appendix~\ref{app:stage1} are consequential: the approximate prior misses the mean by 0.59 clusters and the variance by 4.38. No design-size cutoff establishes adequate accuracy for every count judgment. We use Stage~1 for initialization and coarse exploration, and Stage~2 for all reported calibrations. It can be omitted only after the initializer passes the same exact-moment and higher-order checks.

\section{Weight Diagnostics and Dual-Anchor Calibration}
\label{app:weights}

\subsection{Size-Biased and Ranked Weights}
\label{subsec:E1-purpose}
\label{subsec:E2-stickbreaking}

The stick-breaking representation is
\begin{equation}
G=\sum_{h\ge1}w_h\delta_{\theta_h^*},\qquad
w_h=v_h\prod_{\ell<h}(1-v_\ell),\qquad
v_h\stackrel{iid}{\sim}\mathrm{Beta}(1,\alpha),
\label{eq:E2}
\end{equation}
with atoms drawn independently from $G_0$ \citep{sethuraman1994constructive}. The weights are in size-biased order. In particular, $w_1$ has the same marginal law as a fresh size-biased pick $W_{\mathrm{SB}}$ from the ranked masses: conditional on those masses, we select a cluster with probability equal to its mass. This is the population mass of a randomly chosen unit's cluster, which need not be the largest \citep{vicentini2025prior}.

Thus $\Pr(W_{\mathrm{SB}}>0.5)$ is the probability that a unit belongs to a majority-mass cluster. The probability that any such cluster exists, $\Pr(W_{\max}>0.5)$, is generally larger. Both differ from the largest observed share among a finite set of $J$ units. We use the size-biased tail for calibration and report the largest-weight diagnostic separately.

\subsection{The Gamma-Mixed Size-Biased Distribution}
\label{subsec:E3-marginal}

Because $w_1\mid\alpha\sim\mathrm{Beta}(1,\alpha)$, its conditional tail is $(1-t)^\alpha$. Integrating this tail by the Gamma Laplace transform gives
\begin{equation}
S(t;a,b):=\Pr(W_{\mathrm{SB}}>t\mid a,b)
=\E_\alpha\{e^{-c_t\alpha}\}
=\left(\frac{b}{b+c_t}\right)^a.
\label{eq:E9}
\end{equation}
Here $c_t=-\log(1-t)$ and $0<t<1$. The CDF is $1-S(t;a,b)$,
and differentiation gives the density
\begin{equation}
f_{\mathrm{SB}}(w\mid a,b)
=\frac{ab^a}{(1-w)\{b-\log(1-w)\}^{a+1}},\qquad0<w<1.
\label{eq:E7}
\end{equation}
\citet[Appendix~A]{vicentini2025prior} derive this density and CDF. We use~\eqref{eq:E9} at $t=0.5$ and $0.9$ to report majority and near-universal mass probabilities. No simulation or quadrature is needed for these tails.

For optimization, the derivatives are
\begin{equation}
\partial_aS=S\log\frac{b}{b+c_t},\qquad
\partial_bS=S\frac{ac_t}{b(b+c_t)}.
\label{eq:E14}
\end{equation}
At fixed $b$, increasing $a$ lowers the tail; at fixed $a$, increasing $b$ shifts $\alpha$ toward zero and raises it.

\subsection{Bounds on the Largest Weight: Proof of Main-Text Proposition~1}
\label{subsec:E2b-wmax}

\begin{proposition}[Bounds on the largest weight]
\label{prop:wmax-bounds}
Let $(p_h)_{h\ge1}$ be the ranked masses of a random discrete probability measure and $W_{\max}=p_1$. Draw an index $H$ with $\Pr(H=h\mid(p_\ell))=p_h$ and set $W_{\mathrm{SB}}=p_H$. For $1/2\le t<1$,
\[
\Pr(W_{\mathrm{SB}}>t)\le\Pr(W_{\max}>t)
\le\frac{\Pr(W_{\mathrm{SB}}>t)}{t}.
\]
For the Dirichlet process these bounds hold conditionally on $\alpha$ and after averaging over any hyperprior on $\alpha$.
\end{proposition}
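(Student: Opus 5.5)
The argument conditions on the ranked masses $(p_h)$ and then averages twice: once over the masses and once over $\alpha$.

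\textbf{Step 1: conditional identity.} Given $(p_\ell)$, the definition of $H$ gives
\[
\Pr(W_{\mathrm{SB}}>t\mid(p_\ell))=\sum_{h\ge1}p_h\,\mathbf{1}(p_h>t).
\]
Because $\sum_h p_h\le 1$ and $t\ge 1/2$, at most one mass can exceed $t$: two such masses would sum to more than $2t\ge1$. Since the masses are ranked, that mass can only be $p_1=W_{\max}$. The sum therefore collapses to $W_{\max}\mathbf{1}(W_{\max}>t)$.

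\textbf{Step 2: sandwich pointwise.} On the event $\{W_{\max}>t\}$ we have $t< W_{\max}\le 1$. Hence
\[
t\,\mathbf{1}(W_{\max}>t)\le W_{\max}\mathbf{1}(W_{\max}>t)\le \mathbf{1}(W_{\max}>t).
\]
Taking expectations over the law of the masses and using the tower property gives
\[
t\Pr(W_{\max}>t)\le \Pr(W_{\mathrm{SB}}>t)\le\Pr(W_{\max}>t).
\]
Dividing the left inequality by $t>0$ yields the stated two-sided bound.

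\textbf{Step 3: Dirichlet process and hyperprior.} For the Dirichlet process, apply Steps 1 and 2 with the law of the ranked masses given $\alpha$, that is, the Poisson--Dirichlet$(\alpha)$ ranking of the GEM weights. Two points need checking here:
\begin{itemize}
\item The masses must sum to one almost surely. They do, and in any case only $\sum p_h\le1$ was used.
\item The paper's $W_{\mathrm{SB}}$ and the stick-breaking $w_1$ must have the same law. This is the size-biased permutation property of GEM already cited from Sethuraman. It ensures that the conditional tail $(1-t)^\alpha$ in Equation~\eqref{eq:w1-tail} applies to the quantity in the proposition.
\end{itemize}
Finally, the pointwise-in-$\alpha$ inequalities $t\,P_\alpha(W_{\max}>t)\le P_\alpha(W_{\mathrm{SB}}>t)\le P_\alpha(W_{\max}>t)$ are linear in the probabilities. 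Integrating them against any hyperprior on $\alpha$, in particular $\mathrm{Gamma}(a,b)$, preserves them. The constant $t$ does not depend on $\alpha$, and all integrands are bounded measurable functions of $\alpha$.

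\textbf{Main obstacle.} The only delicate point is Step 3: identifying the fresh size-biased pick from the ranked masses with $w_1$, which is what justifies the paper's closed-form tail. Measurability of the conditional probabilities in $\alpha$ is routine by comparison. Steps 1 and 2 are elementary once at most one mass is known to exceed $t$.
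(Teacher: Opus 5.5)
Your proof is correct and follows the paper's argument: condition on the ranked masses, use $t\ge 1/2$ to collapse $\sum_h p_h\mathbf{1}(p_h>t)$ to $W_{\max}\mathbf{1}(W_{\max}>t)$, bound this between $t\,\mathbf{1}(W_{\max}>t)$ and $\mathbf{1}(W_{\max}>t)$, and then condition on $\alpha$ and integrate. The identification of $W_{\mathrm{SB}}$ with $w_1$, which you flag as the main obstacle, is not needed for the inequality itself; it is needed only to link the bounds to the closed-form tail in Equation~\eqref{eq:w1-tail}.
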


\begin{proof}
At most one mass can exceed $t\ge1/2$. Conditioning on the ranked masses therefore gives
\[
\Pr(W_{\mathrm{SB}}>t)
=\E\left\{\sum_{h\ge1}p_h\mathbf1(p_h>t)\right\}
=\E\{W_{\max}\mathbf1(W_{\max}>t)\}.
\]
On this event, $t<W_{\max}\le1$, so the final expectation lies between
$t\Pr(W_{\max}>t)$ and $\Pr(W_{\max}>t)$. Rearranging proves the result. The same argument can be conditioned on $\alpha$; integration then preserves the inequalities.
\end{proof}

At $t=0.9$ the upper bound is within a factor $1/0.9$ of the lower bound; at $t=0.5$ the factor is two. We cap the upper bound at one. The same conditioning argument also gives an exact integral for $t\ge1/2$:
\begin{equation}
\Pr(W_{\max}>t)
=\E\{W_{\mathrm{SB}}^{-1}\mathbf1(W_{\mathrm{SB}}>t)\}
=\int_t^1\frac{f_{\mathrm{SB}}(w)}{w}\,dw.
\label{eq:E-wmax-exact}
\end{equation}
Indeed, the conditional expectation of the reciprocal-weight expression is $\sum_h\mathbf1(p_h>t)$, which equals the indicator that the largest mass exceeds $t$. Substituting the Beta density gives the conditional probability; substituting~\eqref{eq:E7} gives the Gamma-mixed probability by one-dimensional integration. This establishes the calculation stated in the main text without identifying $w_1$ with the largest weight.

\subsection{Expected Co-Clustering Probability}
\label{subsec:E4-coclustering}

For two independent units drawn from the population, the probability of sharing a cluster conditional on the weights is
\begin{equation}
\rho=\sum_{h\ge1}w_h^2.
\label{eq:E16}
\end{equation}
By the predictive rule for the second unit, $\E(\rho\mid\alpha)=1/(1+\alpha)$. Hence the optional diagnostic in the main text is
\begin{equation}
\E(\rho\mid a,b)=\frac{b^a}{\Gamma(a)}
\int_0^\infty\frac{\alpha^{a-1}e^{-b\alpha}}{1+\alpha}\,d\alpha.
\label{eq:E23}
\end{equation}
We evaluate this integral on the same Gamma quadrature nodes as the count moments. Its value also equals $\E(W_{\mathrm{SB}}\mid a,b)$, but a mean co-clustering judgment does not determine the weight tail targeted below.

\subsection{The Dual-Anchor Objective and Selection Rule}
\label{subsec:E5-dual-anchor}

The count-calibrated prior $(a_K,b_K)$ matches two moments using two Gamma parameters. If its implied majority probability is unacceptable, reducing that probability generally changes the count moments. Dual-Anchor calibration makes this trade-off explicit. For fixed scales $s_\mu=\max(|\mu_K|,1)$ and $s_v=\max(\sigma_K^2,1)$, define
\begin{equation}
\begin{aligned}
D_K(a,b)&=\left\{\frac{M_1(a,b)-\mu_K}{s_\mu}\right\}^2
+\left\{\frac{V(a,b)-\sigma_K^2}{s_v}\right\}^2,\\
D_w(a,b)&=\{S(t;a,b)-\delta\}^2,
\end{aligned}
\label{eq:E31}
\end{equation}
and minimize
\begin{equation}
\mathcal L_\lambda(a,b)=\lambda D_K(a,b)+(1-\lambda)D_w(a,b),
\qquad0<\lambda\le1.
\label{eq:E32}
\end{equation}
The scales depend only on the stated targets and remain fixed across the grid. The parameter $\lambda$ weights the discrepancies; it does not impose the weight target as a constraint.

We fix the policy before posterior fitting: threshold $t=0.5$, trigger $\eta=0.40$, soft target $\delta=0.25$, and
$\Lambda=\{0.01,\ldots,0.20,0.25,0.30,0.40,\ldots,0.90,1.00\}$.
The trigger determines whether refinement is used and which fitted grid points are eligible. The distinct soft target enters the loss. The procedure is:

\begin{enumerate}[leftmargin=2em,itemsep=5pt,topsep=4pt]
\item Retain the verified count-calibrated prior if $S(0.5;a_K,b_K)\le0.40$.
\item Otherwise, minimize~\eqref{eq:E32} at all 29 grid values in a bounded log-parameter domain. Use bounded L-BFGS-B, with the documented bounded Nelder--Mead fallback if convergence or fresh-objective checks fail. A verified solution at the preceding larger $\lambda$ may provide a warm start. Keep every grid point in the diagnostic output, including unsuccessful fits.
\item Independently recompute each candidate's objective, count moments and weight tail at a quadrature order at least twice the fitting order. Among verified finite interior solutions with $S(0.5;a,b)\le0.40$, select the largest $\lambda$. Exclude unverified or approximate solutions, solver-boundary fits and hyperpriors close to a point mass. If no candidate is eligible, report the unresolved conflict and fit no model.
\item Separately minimize $D_K$ subject to $S(0.5;a,b)\le0.25$. Report this hard-bound sensitivity with its constraint residual, constraint activity, solver-boundary status and $\mathrm{CV}(\alpha)=1/\sqrt a$. A coefficient of variation below 0.01 is flagged as close to a point mass.
\end{enumerate}

A selected soft solution can pass the 0.40 eligibility rule while missing the 0.25 target. The hard analysis enforces the inequality, possibly at a larger cost to count fit. An active constraint is distinct from reaching the edge of the solver domain. The numerical search does not establish a global optimum; reporting the complete grid shows the available verified trade-offs. Appendix~G gives the worked frontier and all 29 grid values.

\subsection{Reporting the Specification}
\label{subsec:E6-guidelines}
\label{subsubsec:E62-reporting}

The following information allows a reader to reconstruct the prior and assess the compromise. It extends the eight-item checklist in main-text \S3.5.
\begin{enumerate}[leftmargin=2em,itemsep=4pt,topsep=4pt]
\item State $J$, why it is fixed, the substantive count judgment and how it was converted to $(\mu_K,\sigma_K^2)$.
\item Give the shape--rate hyperparameters, software version, fitting and verification quadrature orders, convergence tolerance and verification result.
\item Report the induced count mean, variance, median and central interval; both size-biased tails at 0.5 and 0.9; and the corresponding largest-weight bounds. The expected co-clustering probability is optional.
\item If refinement was triggered, give the trigger, soft target, grid, selection rule, selected $\lambda$, changes in count moments and whether the soft target was attained. Report convergence, usability, verification and boundary classification with the full grid.
\item Give the hard-bound result separately, including the constraint residual and activity, the Gamma coefficient of variation and solver-boundary classification.
\item Describe at least one posterior sensitivity analysis and identify the analysis code when it is made available.
\end{enumerate}

\noindent\textit{Example statement.}
\begin{quote}
For 50 sites, we expected five clusters with moderate uncertainty, giving target mean 5 and variance 10. The TSMM shape and rate were 1.4082 and 1.0770, with median 4 and a 90 percent count interval of $[1,11]$. The size-biased tails at 0.5 and 0.9 were 0.497 and 0.200. The majority probability exceeded our prespecified trigger of 0.40. Over the fixed grid, the largest eligible value was $\lambda=0.30$, giving the Dual-Anchor prior $\mathrm{Gamma}(2.3158,1.4204)$. Its count mean and variance were 5.884 and 9.738, and its tails were 0.398 and 0.107. It passed the trigger rule but did not attain the 0.25 soft target. The separate hard-bound analysis gave $\mathrm{Gamma}(5.2745,2.3059)$, with count mean 7.488, variance 9.051 and majority probability 0.250 at the active constraint. We used \texttt{DPprior} 2.0.0, fitting with 160 quadrature nodes and verifying with 320. The full grid and diagnostics are reported in Appendix~G.
\end{quote}

The two size-biased tails and largest-weight bounds remain useful when no refinement is triggered: they show the weight concentration implied by the count judgment \citep{vicentini2025prior}.

\section{Extended Simulation Study Results}
\label{app:simulation}

\subsection{Overview}
\label{subsec:F1-purpose}

This appendix gives the design, convergence criteria, and results supporting
Section~4. We planned 200 replications in each of 322 conditions, for 64{,}400
fits in two batches. The conditions form targeted comparisons rather than a
full factorial design. Fits under different priors use the same generated
data, with distinct chain seeds. Because the generating components overlap,
$K^*$ is a reference count rather than a structure that must be recoverable.

\subsection{Simulation Design}
\label{subsec:F2-design}

\subsubsection{Design Grid}
\label{subsubsec:F21-grid}

\Cref{tab:F1-design} lists the simulation suites. We plan 200 replications
per condition.

\begin{table}[H]
\centering
\caption{Simulation Suites}
\label{tab:F1-design}
\small
\begin{tabular}{@{}lp{0.50\textwidth}rr@{}}
\toprule
Suite & Purpose & Cells & Fits \\
\midrule
Initial set & Prespecified count and weight sensitivity comparisons and design extensions under $N(0,100)$ & 145 & 29{,}000 \\
Matched $N(0,0.01)$ & The 45 conditions for the grand-mean-prior comparison & 45 & 9{,}000 \\
Main $N(0,1)$ & The 45 conditions used in main-text Table~3 and Figure~4 & 45 & 9{,}000 \\
Central dominant & Location sensitivity with an exactly 60\% component & 16 & 3{,}200 \\
Additional cells & Remaining misspecification, design-size and rotating-dominant-location comparisons under $N(0,100)$ & 71 & 14{,}200 \\
\addlinespace
Total & All planned conditions & 322 & 64{,}400 \\
\bottomrule
\end{tabular}
\end{table}

The comparisons cover $K^*\in\{5,10,30\}$,
$I\in\{0.2,0.5,0.8\}$, $J\in\{25,50,100,200\}$ where specified, and five priors: Vague, TSMM, the selected
Dual-Anchor policy, SSI-Weight, and DORO-Unif as an
\emph{external-reference} projection. Its discrepancy exceeds the calibration tolerance, so its results are
reported separately from successfully calibrated priors.

\subsubsection{Data-Generating Process (DGP)}
\label{subsubsec:F22-dgp}

We simulate multisite effect estimates using the normal sampling model of \citet{rubin1981estimation}:
\begin{equation}
\hat\tau_j \mid \tau_j, \widehat{se}_j^2 \sim \mathcal{N}(\tau_j, \widehat{se}_j^2),
\quad j=1,\ldots,J.
\label{eq:F1}
\end{equation}
We generate true site effects from a finite mixture with exactly $K^*$ components:
\begin{equation}
\tau_j \mid z_j \sim \mathcal{N}(\mu_{z_j}, s_{\text{within}}^2),
\qquad z_j \in \{1,\ldots,K^*\},
\label{eq:F2}
\end{equation}
where cluster means are evenly spaced on $[-d, d]$ (with $d=2.5$). We then center and rescale the realized $\{\tau_j\}$ to enforce $\Var(\tau_j)=\sigma_\tau^2$ exactly. For each replication, we compute $\tau_j^{\text{final}} = \sigma_\tau \cdot (\tau_j^{\text{raw}} - \bar{\tau}^{\text{raw}})/\mathrm{SD}(\tau^{\text{raw}})$.
Standardizing fixes the between-site variance at $\sigma_\tau^2$, so random variation in effect magnitudes does not confound the informativeness comparison.

For the equal-weight design, we assign integer component counts that differ by at most one, then shuffle site labels. For the dominant design, exactly $0.60J$ sites occupy one component and the remainder are divided as evenly as possible among the other occupied components. The primary dominant location is balanced across component positions in blocks of $K^*$ replications; the separate location sensitivity fixes the dominant component at the central component mean.

\subsubsection{Fixed Design Parameters}
\label{subsubsec:F23-fixed}

The remaining generating parameters are in \Cref{tab:F2-fixed}.

\begin{table}[H]
\centering
\caption{Fixed DGP and Calibration Parameters}
\label{tab:F2-fixed}
\small
\begin{tabular}{@{}p{0.17\textwidth}p{0.15\textwidth}p{0.20\textwidth}p{0.38\textwidth}@{}}
\toprule
\raggedright Component & \raggedright Parameter & \raggedright Value & \raggedright Role \tabularnewline
\midrule
\raggedright Number of sites & \raggedright $J$ & \raggedright 25, 50, 100, 200 & \raggedright Manifest-dependent finite-design regime \tabularnewline
\raggedright Between-site SD & \raggedright $\sigma_\tau$ & \raggedright 0.25 & \raggedright Sets $\Var(\tau)$ after rescaling \tabularnewline
\raggedright Within-cluster SD & \raggedright $s_{\text{within}}$ & \raggedright 0.15 & \raggedright Controls component overlap \tabularnewline
\raggedright Cluster separation span & \raggedright $d$ & \raggedright 2.5 & \raggedright Means span $[-d, d]$ \tabularnewline
\raggedright SE scaling & \raggedright $\kappa$ & \raggedright 4 & \raggedright Defines $\widehat{se}_j^2 = \kappa / n_j$ \tabularnewline
\raggedright Site-size CV & \raggedright $\mathrm{CV}(n_j)$ & \raggedright 0.5 & \raggedright Induces heterogeneous SEs \tabularnewline
\raggedright Minimum site size & \raggedright $n_{\min}$ & \raggedright 5 & \raggedright Avoids degenerate SEs \tabularnewline
\raggedright Component allocation & \raggedright --- & \raggedright exact integer counts & \raggedright Equal or exactly 60\% dominant, as declared by the cell \tabularnewline
\bottomrule
\end{tabular}
\end{table}

\subsubsection{Informativeness Index and Calibration Targets}
\label{subsubsec:F24-informativeness}

Following \citet{lee2025improving}, we define informativeness as
\begin{equation}
I \equiv \frac{\sigma_\tau^2}{\sigma_\tau^2 + \mathrm{GM}(\widehat{se}_j^2)},
\qquad \mathrm{GM}(\widehat{se}_j^2)=\exp\!\left(\frac{1}{J}\sum_{j=1}^J\log \widehat{se}_j^2\right).
\label{eq:F3}
\end{equation}
We set $\widehat{se}_j^2 = \kappa/n_j$ with $\kappa=4$, and generate site sizes from a Gamma distribution: $n_j \sim \mathrm{Gamma}(\text{shape}=1/\mathrm{CV}^2,\;\text{rate}=\text{shape}/\bar{n})$, truncated below at $n_{\min}=5$ and rounded to integers. For each target informativeness level $I\in\{0.2,0.5,0.8\}$, we calibrate the mean site size $\bar{n}$ by one-dimensional root-finding on Monte Carlo estimates of $\E[I]$.

\Cref{tab:F3-informativeness} reports the calibration targets used in the completed run.

\begin{table}[H]
\centering
\caption{Informativeness Calibration Inputs}
\label{tab:F3-informativeness}
\small
\begin{tabular}{@{}rrl@{}}
\toprule
Target $I$ & Calibrated $\bar{n}_j$ & Status \\
\midrule
0.2 & 17.401486  & Frozen input \\
0.5 & 72.897896  & Frozen input \\
0.8 & 278.423775 & Frozen input \\
\bottomrule
\end{tabular}
\end{table}

\subsubsection{Replications and Reproducibility}
\label{subsubsec:F25-seeds}

We use common generated data for paired prior comparisons and distinct
random seeds for each chain and rerun. The code records these seeds so that
each replication can be reproduced independently.

\subsubsection{Extended Designs}
\label{subsubsec:F26-extensions}

We extend the design in four ways. Section~\ref{subsec:F11-extended}
reports the full comparisons and the number of fits retained in each cell.
\begin{enumerate}[nosep, label=(\alph*)]
\item \textit{Elicited-mean misspecification}: $r=\mu_K/K^*\in\{0.5,1,2\}$. The grid holds 56 of its 60 conceptual cells; the remaining four (Dual-Anchor and DORO-Unif at $\mu_K=2.5$) were rejected because calibration failed.
\item \textit{Design size}: $J\in\{25,50,100,200\}$. The five-prior grid covers all 80 prior-by-$K^*$-by-$I$-by-$J$ combinations, giving 20 complete trajectories.
\item \textit{Dominant component}: exactly 60 percent of sites, with balanced-random and central component-location rules. The four-prior location grid covers all 32 cells, so every condition has both location rules.
\item \textit{Trade-off weight}: the selected Dual-Anchor policy is compared with fixed $\lambda\in\{0.5,0.9\}$ for $K^*=5$, $I\in\{0.2,0.8\}$, and $N(0,100)$.
\end{enumerate}
The fixed-$\lambda$ cells are sensitivity analyses; they do not replace the
selected policy.

Calibration rejected 14 planned cells (2{,}800 fits). For Dual-Anchor at
$\mu_K=2.5$, the high-confidence search reached the solver boundary and the
medium-confidence search approached a point-mass limit (three
informativeness levels each). At $\mu_K=60$ and low confidence, count
calibration returned approximate or unverified results for TSMM and
Dual-Anchor (six cells). DORO-Unif at $\mu_K=2.5$ failed multistart
verification at $I=0.2$ and $0.8$ (two cells). These are failures of the
specified calibration procedure, not proofs that no suitable prior exists.
The four missing cells in the medium-confidence misspecification comparison
are the Dual-Anchor and DORO-Unif cells at $\mu_K=2.5$ and $I\in\{0.2,0.8\}$.

\subsection{Hyperprior Methods and Exact Parameters}
\label{subsec:F3-hyperpriors}

\subsubsection{Methods Compared}
\label{subsubsec:F31-methods}

We compare four primary $\alpha\sim\mathrm{Gamma}(a,b)$ hyperpriors:
\begin{enumerate}[label=(\arabic*),nosep,leftmargin=2em]
\item \textit{Vague}: the fixed $\mathrm{Gamma}(1,1)$ baseline;
\item \textit{TSMM}: A2-MN calibration to $(\mu_K,\sigma_K^2)$;
\item \textit{Dual-Anchor}: the fixed-grid solution when the TSMM majority tail
exceeds 0.40, and TSMM otherwise; and
\item \textit{SSI-Weight}: the $J$-invariant solution to
$\Pr(W_{\mathrm{SB}}>0.5)=0.25$ and $\Pr(W_{\mathrm{SB}}>0.9)=0.05$.
\end{enumerate}
We verify each solution independently. The hard 0.25 inequality is a separate
sensitivity analysis.

DORO-Unif projects onto a uniform count target following
\citet{dorazio2009selecting}. For the retained fits, the projection is stable
across starting values but fails our KL-discrepancy and count-distribution checks. We therefore use
it only as an external reference.

\subsubsection{Confidence-to-Variance Mapping (TSMM / Dual-Anchor)}
\label{subsubsec:F32-vif}

We use a variance inflation factor (VIF) mapping:
\begin{equation}
\sigma_K^2 \equiv \Var(K_J) = \mathrm{VIF}\times(\mu_K-1),
\qquad \mathrm{VIF}\in\{1.5,\;2.5,\;5.0\}
\label{eq:F4}
\end{equation}
corresponding to high, medium, and low confidence, respectively. The main-text results use medium confidence unless stated otherwise.

\subsubsection{Exact \texorpdfstring{$(a,b)$}{(a,b)} Hyperparameters Used}
\label{subsubsec:F33-params}

\Cref{tab:F4-params} gives the fitted hyperparameters, count means and
size-biased majority probabilities. The selected trade-off weights are
shown for each confidence level.

\begin{table}[H]
\centering
\caption{Calibrated Priors and External References at $J=100$}
\label{tab:F4-params}
\small
\setlength{\tabcolsep}{3pt}
\begin{tabular}{@{}lrrrlrrr@{}}
\toprule
Method & VIF & $\mu_K$ & $\sigma_K^2$ & $(a,b)$ & $\E(K_J)$ &
$P(W_{\mathrm{SB}}{>}.5)$ & $\lambda$ \\
\midrule
Vague & --- & --- & --- & $(1.000,\,1.000)$ & 4.84 & 0.591 & --- \\
SSI-Weight & --- & --- & --- & $(2.179,\,0.780)$ & 10.05 & 0.250 & --- \\
TSMM & 1.5 & 5.0 & 6 & $(4.351,\,4.471)$ & 5.00 & 0.534 & --- \\
Dual-Anchor & 1.5 & 5.0 & 6 & $(13.733,\,10.045)$ & 6.39 & 0.400 & 0.19 \\
TSMM & 2.5 & 5.0 & 10 & $(1.666,\,1.646)$ & 5.00 & 0.557 & --- \\
Dual-Anchor & 2.5 & 5.0 & 10 & $(4.232,\,2.859)$ & 6.65 & 0.399 & 0.16 \\
TSMM & 5.0 & 5.0 & 20 & $(0.623,\,0.561)$ & 5.00 & 0.606 & --- \\
Dual-Anchor & 5.0 & 5.0 & 20 & $(1.841,\,1.070)$ & 7.17 & 0.399 & 0.12 \\
TSMM & 1.5 & 10.0 & 13.5 & $(7.185,\,2.728)$ & 10.00 & 0.197 & --- \\
TSMM & 2.5 & 10.0 & 22.5 & $(2.992,\,1.100)$ & 10.00 & 0.232 & --- \\
TSMM & 5.0 & 10.0 & 45 & $(1.179,\,0.400)$ & 10.00 & 0.306 & --- \\
TSMM & 1.5 & 30.0 & 43.5 & $(10.931,\,0.750)$ & 30.00 & 0.001 & --- \\
TSMM & 2.5 & 30.0 & 72.5 & $(5.178,\,0.344)$ & 30.00 & 0.003 & --- \\
TSMM & 5.0 & 30.0 & 145 & $(2.198,\,0.135)$ & 30.00 & 0.018 & --- \\
DORO-Unif & --- & 5.0 & --- & $(2.434,\,2.461)$ & 4.99 & 0.547 & --- \\
DORO-Unif & --- & 10.0 & --- & $(1.531,\,0.543)$ & 9.91 & 0.284 & --- \\
DORO-Unif & --- & 30.0 & --- & $(0.943,\,0.050)$ & 29.50 & 0.078 & --- \\

\bottomrule
\end{tabular}
\end{table}

\textit{Notes.} The soft row is selected only when verified TSMM has
$\Pr(W_{\mathrm{SB}}>0.5)>0.40$; it is the largest verified interior
$\lambda$ with tail at most 0.40. Thus the displayed $J=100,\mu_K=5$
values use $\lambda=0.19,0.16,0.12$ for VIF $1.5,2.5,5$, respectively.
For $\mu_K\in\{10,30\}$ the trigger is inactive, so Dual-Anchor equals TSMM
and duplicate rows are omitted. DORO-Unif fails the calibration checks
and remains an external reference.

\subsection{Inference Model and MCMC Configuration}
\label{subsec:F4-mcmc}

In each replication we fit a DP mixture model with a CRP prior on cluster assignments:
\begin{equation}
\hat\tau_j \sim \mathcal{N}(\tau_j, \widehat{se}_j^2),\qquad
\tau_j \sim \mathcal{N}(\theta_{z_j}, \sigma^2_{z_j}),\qquad
z \sim \mathrm{CRP}(\alpha),
\label{eq:F5}
\end{equation}
with cluster-specific means and variances $\theta_k \sim \mathcal{N}(\tau, \sigma_\tau^2)$ and $\sigma_k^2 \sim \mathrm{InvGamma}(s_1, s_2)$, and hyperpriors
\begin{equation}
\alpha \sim \mathrm{Gamma}(a,b),\quad
\tau \sim \mathcal{N}(0,\, v_\tau),\quad
\sigma_\tau^2 \sim \mathrm{Unif}(0,10),\quad
s_1, s_2 \sim \mathrm{InvGamma}(2,1),
\label{eq:F6}
\end{equation}
where $v_\tau=1$ in the main comparison and $v_\tau\in\{0.01,100\}$
in the paired sensitivities. The likelihood and remaining priors are fixed.
We pass the variance explicitly to NIMBLE. In the original simulation,
\texttt{dnorm(0,100)} specified precision 100, hence variance 0.01,
although the manuscript described variance 100. We corrected this discrepancy
and recomputed all three branches; affected earlier summaries are not reused.

We fit each model using four overdispersed chains in NIMBLE
\citep{devalpine2017programming}. Thinning is 1. Attempt~1 runs 12{,}000
iterations per chain and discards the first 4{,}000. A fit that fails the
prespecified convergence criteria is rerun once, from new chain seeds, for 120{,}000 iterations per chain with the first 60{,}000
discarded. We allow no third attempt. The convergence criteria require
$\widehat R\leq1.01$, bulk ESS at least 400, and tail ESS at least 400 for all
11 monitored variables, including $\alpha$, $K_J$, the grand mean, base-scale
parameters, active-atom summaries, and outcome summaries.
These diagnostics follow \citet{vehtari2021rank}.
For each MCMC draw $t$, the occupied cluster count is computed as
\begin{equation}
K_J^{(t)} = \left|\{z_1^{(t)},\ldots,z_J^{(t)}\}\right|.
\label{eq:F7}
\end{equation}
Posterior summaries of $K_J$ (mean, mode, quantiles) are computed from $\{K_J^{(t)}\}$.

\subsection{Outcome Metrics}
\label{subsec:F5-metrics}

For each accepted fit, we retain the posterior mean and standard deviation
of $K_J$, its 2.5th and 97.5th percentiles,
$\Pr(K_J=1\mid\text{data})$, $\Pr(K_J\leq2\mid\text{data})$, the posterior
mean of $\alpha$, aggregate site-effect RMSE, and coverage of $K^*$ by the
reported interval. Cell summaries give between-replication variation, bias
relative to $K^*$, and the numbers of accepted, runnable, and planned fits.

\subsection{Completed Fits and Convergence}
\label{subsec:F6-results}

We retain fits that pass the criteria in Section~\ref{subsec:F4-mcmc}.
Of 64{,}400 planned fits, 2{,}800 were not attempted because prior calibration
failed, and 4{,}449 failed the final convergence check. The remaining 57{,}151
fits enter the summaries. Table~\ref{tab:F6-population} distinguishes these
exclusions and gives the per-cell denominators for the main comparison.

\begin{table}[H]
\centering
\caption{Simulation fit accounting and main-comparison denominators}
\label{tab:F6-population}
\small
\begin{tabular}{@{}lr@{}}
\toprule
Disposition & Fits \\
\midrule
Planned & 64{,}400 \\
Calibration failure; not attempted & 2{,}800 \\
Final convergence failure & 4{,}449 \\
Converged; included in summaries & 57{,}151 \\
\bottomrule
\end{tabular}

\medskip
\noindent Main comparison: converged fits out of 200 attempts under $N(0,1)$.

\medskip
\begin{tabular}{@{}rrccccc@{}}
\toprule
$K^*$ & $I$ & Vague & DORO-Unif & TSMM & Dual-Anchor & SSI-Weight \\
\midrule
5 & 0.2 & 200/200 & 200/200 & 200/200 & 200/200 & 200/200 \\
5 & 0.5 & 200/200 & 200/200 & 200/200 & 200/200 & 200/200 \\
5 & 0.8 & 183/200 & 182/200 & 185/200 & 185/200 & 182/200 \\
10 & 0.2 & 200/200 & 200/200 & 200/200 & 200/200 & 200/200 \\
10 & 0.5 & 200/200 & 200/200 & 199/200 & 200/200 & 200/200 \\
10 & 0.8 & 185/200 & 178/200 & 188/200 & 183/200 & 183/200 \\
30 & 0.2 & 200/200 & 195/200 & 167/200 & 169/200 & 200/200 \\
30 & 0.5 & 200/200 & 189/200 & 196/200 & 193/200 & 199/200 \\
30 & 0.8 & 188/200 & 155/200 & 168/200 & 170/200 & 185/200 \\

\bottomrule
\end{tabular}

\medskip
\parbox{\textwidth}{\small\textit{Note.} The upper panel covers all simulation
conditions. The lower panel gives the exact denominators for main-text
Figure~4: 8{,}607 converged fits out of 9{,}000 attempts. Tables in
Section~\ref{subsec:F11-extended} give the denominators for the extended
comparisons. Calibration failures are separate from convergence failures.}

\end{table}

The two batches contribute 43{,}559 and 13{,}592 converged fits. The
convergence failure rates among attempted fits are 8.5 and 2.9 percent,
respectively; the second batch concentrates on $K^*\leq10$. All reported
tables retain the convergence exclusions and give the number of fits
behind each result.

\subsection{Primary Results and Sensitivity to the Grand-Mean Prior}
\label{subsec:F12-original}

The primary $N(0,1)$ analysis comprises 45 conditions and 8{,}607 accepted
fits; the main-text table reports the 15 low-information conditions. DORO-Unif
serves only as an external reference. We compare the three grand-mean priors,
$N(0,0.01)$, $N(0,1)$, and $N(0,100)$, by pairing fits to the same generated
data. Each of the 45 conditions has at least 111 accepted pairs.

Averaged over the 45 conditions, changing the grand-mean prior from
$N(0,0.01)$ to $N(0,1)$ changes posterior mean $K_J$ by $-1.61$ and posterior
$\Pr(K_J=1\mid\text{data})$ by $+0.157$. Changing it from $N(0,0.01)$ to
$N(0,100)$ gives corresponding averages of $-2.08$ and $+0.249$. All 45
condition-specific posterior-mean-$K_J$ contrasts have those directions.
Average paired changes in aggregate site-effect RMSE are approximately
$2.0\times10^{-4}$ and $1.2\times10^{-4}$, respectively. Thus the grand-mean prior substantially affects posterior counts in this
model, while aggregate site-effect RMSE changes little. The paired comparisons use only fits that converged under both priors.

\Cref{tab:F-main-full} is the full form of main-text Table~3: it adds the
number of converged fits behind every row and the mean width of the 95 percent
posterior interval for $K_J$, which the main-text table omits.
\Cref{fig:F-grandmean} shows the fit-paired grand-mean-prior contrasts for all
45 headline conditions; Section~4.3 reports the average contrasts.

\begin{table}[H]
\centering
\caption{Posterior cluster-count summaries at $I = 0.2$ under five hyperpriors, with converged-fit counts and interval widths (full form of main-text Table~3).}
\label{tab:F-main-full}
\begingroup
\setstretch{1}
\fontsize{10}{12}\selectfont
\renewcommand{\arraystretch}{1.18}
\setlength{\tabcolsep}{3pt}
\begin{tabular}{@{}clccccccc@{}}
\toprule
$K^*$ & Hyperprior & $n/N$ & \shortstack{Mean $K_J$\\(SD)} & Bias & RMSE & \shortstack{Collapse\\prob.} & Coverage & \shortstack{Interval\\width} \\
\midrule
5 & Vague & 200/200 & 1.50 (0.11) & $-3.50$ & 3.50 & 0.725 & 0.535 & 3.60 \\
 & DORO-Unif\textsuperscript{$\dagger$} & 200/200 & 1.94 (0.17) & $-3.06$ & 3.07 & 0.536 & 1.000 & 5.07 \\
 & TSMM & 200/200 & 1.75 (0.14) & $-3.25$ & 3.25 & 0.616 & 0.965 & 4.57 \\
 & Dual-Anchor & 200/200 & 3.00 (0.27) & $-2.00$ & 2.02 & 0.315 & 1.000 & 8.29 \\
 & SSI-Weight & 200/200 & 3.41 (0.37) & $-1.59$ & 1.63 & 0.383 & 1.000 & 12.79 \\
\midrule
10 & Vague & 200/200 & 1.51 (0.12) & $-8.49$ & 8.49 & 0.723 & 0.000 & 3.67 \\
 & DORO-Unif\textsuperscript{$\dagger$} & 200/200 & 2.60 (0.31) & $-7.40$ & 7.40 & 0.516 & 0.980 & 10.85 \\
 & TSMM & 200/200 & 4.26 (0.40) & $-5.74$ & 5.76 & 0.269 & 1.000 & 13.70 \\
 & Dual-Anchor & 200/200 & 4.26 (0.39) & $-5.74$ & 5.75 & 0.269 & 1.000 & 13.68 \\
 & SSI-Weight & 200/200 & 3.42 (0.38) & $-6.58$ & 6.59 & 0.382 & 1.000 & 12.83 \\
\midrule
30 & Vague & 200/200 & 1.51 (0.13) & $-28.49$ & 28.49 & 0.723 & 0.000 & 3.70 \\
 & DORO-Unif\textsuperscript{$\dagger$} & 195/200 & 3.57 (0.57) & $-26.43$ & 26.44 & 0.600 & 0.097 & 24.89 \\
 & TSMM & 167/200 & 25.17 (0.32) & $-4.83$ & 4.84 & 0.001 & 1.000 & 32.94 \\
 & Dual-Anchor & 169/200 & 25.18 (0.31) & $-4.82$ & 4.83 & 0.001 & 1.000 & 32.92 \\
 & SSI-Weight & 200/200 & 3.39 (0.41) & $-26.61$ & 26.61 & 0.386 & 0.000 & 12.79 \\

\bottomrule
\end{tabular}
\endgroup
\end{table}

\noindent\textit{Note.} Entries are means (between-replication SDs) over the converged fits of 200 replications per cell; $n/N$ is the number of converged fits over the number attempted. Coverage is the proportion of converged fits whose 95 percent posterior interval for $K_J$ contains $K^*$, and CI width is the mean width of that interval. $^{\dagger}$DORO-Unif is an external reference only.

\begin{figure}[H]
\centering
\includegraphics[width=\textwidth]{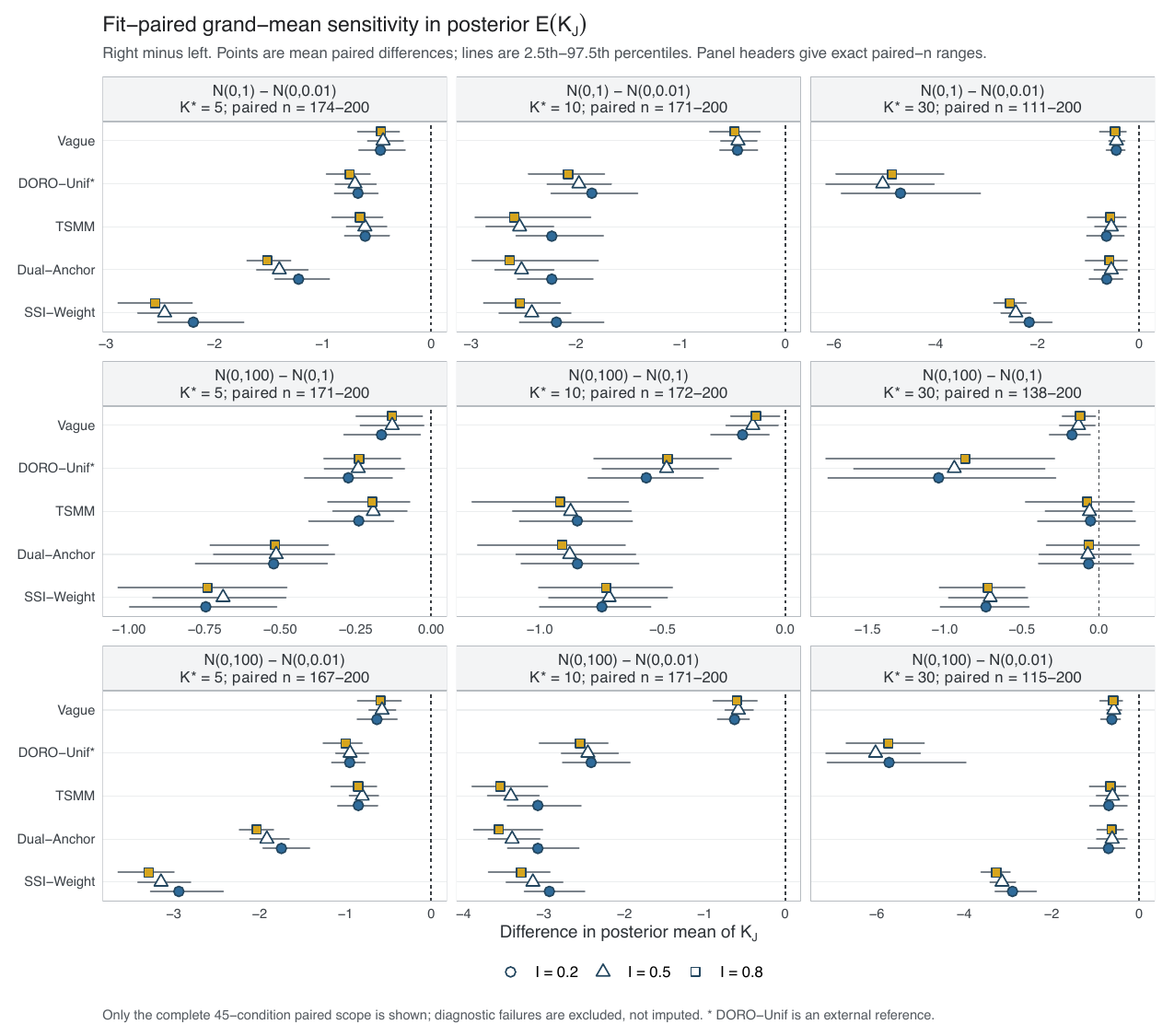}
\caption{Paired changes in posterior mean $K_J$ for all 45 main-comparison
conditions. Columns distinguish $K^*=5,10,30$; rows show the three
changes in grand-mean prior. Point shapes distinguish $I=0.2,0.5,0.8$.
Each point is the mean paired difference, with its 2.5--97.5 percentile
range over shared replications (at least 111 pairs per condition).
Panel headings specify the direction of each contrast.}
\label{fig:F-grandmean}
\end{figure}

\subsection{Extended Designs: Misspecified Elicitation, Design Size, and Unequal Weights}
\label{subsec:F11-extended}

\makeatletter\let\primitiveinput=\@@input\makeatother
We extend the simulation to examine misspecified count judgments, design
size, unequal component sizes, and the trade-off weight. Each cell reports
means over fits that passed the four-chain convergence criteria, with the
number retained shown as $n/200$. We do not impute excluded fits. We calibrate
informativeness separately for each $J$, use equal integer component counts
unless stated otherwise, and assign the grand mean an $N(0,100)$ prior.
Calibration failed before fitting in four planned cells (Dual-Anchor and
DORO-Unif at $\mu_K=2.5$); these cells are omitted. Aggregate $\tau$ RMSE
compares posterior mean site effects with their true values and can conceal
changes at individual sites.

\subsubsection{Misspecified Elicitation}
\label{subsubsec:F111-misspec}

The grid crosses $K^*\in\{5,10,30\}$ and $I\in\{0.2,0.8\}$ with
$r=\mu_K/K^*\in\{0.5,1,2\}$ at $J=100$. Twelve trajectories are complete: all
six medium-confidence ($\mathrm{VIF}=2.5$) TSMM trajectories, four
medium-confidence Dual-Anchor trajectories ($K^*\in\{10,30\}$), and two
low-confidence ($\mathrm{VIF}=5$) Dual-Anchor trajectories. The medium-confidence Dual-Anchor trajectories at $K^*=5$ are incomplete
because calibration failed at $r=0.5$, where $\mu_K=2.5$.
Table~\ref{tab:F11-A} reports all 36 cells; the main text plots the
medium-confidence subset.

\begin{table}[p]
\centering
\caption{Completed Elicited-Mean Misspecification Grid}
\label{tab:F11-A}
\begingroup
\setstretch{1}
\fontsize{10}{12}\selectfont
\renewcommand{\arraystretch}{1.08}
\setlength{\tabcolsep}{4.5pt}
\begin{tabular}{@{}rrrrcrrrrr@{}}
\toprule
$K^*$ & $I$ & VIF & $r$ & $\lambda$ & $n/200$ &
\shortstack{Mean\\$K_J$} & \shortstack{Collapse\\prob.} & \shortstack{$\tau$\\RMSE} & Coverage \\
\midrule
\multicolumn{10}{@{}l}{\textit{TSMM} (VIF $=2.5$)} \\[3pt]
5 & 0.2 & 2.5 & 0.5 & --- & 200/200 & 1.156 & 0.882 & 0.23403 & 0.000 \\
5 & 0.2 & 2.5 & 1.0 & --- & 200/200 & 1.510 & 0.709 & 0.23444 & 0.550 \\
5 & 0.2 & 2.5 & 2.0 & --- & 200/200 & 3.410 & 0.391 & 0.23559 & 1.000 \\
\addlinespace[2pt]
5 & 0.8 & 2.5 & 0.5 & --- & 182/200 & 1.095 & 0.924 & 0.11894 & 0.000 \\
5 & 0.8 & 2.5 & 1.0 & --- & 186/200 & 1.347 & 0.795 & 0.11868 & 0.113 \\
5 & 0.8 & 2.5 & 2.0 & --- & 180/200 & 3.113 & 0.496 & 0.11878 & 1.000 \\
\addlinespace[2pt]
10 & 0.2 & 2.5 & 0.5 & --- & 200/200 & 1.513 & 0.709 & 0.23404 & 0.000 \\
10 & 0.2 & 2.5 & 1.0 & --- & 200/200 & 3.410 & 0.391 & 0.23517 & 1.000 \\
10 & 0.2 & 2.5 & 2.0 & --- & 194/200 & 15.033 & 0.029 & 0.23669 & 1.000 \\
\addlinespace[2pt]
10 & 0.8 & 2.5 & 0.5 & --- & 182/200 & 1.346 & 0.797 & 0.11769 & 0.000 \\
10 & 0.8 & 2.5 & 1.0 & --- & 183/200 & 3.099 & 0.495 & 0.11750 & 1.000 \\
10 & 0.8 & 2.5 & 2.0 & --- & 163/200 & 15.140 & 0.039 & 0.11742 & 1.000 \\
\addlinespace[2pt]
30 & 0.2 & 2.5 & 0.5 & --- & 198/200 & 8.526 & 0.137 & 0.23703 & 0.000 \\
30 & 0.2 & 2.5 & 1.0 & --- & 177/200 & 25.124 & 0.002 & 0.23790 & 1.000 \\
30 & 0.2 & 2.5 & 2.0 & --- & 156/200 & 47.428 & 0.000 & 0.23967 & 1.000 \\
\addlinespace[2pt]
30 & 0.8 & 2.5 & 0.5 & --- & 184/200 & 8.631 & 0.175 & 0.11682 & 0.000 \\
30 & 0.8 & 2.5 & 1.0 & --- & 168/200 & 24.784 & 0.002 & 0.11721 & 1.000 \\
30 & 0.8 & 2.5 & 2.0 & --- & 155/200 & 48.374 & 0.001 & 0.11757 & 1.000 \\
\addlinespace[5pt]
\multicolumn{10}{@{}l}{\textit{Dual-Anchor} (VIF $=2.5$)} \\[3pt]
10 & 0.2 & 2.5 & 0.5 & 0.16 & 200/200 & 2.478 & 0.432 & 0.23498 & 0.015 \\
10 & 0.2 & 2.5 & 1.0 & 1.00 & 200/200 & 3.412 & 0.390 & 0.23518 & 1.000 \\
10 & 0.2 & 2.5 & 2.0 & 1.00 & 193/200 & 15.035 & 0.029 & 0.23654 & 1.000 \\
\addlinespace[2pt]
10 & 0.8 & 2.5 & 0.5 & 0.16 & 187/200 & 2.191 & 0.545 & 0.11763 & 0.005 \\
10 & 0.8 & 2.5 & 1.0 & 1.00 & 182/200 & 3.078 & 0.498 & 0.11766 & 1.000 \\
10 & 0.8 & 2.5 & 2.0 & 1.00 & 169/200 & 15.126 & 0.037 & 0.11726 & 1.000 \\
\addlinespace[2pt]
30 & 0.2 & 2.5 & 0.5 & 1.00 & 200/200 & 8.519 & 0.138 & 0.23709 & 0.000 \\
30 & 0.2 & 2.5 & 1.0 & 1.00 & 171/200 & 25.115 & 0.002 & 0.23828 & 1.000 \\
30 & 0.2 & 2.5 & 2.0 & 1.00 & 152/200 & 47.363 & 0.000 & 0.23944 & 1.000 \\
\addlinespace[2pt]
30 & 0.8 & 2.5 & 0.5 & 1.00 & 185/200 & 8.612 & 0.177 & 0.11736 & 0.000 \\
30 & 0.8 & 2.5 & 1.0 & 1.00 & 154/200 & 24.884 & 0.002 & 0.11787 & 1.000 \\
30 & 0.8 & 2.5 & 2.0 & 1.00 & 158/200 & 48.368 & 0.000 & 0.11797 & 1.000 \\
\addlinespace[5pt]
\multicolumn{10}{@{}l}{\textit{Dual-Anchor} (VIF $=5.0$)} \\[3pt]
5 & 0.8 & 5.0 & 0.5 & 0.02 & 184/200 & 3.351 & 0.272 & 0.11831 & 1.000 \\
5 & 0.8 & 5.0 & 1.0 & 0.12 & 182/200 & 1.605 & 0.733 & 0.11872 & 0.967 \\
5 & 0.8 & 5.0 & 2.0 & 1.00 & 179/200 & 1.522 & 0.798 & 0.11880 & 0.883 \\
\addlinespace[2pt]
10 & 0.8 & 5.0 & 0.5 & 0.12 & 180/200 & 1.615 & 0.730 & 0.11771 & 0.000 \\
10 & 0.8 & 5.0 & 1.0 & 1.00 & 182/200 & 1.534 & 0.795 & 0.11786 & 0.005 \\
10 & 0.8 & 5.0 & 2.0 & 1.00 & 127/200 & 4.739 & 0.534 & 0.11747 & 1.000 \\

\bottomrule
\end{tabular}
\endgroup
\end{table}

Within every complete TSMM trajectory, increasing the stated mean increases
posterior mean $K_J$ and reduces the collapse probability. Cells with the
same $\mu_K$ but different $K^*$ give nearly identical posterior summaries
(for example, $K^*=5$, $r=2$ and $K^*=10$, $r=1$). At these informativeness
levels, posterior counts therefore depend much more on the stated mean than
on the generating count.

The weight anchor affects only cells with small stated means. At $\mu_K=5$,
Dual-Anchor increases posterior mean $K_J$ and reduces collapse relative to
TSMM. At $\mu_K\ge 10$, the trigger does not fire and the priors coincide.
Low-confidence Dual-Anchor trajectories are not uniformly monotone because
the fixed policy selects different $\lambda$ values as the target changes.

\clearpage
\subsubsection{Design Size}
\label{subsubsec:F112-designsize}

The design-size grid crosses the five priors with
$K^*\in\{5,10\}$, $I\in\{0.2,0.8\}$, and $J\in\{25,50,100,200\}$, giving 20
complete trajectories. The stated targets use $r=1$ and medium confidence where applicable.
Tables~\ref{tab:F11-B} and \ref{tab:F11-C} report posterior mean $K_J$,
collapse probability, and the number of retained fits.

\begin{table}[H]
\centering
\caption{Design Size: Posterior Mean $K_J$ by $J$ (admitted $n$ in parentheses)}
\label{tab:F11-B}
\begingroup
\small
\setstretch{1}
\renewcommand{\arraystretch}{1.15}
\setlength{\tabcolsep}{3.5pt}
\begin{tabular}{@{}lrrrrrr@{}}
\toprule
Prior & $K^*$ & $I$ & $J=25$ & $J=50$ & $J=100$ & $J=200$ \\
\midrule
Vague & 5 & 0.2 & 1.32 (200) & 1.33 (200) & 1.34 (200) & 1.33 (200) \\
 & 5 & 0.8 & 1.18 (200) & 1.20 (200) & 1.21 (183) & 1.46 (60) \\
 & 10 & 0.2 & 1.32 (200) & 1.33 (200) & 1.33 (200) & 1.34 (200) \\
 & 10 & 0.8 & 1.19 (200) & 1.19 (200) & 1.22 (181) & 1.30 (78) \\
\addlinespace[5pt]
DORO-Unif & 5 & 0.2 & 1.65 (200) & 1.66 (200) & 1.66 (200) & 1.65 (200) \\
 & 5 & 0.8 & 1.39 (200) & 1.42 (200) & 1.46 (184) & 1.71 (67) \\
 & 10 & 0.2 & 1.66 (200) & 1.90 (200) & 2.04 (200) & 2.10 (199) \\
 & 10 & 0.8 & 1.38 (200) & 1.57 (199) & 1.78 (181) & 2.09 (57) \\
\addlinespace[5pt]
TSMM & 5 & 0.2 & 1.47 (200) & 1.50 (200) & 1.51 (200) & 1.50 (200) \\
 & 5 & 0.8 & 1.27 (200) & 1.31 (199) & 1.35 (186) & 1.61 (66) \\
 & 10 & 0.2 & 2.15 (200) & 2.97 (200) & 3.41 (200) & 3.57 (196) \\
 & 10 & 0.8 & 1.69 (200) & 2.41 (198) & 3.10 (183) & 3.68 (58) \\
\addlinespace[5pt]
Dual-Anchor & 5 & 0.2 & 1.53 (200) & 1.89 (200) & 2.47 (200) & 3.50 (199) \\
 & 5 & 0.8 & 1.31 (200) & 1.58 (200) & 2.18 (183) & 4.15 (93) \\
 & 10 & 0.2 & 2.14 (200) & 2.96 (200) & 3.41 (200) & 3.58 (196) \\
 & 10 & 0.8 & 1.69 (200) & 2.41 (198) & 3.08 (182) & 3.87 (53) \\
\addlinespace[5pt]
SSI-Weight & 5 & 0.2 & 2.17 (200) & 2.39 (200) & 2.67 (200) & 2.99 (191) \\
 & 5 & 0.8 & 1.73 (200) & 1.95 (199) & 2.31 (173) & 3.48 (30) \\
 & 10 & 0.2 & 2.16 (200) & 2.38 (200) & 2.67 (200) & 2.99 (196) \\
 & 10 & 0.8 & 1.74 (200) & 1.93 (200) & 2.34 (178) & 2.97 (45) \\

\bottomrule
\end{tabular}
\endgroup
\end{table}

\begin{table}[H]
\centering
\caption{Design Size: Posterior $\Pr(K_J=1\mid y)$ by $J$ (admitted $n$ in parentheses)}
\label{tab:F11-C}
\begingroup
\small
\setstretch{1}
\renewcommand{\arraystretch}{1.15}
\setlength{\tabcolsep}{3.5pt}
\begin{tabular}{@{}lrrrrrr@{}}
\toprule
Prior & $K^*$ & $I$ & $J=25$ & $J=50$ & $J=100$ & $J=200$ \\
\midrule
Vague & 5 & 0.2 & 0.789 (200) & 0.796 (200) & 0.797 (200) & 0.802 (200) \\
 & 5 & 0.8 & 0.866 (200) & 0.865 (200) & 0.866 (183) & 0.765 (60) \\
 & 10 & 0.2 & 0.789 (200) & 0.796 (200) & 0.796 (200) & 0.798 (200) \\
 & 10 & 0.8 & 0.864 (200) & 0.869 (200) & 0.864 (181) & 0.831 (78) \\
\addlinespace[5pt]
DORO-Unif & 5 & 0.2 & 0.652 (200) & 0.646 (200) & 0.638 (200) & 0.641 (200) \\
 & 5 & 0.8 & 0.769 (200) & 0.752 (200) & 0.739 (184) & 0.641 (67) \\
 & 10 & 0.2 & 0.739 (200) & 0.667 (200) & 0.633 (200) & 0.617 (199) \\
 & 10 & 0.8 & 0.833 (200) & 0.776 (199) & 0.732 (181) & 0.660 (57) \\
\addlinespace[5pt]
TSMM & 5 & 0.2 & 0.741 (200) & 0.718 (200) & 0.709 (200) & 0.708 (200) \\
 & 5 & 0.8 & 0.833 (200) & 0.810 (199) & 0.795 (186) & 0.685 (66) \\
 & 10 & 0.2 & 0.617 (200) & 0.459 (200) & 0.391 (200) & 0.363 (196) \\
 & 10 & 0.8 & 0.745 (200) & 0.598 (198) & 0.495 (183) & 0.418 (58) \\
\addlinespace[5pt]
Dual-Anchor & 5 & 0.2 & 0.716 (200) & 0.583 (200) & 0.434 (200) & 0.271 (199) \\
 & 5 & 0.8 & 0.815 (200) & 0.704 (200) & 0.546 (183) & 0.228 (93) \\
 & 10 & 0.2 & 0.617 (200) & 0.461 (200) & 0.390 (200) & 0.362 (196) \\
 & 10 & 0.8 & 0.745 (200) & 0.597 (198) & 0.498 (182) & 0.396 (53) \\
\addlinespace[5pt]
SSI-Weight & 5 & 0.2 & 0.514 (200) & 0.517 (200) & 0.511 (200) & 0.503 (191) \\
 & 5 & 0.8 & 0.660 (200) & 0.648 (199) & 0.627 (173) & 0.442 (30) \\
 & 10 & 0.2 & 0.518 (200) & 0.519 (200) & 0.509 (200) & 0.500 (196) \\
 & 10 & 0.8 & 0.657 (200) & 0.654 (200) & 0.621 (178) & 0.554 (45) \\

\bottomrule
\end{tabular}
\endgroup
\end{table}

The Vague prior remains collapsed at every design size. The Dual-Anchor
policy is the only prior whose posterior moves materially toward the
reference count at $K^*=5$; its selected $\lambda$ changes from 0.70 to 0.30,
0.16, and 0.09 across the four design sizes, so the trajectory combines changes in design size and the selected weight. At $K^*=10$ the trigger does not
fire and Dual-Anchor coincides with TSMM. SSI-Weight trajectories are
essentially identical across the two $K^*$ rows in these simulations.
Its prior calibration does not depend on $J$ or $K^*$. At $I=0.8$ and $J=200$, only 30--93 of 200 fits pass the convergence
criteria. The reported endpoints describe those subsets; we do not substitute
summaries that include failed fits.

\subsubsection{Dominant-Component Location}
\label{subsubsec:F113-unequal}

The DGP assigns exactly 60 of 100 sites to one component. The primary
balanced-random rule rotates this component across ordered component means;
the location sensitivity fixes it at the central mean. We compare all 16 prior-by-$K^*$-by-$I$ conditions under both rules
(Table~\ref{tab:F11-D}).

\begin{table}[H]
\centering
\caption{Completed Dominant-Component Location Grid}
\label{tab:F11-D}
\begingroup
\setstretch{1}
\fontsize{10}{12}\selectfont
\renewcommand{\arraystretch}{1.18}
\setlength{\tabcolsep}{3pt}
\begin{tabular}{@{}lrr rrr rrr@{}}
\toprule
& & & \multicolumn{3}{c}{Balanced random} & \multicolumn{3}{c}{Central} \\
\cmidrule(lr){4-6}\cmidrule(lr){7-9}
Prior & $K^*$ & $I$ & $n/200$ & \shortstack{Mean\\$K_J$} & \shortstack{Collapse\\prob.} &
$n/200$ & \shortstack{Mean\\$K_J$} & \shortstack{Collapse\\prob.} \\
\midrule
Vague & 5 & 0.2 & 200/200 & 1.335 & 0.797 & 200/200 & 1.347 & 0.792 \\
 & 5 & 0.8 & 163/200 & 2.255 & 0.306 & 193/200 & 1.267 & 0.837 \\
 & 10 & 0.2 & 200/200 & 1.343 & 0.793 & 200/200 & 1.350 & 0.791 \\
 & 10 & 0.8 & 178/200 & 2.383 & 0.250 & 184/200 & 1.512 & 0.728 \\
\addlinespace[5pt]
DORO-Unif & 5 & 0.2 & 200/200 & 1.661 & 0.639 & 200/200 & 1.670 & 0.634 \\
 & 5 & 0.8 & 182/200 & 2.612 & 0.207 & 189/200 & 1.548 & 0.698 \\
 & 10 & 0.2 & 200/200 & 2.046 & 0.630 & 200/200 & 2.085 & 0.624 \\
 & 10 & 0.8 & 181/200 & 3.223 & 0.171 & 183/200 & 2.424 & 0.544 \\
\addlinespace[5pt]
TSMM & 5 & 0.2 & 200/200 & 1.516 & 0.707 & 200/200 & 1.528 & 0.702 \\
 & 5 & 0.8 & 173/200 & 2.466 & 0.247 & 189/200 & 1.413 & 0.763 \\
 & 10 & 0.2 & 200/200 & 3.413 & 0.388 & 200/200 & 3.456 & 0.382 \\
 & 10 & 0.8 & 181/200 & 4.284 & 0.092 & 178/200 & 3.899 & 0.309 \\
\addlinespace[5pt]
Dual-Anchor & 5 & 0.2 & 200/200 & 2.460 & 0.434 & 200/200 & 2.488 & 0.427 \\
 & 5 & 0.8 & 178/200 & 3.394 & 0.131 & 187/200 & 2.330 & 0.496 \\
 & 10 & 0.2 & 200/200 & 3.410 & 0.388 & 200/200 & 3.452 & 0.382 \\
 & 10 & 0.8 & 178/200 & 4.251 & 0.090 & 175/200 & 3.923 & 0.307 \\

\bottomrule
\end{tabular}
\endgroup
\end{table}

At $I=0.2$, component location is immaterial: the absolute difference in
posterior mean $K_J$ between the two rules is at most 0.04 in every
condition. At $I=0.8$, central placement lowers posterior mean $K_J$ by 0.33
to 1.06 and raises the collapse probability sharply for every prior. Thus, even at the higher
informativeness level, placing the dominant component at the center shifts
the posterior toward fewer occupied components. 

\subsubsection{Sensitivity to the Trade-Off Weight}
\label{subsubsec:F114-lambda}

We compare the selected $\lambda$ with fixed values at $J=100$, $K^*=5$,
medium confidence, equal counts, and grand-mean prior $N(0,100)$.

\begin{table}[H]
\centering
\caption{Selected and Fixed Trade-Off Weights}
\label{tab:F11-E}
\small
\setlength{\tabcolsep}{4pt}
\begin{tabular}{@{}rclrrrr@{}}
\toprule
$I$ & $\lambda$ & Role & $n/200$ & $\E(K_J\mid y)$ &
$\Pr(K_J=1\mid y)$ & $\tau$ RMSE \\
\midrule
0.2 & 0.16 & Selected policy & 200/200 & 2.474 & 0.434 & 0.23533 \\
0.2 & 0.50 & Fixed audit     & 200/200 & 1.761 & 0.617 & 0.23469 \\
0.2 & 0.90 & Fixed audit     & 200/200 & 1.539 & 0.698 & 0.23448 \\
0.8 & 0.16 & Selected policy & 183/200 & 2.183 & 0.546 & 0.11889 \\
0.8 & 0.50 & Fixed audit     & 183/200 & 1.534 & 0.722 & 0.11878 \\
0.8 & 0.90 & Fixed audit     & 183/200 & 1.365 & 0.788 & 0.11894 \\
\bottomrule
\end{tabular}
\end{table}

For these six specifications, the selected $\lambda=0.16$ yields a larger
posterior mean $K_J$ and a smaller posterior collapse probability than the two
fixed values. Aggregate $\tau$ RMSE differs by less than 0.001 within
either informativeness level. 


\section{Software Implementation}
\label{app:software}

\subsection{Purpose and Verification}
\label{subsec:G1-purpose}

We use \texttt{DPprior} 2.0.0 to calibrate Gamma hyperpriors and check their
count and weight implications. Parameters use the shape--rate convention.
The code below implements the example in main-text Section~3.5. The public
replication package provides the package source, analysis code and installation
instructions; Section~\ref{subsec:G7-repository} gives the repository and
rebuild commands.

During the revision, we found that the original OSM and software used
different Dual-Anchor objectives. We replaced them with the fixed-scale
objective in main-text Section~3.4 and Appendix~E, and recomputed every
affected calibration. Count and soft calibrations must converge and pass
verification. For the hard sensitivity, we also allow a verified solution
at the constraint boundary. We recompute moments at a higher quadrature order. The example
uses 160 quadrature nodes for fitting and 320 for verification. A failed
calibration is reported as such and is not replaced by an approximate result.

\subsection{A Complete Calibration Example}
\label{subsec:G2-overview}

The example sets $J=50$, a count mean of five and variance ten. After count
calibration, the analysis wrapper evaluates the 29 predeclared trade-off
weights, checks each solution, and chooses the largest eligible $\lambda$
whose size-biased majority probability is at most 0.40. The optimization's
soft target is 0.25. The separate hard analysis imposes 0.25 as a bound.
These selection and verification steps are additional to the two package
calls shown in the main text.

The following code uses the package together with the analysis wrapper and
policy file. The wrapper implements the grid, boundary checks and selection
rule in main-text Algorithm~2. If no candidate qualifies, it reports failure.

\begin{lstlisting}
library(DPprior)
source("analysis/sim_code/prior_calibration_v2.R")
policy <- pcv_load_policy("config/prior_calibration_policy.yml")
spec <- list(spec_id = "worked_J50_mu5_v10",
             J = 50L, mu_K = 5, var_K = 10)

fit <- DPprior_fit(J = 50, mu_K = 5, var_K = 10,
                  method = "A2-MN", M = 160,
                  check_diagnostics = TRUE)
stopifnot(identical(fit$status, "converged"),
          isTRUE(fit$usable), isTRUE(fit$verified))
p_majority <- prob_wsb_exceeds(
  0.5, fit$parameters$a, fit$parameters$b)

if (p_majority > policy$dual$trigger_above) {
  selected <- pcv_soft_curve_select(fit, spec, policy)
  stopifnot(is.na(selected$rejection_code))
  primary <- selected$fit
  hard <- pcv_fit_hard_sensitivity(fit, spec, policy)
  stopifnot(isTRUE(hard$gate$ok))
  print(selected$curve)
  print(hard$gate)
} else {
  primary <- fit
}
primary$parameters[c("a", "b")]
\end{lstlisting}

Use the returned shape and rate in the chosen DP mixture model. Before
fitting that model, report the achieved count moments and both size-biased
and largest-weight diagnostics (Appendix~E). Repeat calibration and
selection when varying the count or its uncertainty. A qualitative
confidence level maps to variance as described in Section~3.2; it is an
input to examine, not a property learned from the data.

\subsection{Worked Results and the Trade-Off Curve}
\label{subsec:G5-example}

Table~\ref{tab:G-worked} summarizes the three priors. The count-calibrated
prior gives a majority probability of 0.497, so the weight check triggers
refinement. The selected $\lambda=0.30$ lowers this probability to 0.398
and raises the prior expected count by about one. The hard bound requires
a larger change in the count. The two weight requirements therefore have
different consequences even though both analyses start from the same
count judgment.

\begin{table}[H]
\centering
\caption{Worked example: count calibration, Dual-Anchor and the hard sensitivity}
\label{tab:G-worked}
\small
\setlength{\tabcolsep}{4pt}
\renewcommand{\arraystretch}{1.15}
\begin{tabular}{@{}lrrrrr@{}}
\toprule
Prior & $a$ & $b$ & $\E(K_{50})$ & $\Var(K_{50})$ & $\Pr(W_{\mathrm{SB}}>.5)$ \\
\midrule
TSMM & 1.4082 & 1.0770 & 5.000 & 10.000 & 0.497 \\
Dual-Anchor & 2.3158 & 1.4204 & 5.884 & 9.738 & 0.398 \\
Dual-Anchor (hard) & 5.2745 & 2.3059 & 7.488 & 9.051 & 0.250 \\
\bottomrule
\end{tabular}
\end{table}

Figure~\ref{fig:G4-tradeoff} shows all 29 grid values: 0.01--0.20 in
increments of 0.01, then 0.25, 0.30 and 0.40--1.00 in increments of 0.10.
All points in this example pass numerical verification. Points above the
0.40 line are ineligible under the selection rule. The 0.25 target remains
unmet at the selected point; the hard solution in Table~\ref{tab:G-worked}
is a separate sensitivity analysis.

\begin{figure}[H]
\centering
\includegraphics[width=\textwidth]{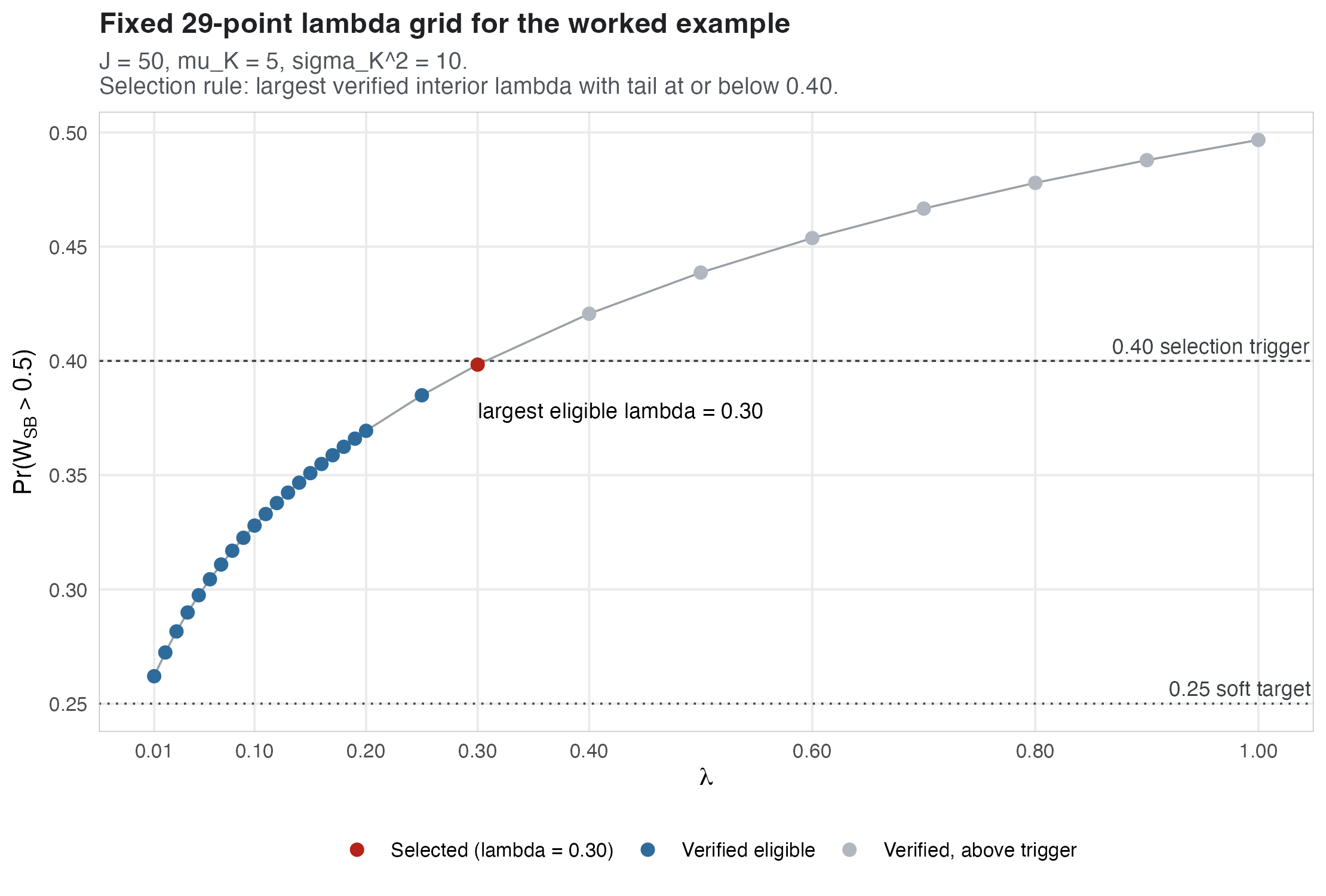}
\caption{All 29 soft Dual-Anchor solutions for $J=50$, $\mu_K=5$ and
$\sigma_K^2=10$. The selected point, $\lambda=0.30$, is the largest verified
interior grid value with majority probability at most 0.40. The lower line
marks the soft target of 0.25. Grey points are verified but exceed the
selection threshold.}
\label{fig:G4-tradeoff}
\end{figure}

\subsection{Browser Interface}
\label{subsec:G6-studio}

Applied researchers can use \emph{DPprior Studio} without installing R:
\begin{center}
\url{https://joonho.shinyapps.io/dpprior-studio/}.
\end{center}
The app takes the design size, a cluster-count judgment, and its uncertainty.
It fits the Gamma hyperprior and displays the induced cluster-count distribution,
size-biased weight and co-clustering diagnostics. It also runs the Dual-Anchor
sensitivity analysis and shows how the count and weight summaries change. The
app defaults are instructional starting points; an analysis should report the
settings used.

\subsection{Public Replication Package}
\label{subsec:G7-repository}

The replication package and its Quarto guide are available at
\begin{center}
\url{https://github.com/joonho112/DPprior-paper-replication},
\end{center}
with a rendered guide at
\begin{center}
\url{https://joonho112.github.io/DPprior-paper-replication/}.
\end{center}
The materials used for this revision are identified by
\href{https://github.com/joonho112/DPprior-paper-replication/commit/827cf435addbd23aef656c4378f44ae67cc1e8df}{commit 827cf435}.
The guide links each reported table and figure to its input, R script and
output. The repository also records software versions, seeds, calibration
diagnostics and the disposition of planned simulation fits.

\Needspace{8\baselineskip}
From the repository root, the standard workflow is
\begin{lstlisting}[language={}]
Rscript 00_setup.R --profile rebuild
Rscript exhibits/00_build_all.R
Rscript verify_reproduction.R
\end{lstlisting}
It reconstructs 41 main-text and OSM tables and figures from the supplied
summaries and checks them against stored expected results, including 68
selected numerical comparisons. Tables are written as CSV, HTML and LaTeX;
figures are written as CSV, PDF and PNG. The artifact index in the guide gives
the corresponding files and scripts.

The standard workflow does not rerun the production posterior fits. Separate
drivers prepare source data and refit a selected simulation, application or
Rasch model. We tested these drivers with short runs, whereas a production
refit uses the original schedules and requires a new convergence assessment.
This distinction prevents a reconstruction from stored model summaries from
being presented as a new fit.

The package obtains Project STAR through \texttt{mlmRev} and the
writing-to-learn data through \texttt{metadat}. \citet{air2024ocrs} released
OCRS through openICPSR as a public-use file (ICPSR 202181, version V1); the package
provides a script that converts the downloaded impact-analysis file into the
24-school input used here. The source OCRS file and documentation are obtained
from the official deposit rather than duplicated in the repository. The Rasch
refit similarly requires the user to obtain the response data through the Item
Response Warehouse. Supplied aggregate summaries are sufficient for the
standard rebuild and the conditional WLE calculation described in
Section~\ref{subsubsec:H-irt-repro}.

\section{Applications: Data, Models, and Additional Results}
\label{app:applications}

\subsection{Overview}
\label{subsec:H1-purpose}

This appendix gives the data, model specifications, diagnostics and sensitivity
analyses for Section~5. Sections~H.2--H.5 concern schools and studies;
\Cref{subsec:H-irt} gives the Rasch analysis. We also report the Online
Credit Recovery Study (OCRS) as a descriptive example with little estimated
heterogeneity.

For schools and studies, we compare Vague, TSMM and Dual-Anchor as the primary
priors, DORO-Unif as an external reference, and Dual-Anchor (hard) as the
weight-bound sensitivity. The 35 fits comprise 15 primary-scale analyses,
15 wider-scale analyses and five STAR HC3 analyses. All 35 passed the
convergence criteria after at most one rerun.

\subsection{Data and First-Stage Effect Estimates}
\label{subsec:H2-data}

\subsubsection{Project STAR}

The Tennessee Student/Teacher Achievement Ratio experiment randomized
kindergarten students within schools to small or regular-size classes
\citep{krueger1999experimental,nye2000effects}. The public source,
\texttt{mlmRev::star}, version 1.0-9 \citep{bates2026mlmrev}, contains
6,325 kindergarten records. We require complete records for class type,
school, reading score, mathematics score and teacher identifier. Excluding
539 records leaves 5,786 students in 337 classrooms and $J=79$ schools,
each with both treatment arms. We pool the regular and regular-with-aide
arms throughout.

The outcome is the standardized average of separately standardized reading
and mathematics scores. We retain classroom dependence in the primary
school-specific contrasts by fitting
\[
 y_{isc}=\sum_{j=1}^{79}\beta_j\mathbf 1(s=j)
 +\sum_{j=1}^{79}\tau_j\mathbf 1(s=j)\,\mathrm{small}_{isc}
 +u_c+\epsilon_{isc},
\]
where $u_c$ is a classroom random intercept. The mean school contrast is
0.191 standard deviations and its mean standard error is 0.428. The
informativeness index is $I=0.107$, compared with 0.721 for schoolwise OLS
and 0.675 for OLS with HC3 standard errors. We use the HC3 estimates in the
first-stage sensitivity analysis.

\subsubsection{Bangert--Drowns and OCRS}

The writing-to-learn meta-analysis includes $J=48$ studies and 5,576
participants \citep{bangertdrowns2004effects}. We use its standardized mean
differences and sampling variances from
\texttt{metadat::}\allowbreak\texttt{dat.bangertdrowns2004}, version 1.6-0
\citep{viechtbauer2026metadat}, and fit normal random-effects benchmarks
with \texttt{metafor} \citep{viechtbauer2010conducting}.

OCRS contributes $J=24$ complete-case school contrasts based on 1,063
subject-level observations. \citet{air2024ocrs} released the impact-analysis
data through openICPSR as a public-use file (ICPSR 202181, version V1).
The schoolwise contrast is a descriptive reconstruction,
not the trial's published covariate- and block-adjusted estimand. The
replication package supplies the preparation code and resulting school-level
summaries; the source file remains available from the official deposit.

\begin{table}[!htbp]
\centering
\caption{Normal random-effects benchmarks for the application inputs.}
\label{tab:H1-data}
\small
\begin{tabular}{@{}lrcccc@{}}
\toprule
Data & $J$ & Pooled effect [95\% CI] &
$\widehat\tau^2$ & $I^2$ (\%) & 95\% prediction interval \\
\midrule
STAR & 79 & 0.194 [0.097, 0.291] & 0.0215 & 11.2 & [-0.110, 0.497] \\
Bangert--Drowns & 48 & 0.222 [0.132, 0.312] & 0.0499 & 58.4 & [-0.225, 0.669] \\
OCRS & 24 & 0.037 [-0.077, 0.152] & 0.0000 & 0.0 & [-0.077, 0.152] \\
\bottomrule
\end{tabular}
\end{table}

Table~\ref{tab:H1-data} describes continuous heterogeneity under a normal
random-effects model; it does not estimate occupied mixture components.

\subsection{Priors and Additional Primary-Scale Results}
\label{subsec:H3-priors}

The primary DPM gives the grand mean a $\mathcal N(0,1)$ prior and the
base-measure variance of component means a uniform prior on $(0,1)$.
Within-component variances have separate inverse-gamma priors,
$\sigma_k^2\sim\mathrm{InvGamma}(s_1,s_2)$, with
$s_1,s_2\sim\mathrm{InvGamma}(2,1)$, as in \Cref{subsec:F4-mcmc}.
STAR's count judgment is ``about five patterns, between 2 and 8 with
80 percent probability.'' For Bangert--Drowns and OCRS it is ``about four
patterns, between 1 and 8 with 80 percent probability.'' The fixed-grid
policy selects $\lambda=0.20$, 0.13 and 0.25, respectively. The STAR hard
sensitivity has $\alpha$ effectively fixed at 2.

Main-text Tables~4 and 5 give the STAR and Bangert--Drowns priors and
posterior summaries. Table~\ref{tab:H2-priors} adds the posterior probability
of population dominance and the alpha-only functional. The latter averages
the conditional size-biased tail over posterior draws of $\alpha$; it omits
the fitted allocations and residual mass, and is therefore not a posterior
probability of either observed or population dominance.

\begin{table}[!htbp]
\centering
\caption{Additional posterior summaries for STAR and Bangert--Drowns.}
\label{tab:H2-priors}
\small
\setlength{\tabcolsep}{7pt}
\begin{tabular}{@{}lccc@{}}
\toprule
Hyperprior & $\Pr(W_{\max}>.5\mid y)$ & $\E(.5^\alpha\mid y)$ & \shortstack{Intervals\\excluding zero} \\
\midrule
\multicolumn{4}{@{}l}{\textit{STAR}} \\
Vague & 0.963 & 0.797 & 3/79 \\
DORO-Unif & 0.935 & 0.679 & 3/79 \\
TSMM & 0.910 & 0.588 & 2/79 \\
Dual-Anchor & 0.804 & 0.419 & 3/79 \\
Dual-Anchor (hard) & 0.601 & 0.250 & 3/79 \\
\multicolumn{4}{@{}l}{\textit{Bangert--Drowns}} \\
Vague & 0.963 & 0.785 & 13/48 \\
DORO-Unif & 0.941 & 0.676 & 13/48 \\
TSMM & 0.958 & 0.771 & 15/48 \\
Dual-Anchor & 0.863 & 0.520 & 13/48 \\
Dual-Anchor (hard) & 0.650 & 0.294 & 13/48 \\
\bottomrule
\end{tabular}
\par\medskip
\begin{minipage}{\textwidth}\small\textit{Note.}
$W_{\max}$ is the largest population atom weight, including unoccupied mass.
The last column counts strict equal-tail 95 percent unit intervals.
DORO-Unif is an external reference and Dual-Anchor (hard) a sensitivity
analysis. Each fit contributes 16,000 retained draws.
\end{minipage}
\end{table}

Across all 15 primary-scale fits, the largest root mean squared difference
in unit posterior means from the corresponding Vague fit is 0.0114, and at
most two interval decisions change. These small changes in unit estimates
coexist with larger changes in posterior component counts.

\subsection{MCMC Schedule and Convergence}
\label{subsec:H4-mcmc}

Each fit starts with four chains of 12,000 iterations, including 4,000
warmup. A fit that fails the criteria in \Cref{subsec:F4-mcmc} receives one
rerun of 120,000 iterations per chain with 60,000 warmup. We retain 4,000
draws per chain for the reported summaries. Twenty-five fits passed
initially and ten after rerunning; none was excluded.
Table~\ref{tab:H3-conv} reports diagnostics for the 15 primary-scale fits.

\begin{table}[!htbp]
\centering
\caption{Primary-scale convergence diagnostics.}
\label{tab:H3-conv}
\small
\begin{tabular}{@{}lrrrr@{}}
\toprule
Hyperprior & Final attempt & Max $\widehat R$ & Min bulk ESS & Min tail ESS \\
\midrule
\multicolumn{5}{@{}l}{\textit{STAR}} \\
Vague & 2 & 1.0016 & 4315 & 3884 \\
DORO-Unif & 2 & 1.0009 & 4868 & 3254 \\
TSMM & 1 & 1.0034 & 575 & 561 \\
Dual-Anchor & 2 & 1.0016 & 3223 & 2306 \\
Dual-Anchor (hard) & 2 & 1.0018 & 1841 & 1011 \\
\multicolumn{5}{@{}l}{\textit{Bangert--Drowns}} \\
Vague & 1 & 1.0038 & 1010 & 1195 \\
DORO-Unif & 2 & 1.0008 & 7353 & 4005 \\
TSMM & 2 & 1.0003 & 7363 & 5338 \\
Dual-Anchor & 1 & 1.0073 & 783 & 574 \\
Dual-Anchor (hard) & 1 & 1.0093 & 875 & 869 \\
\multicolumn{5}{@{}l}{\textit{OCRS}} \\
Vague & 1 & 1.0038 & 892 & 680 \\
DORO-Unif & 1 & 1.0041 & 777 & 526 \\
TSMM & 1 & 1.0037 & 1127 & 766 \\
Dual-Anchor & 1 & 1.0092 & 856 & 556 \\
Dual-Anchor (hard) & 2 & 1.0019 & 3730 & 1951 \\
\bottomrule
\end{tabular}
\par\medskip
\begin{minipage}{\textwidth}\small\textit{Note.}
Extrema are over all quantities checked for each fit: 90 for STAR, 59 for
Bangert--Drowns and 35 for OCRS. All entries pass the prespecified thresholds;
they are not diagnostics for $\alpha$ or $K_J$ alone.
\end{minipage}
\end{table}

\subsection{Sensitivity Analyses and Descriptive Profiles}
\label{subsec:H5-results}

\subsubsection{Model Scale and STAR First-Stage Uncertainty}

The wider-scale specification jointly changes the grand-mean prior to
$\mathcal N(0,100)$ (variance 100) and the uniform prior on the
base-measure variance of component means to $(0,10)$. This shifts the
posterior toward fewer components in all 15 matched comparisons
(Table~\ref{tab:H4-sens}). The absolute change in posterior mean $K_J$
reaches 1.834, although changes in unit estimates remain small.

Using STAR's schoolwise OLS estimates with HC3 standard errors changes both
the school estimates and their uncertainty inputs. Across the five priors,
the root mean squared changes in unit posterior means are 0.169--0.174,
about twenty times the primary-scale Dual-Anchor--Vague difference.
Twenty-two of the 79 HC3 intervals exclude zero, changing 19--20 decisions.
Schoolwise HC3 standard errors do not account for classroom dependence, so
this comparison also changes the assumptions underlying the first-stage
uncertainty estimates.

\begin{table}[!htbp]
\centering
\caption{Changes under the wider model scale and STAR HC3 first stage.}
\label{tab:H4-sens}
\small
\begin{tabular}{@{}lrrrr@{}}
\toprule
Hyperprior & $\Delta\E(K_J)$ & $\Delta\Pr(K_J=1)$ & \shortstack{Unit-mean\\RMSD} & \shortstack{Changed\\intervals/$J$} \\
\midrule
\multicolumn{5}{@{}l}{\textit{Wide scale: STAR}} \\
Vague & -0.744 & +0.282 & 0.0043 & 2/79 \\
DORO-Unif & -1.067 & +0.351 & 0.0036 & 0/79 \\
TSMM & -1.449 & +0.395 & 0.0069 & 1/79 \\
Dual-Anchor & -1.616 & +0.255 & 0.0044 & 1/79 \\
Dual-Anchor (hard) & -1.030 & +0.064 & 0.0029 & 0/79 \\
\multicolumn{5}{@{}l}{\textit{Wide scale: Bangert--Drowns}} \\
Vague & -0.667 & +0.291 & 0.0076 & 1/48 \\
DORO-Unif & -0.905 & +0.341 & 0.0052 & 0/48 \\
TSMM & -0.695 & +0.289 & 0.0053 & 1/48 \\
Dual-Anchor & -1.420 & +0.363 & 0.0065 & 0/48 \\
Dual-Anchor (hard) & -1.834 & +0.209 & 0.0032 & 0/48 \\
\multicolumn{5}{@{}l}{\textit{Wide scale: OCRS}} \\
Vague & -0.345 & +0.181 & 0.0021 & 0/24 \\
DORO-Unif & -0.700 & +0.314 & 0.0054 & 0/24 \\
TSMM & -0.378 & +0.194 & 0.0041 & 0/24 \\
Dual-Anchor & -0.771 & +0.304 & 0.0079 & 0/24 \\
Dual-Anchor (hard) & -1.569 & +0.408 & 0.0060 & 0/24 \\
\multicolumn{5}{@{}l}{\textit{STAR HC3: STAR}} \\
Vague & +0.239 & -0.053 & 0.1744 & 19/79 \\
DORO-Unif & +0.560 & -0.099 & 0.1738 & 19/79 \\
TSMM & +0.314 & -0.035 & 0.1712 & 20/79 \\
Dual-Anchor & +0.455 & -0.023 & 0.1709 & 19/79 \\
Dual-Anchor (hard) & +0.309 & -0.003 & 0.1692 & 19/79 \\
\bottomrule
\end{tabular}
\par\medskip
\begin{minipage}{\textwidth}\small\textit{Note.}
Each comparison is sensitivity minus primary for the same application and
hyperprior. RMSD compares unit posterior means between the two specifications;
the last column counts changes in whether the strict equal-tail 95 percent
unit interval excludes zero. Both sensitivities change inputs beyond the
concentration prior.
\end{minipage}
\end{table}

\subsubsection{OCRS as a Descriptive Illustration}

The OCRS benchmark estimates little heterogeneity: its pooled effect is
0.037 (95 percent CI $[-0.077,0.152]$), with $\widehat\tau^2$ and $I^2$
approximately zero. The DPM nevertheless gives posterior mean $K_{24}$
from 1.47 under Vague to 2.07 under Dual-Anchor and 3.25 under the hard
sensitivity (Table~\ref{tab:H-ocrs}). No unit interval excludes zero.
Thus component counts can remain sensitive to the concentration prior even
when the normal random-effects benchmark estimates little heterogeneity.

\begin{table}[!htbp]
\centering
\caption{OCRS calibration and primary-scale posterior results.}
\label{tab:H-ocrs}
\small
\begin{tabular}{@{}lcccc@{}}
\toprule
Hyperprior & $(a,b)$ & $\E(K_{24}\mid y)$ [95\%] & $\Pr(K_{24}=1\mid y)$ & \shortstack{Unit-mean\\RMSD} \\
\midrule
Vague & (1.00, 1.00) & 1.47 [1, 4] & 0.717 & 0.0000 \\
DORO-Unif & (2.14, 1.78) & 1.92 [1, 5] & 0.519 & 0.0043 \\
TSMM & (0.88, 0.65) & 1.51 [1, 5] & 0.713 & 0.0027 \\
Dual-Anchor & (1.68, 0.94) & 2.07 [1, 7] & 0.503 & 0.0048 \\
Dual-Anchor (hard) & (3.48, 1.42) & 3.25 [1, 9] & 0.248 & 0.0077 \\
\bottomrule
\end{tabular}
\par\medskip
\begin{minipage}{\textwidth}\small\textit{Note.}
The Gamma hyperprior uses shape $a$ and rate $b$. RMSD compares unit
posterior means with Vague. These school contrasts describe the available
complete-case sample and do not reproduce the trial's adjusted estimand.
\end{minipage}
\end{table}

\subsubsection{Descriptive Profiles}

Table~\ref{tab:H5-location} compares STAR's location-specific mean effects
under Vague and Dual-Anchor. We average the school effects within each
location for every posterior draw. The location differences are small and
the intervals overlap substantially.

\begin{table}[!htbp]
\centering
\caption{STAR location profiles under Vague and Dual-Anchor.}
\label{tab:H5-location}
\small
\begin{tabular}{@{}lrrcc@{}}
\toprule
Location & Schools & Observed mean & Vague [95\%] & Dual-Anchor [95\%] \\
\midrule
Inner city & 16 & 0.230 & 0.206 [0.078, 0.336] & 0.208 [0.081, 0.341] \\
Rural & 38 & 0.188 & 0.188 [0.082, 0.292] & 0.188 [0.084, 0.293] \\
Suburban & 18 & 0.151 & 0.186 [0.062, 0.312] & 0.184 [0.057, 0.309] \\
Urban & 7 & 0.218 & 0.197 [0.030, 0.367] & 0.196 [0.025, 0.368] \\
\bottomrule
\end{tabular}
\end{table}

For Bangert--Drowns, we regress the study effects in each posterior draw
separately on grade and intervention length. Table~\ref{tab:H6-moderator}
reports the resulting unadjusted slopes. All 48 studies have grade data;
46 have intervention length. The observed-effect slopes are 0.0423 and
0.0200, respectively. The posterior slope for length is positive in about
98 percent of draws under each prior. These are descriptive associations;
the covariates neither define the mixture components nor enter the fitted
DPM as moderators.

\begin{table}[!htbp]
\centering
\caption{Descriptive Bangert--Drowns slopes by grade and intervention length.}
\label{tab:H6-moderator}
\small
\begin{tabular}{@{}lcc@{}}
\toprule
Hyperprior & Grade slope [95\%] & Length slope [95\%] \\
\midrule
Vague & 0.0108 [-0.0278, 0.0520] & 0.0090 [0.0006, 0.0183] \\
DORO-Unif & 0.0107 [-0.0292, 0.0516] & 0.0090 [0.0005, 0.0181] \\
TSMM & 0.0107 [-0.0286, 0.0523] & 0.0089 [0.0005, 0.0180] \\
Dual-Anchor & 0.0110 [-0.0290, 0.0532] & 0.0091 [0.0006, 0.0182] \\
Dual-Anchor (hard) & 0.0115 [-0.0283, 0.0542] & 0.0092 [0.0006, 0.0183] \\
\bottomrule
\end{tabular}
\end{table}

\FloatBarrier
\subsection{Rasch Analysis of a Vocabulary Checklist}
\label{subsec:H-irt}

We examine how the concentration prior affects the number and sizes of
occupied components, the fitted latent density and respondent estimates.
A latent group is an occupied normal-mixture component: its respondents
share a component mean and variance, but their latent traits need not be equal.

\subsubsection{Data}
\label{subsubsec:H-irt-data}

In 2016, the Open-Source Psychometrics Project administered the vocabulary
checklist online after an interactive version of the Generic Conspiracist
Beliefs Scale developed by \citet{brotherton2013measuring}
\citep{openpsychometrics2016gcbs}. The Item Response Warehouse distributes
the responses as \texttt{gcbs\_brotherton\_2013\_vcl} under a Creative
Commons Attribution 4.0 license \citep{domingue2025irw}.
The file contains 2,495 respondents and 16 entries with no missing responses.
A checked entry is coded 1 and an unchecked entry 0.

We draw 500 respondents at random with seed 81770971, then exclude the three
non-words \texttt{VCL6}, \texttt{VCL9} and \texttt{VCL12}, which the
codebook identifies as validity checks \citep{openpsychometrics2016gcbs}.
The sample is therefore unchanged by excluding these items. We analyze the
13 real words, retain scores of both 0 and 13, and impose no age restriction.
The sample size is the largest supported by the calibration tables used here;
the method itself has no such limit.

\begin{table}[!htbp]
\centering
\caption{Checklist entries and proportions checked.}
\label{tab:H-irt-items}
\small
\begin{tabular}{@{}lllrr@{}}
\toprule
Entry & Word & Status & Full file & Sample \\
\midrule
\texttt{VCL1} & boat & analyzed & 0.972 & 0.976 \\
\texttt{VCL2} & incoherent & analyzed & 0.942 & 0.940 \\
\texttt{VCL3} & pallid & analyzed & 0.549 & 0.552 \\
\texttt{VCL4} & robot & analyzed & 0.976 & 0.976 \\
\texttt{VCL5} & audible & analyzed & 0.953 & 0.950 \\
\texttt{VCL6} & cuivocal & excluded (non-word) & 0.105 & 0.098 \\
\texttt{VCL7} & paucity & analyzed & 0.273 & 0.270 \\
\texttt{VCL8} & epistemology & analyzed & 0.453 & 0.462 \\
\texttt{VCL9} & florted & excluded (non-word) & 0.059 & 0.052 \\
\texttt{VCL10} & decide & analyzed & 0.967 & 0.958 \\
\texttt{VCL11} & pastiche & analyzed & 0.313 & 0.312 \\
\texttt{VCL12} & verdid & excluded (non-word) & 0.167 & 0.144 \\
\texttt{VCL13} & abysmal & analyzed & 0.788 & 0.810 \\
\texttt{VCL14} & lucid & analyzed & 0.926 & 0.932 \\
\texttt{VCL15} & betray & analyzed & 0.971 & 0.972 \\
\texttt{VCL16} & funny & analyzed & 0.983 & 0.980 \\

\bottomrule
\end{tabular}
\par\medskip
\begin{minipage}{\textwidth}\small\textit{Note.}
The full file contains 2,495 respondents and the analysis sample 500.
The three excluded entries are non-words according to the instrument codebook.
\end{minipage}
\end{table}

Eight analyzed words are checked by more than 92 percent of the sample;
only paucity, pastiche and epistemology are checked by fewer than half
(Table~\ref{tab:H-irt-items}). Every raw score from 0 to 13 occurs.
The median is 10, with quartiles 9 and 12; 63 respondents check all 13 words
and three check none. The 13-item KR-20 reliability is 0.771. The WLE
reliability below measures person separation on the fitted Rasch scale.

\subsubsection{Model}
\label{subsubsec:H-irt-model}

For respondent $j=1,\ldots,500$ and item $i=1,\ldots,13$, we fit
\begin{equation}
\label{eq:H-irt-model}
\begin{aligned}
 y_{ij}\mid\eta_j,\beta_i
   &\sim\Bernoulli\{\mathrm{logit}^{-1}(\eta_j-\beta_i)\},\\
 \eta_j\mid\mu_j,\sigma_j^2&\sim\mathcal N(\mu_j,\sigma_j^2),\\
 (\mu_j,\sigma_j^2)\mid G&\overset{\mathrm{iid}}{\sim}G,\\
 G\mid\alpha&\sim\DP(\alpha,G_0),\qquad
 \alpha\sim\mathrm{Gamma}(a,b).
\end{aligned}
\end{equation}
The base measure is
$G_0=\mathcal N(0,2)\times\mathrm{InvGamma}(2.01,1.01)$ on the component
mean and variance. Item difficulties are
$\beta_i=\beta_i^\ast-\bar\beta^\ast$, with
$\beta_i^\ast\sim\mathcal N(0,3)$. Normal distributions are parameterized
by variance, and the Gamma hyperprior by shape and rate.

The sum-to-zero item constraint fixes location and the unit Rasch slope fixes
scale \citep{paganin2023computational}. These constraints do not resolve
the separate finite-item identification problem. With finitely many Rasch
items, only finitely many functionals of the latent distribution are
identified; identification of the full distribution requires infinitely
many items \citep[Theorems~5--6]{sanmartin2011bayesian}. We therefore
interpret $K_{500}$ as a count of components in the fitted model, not of
substantive ability populations. All fits, including the Gaussian benchmark,
use the same item-anchored scale without rescaling.

The fits use DPMirt 0.2.0 and NIMBLE 1.4.2 under R 4.6.0
\citep{dpmirt2026package,devalpine2017programming}, with DPprior 2.0.0
for calibration. The component capacity is $M=100$, above the sensitivity
prior's 95th percentile of 39 occupied components. Across the final fits,
the largest sampled count is 30, and no capacity warning occurs. All
hyperprior comparisons keep the base measure and item priors fixed.

\subsubsection{Weighted-Likelihood Reliability}
\label{subsubsec:H-irt-wle}

We fix the item difficulties at their posterior means from the prespecified
Dual-Anchor fit and compute Warm's weighted-likelihood estimates (WLEs)
\citep{warm1989weighted}. For raw score $r$, the estimate solves
\[
 r-\sum_i p_i(\theta)+\frac{I'(\theta)}{2I(\theta)}=0,
 \qquad
 I(\theta)=\sum_i p_i(\theta)\{1-p_i(\theta)\},
\]
where $p_i(\theta)=\mathrm{logit}^{-1}(\theta-\beta_i)$ and $I'$ is the
derivative of $I$. The standard error is
$\mathrm{SE}(\widehat\theta)=I(\widehat\theta)^{-1/2}$. We compute
person-separation reliability as
\[
 1-\frac{\operatorname{mean}_j\{\mathrm{SE}(\widehat\theta_j)^2\}}
         {\operatorname{var}_j(\widehat\theta_j)},
\]
using the unbiased sample variance and equal respondent weights
\citep{adams2005reliability}. We use \texttt{TAM::tam.mml.wle2} in
TAM 4.3-25 \citep{tam2025package}; a direct solution of the estimating
equation reproduces its estimates and standard errors to numerical precision.

Reliability is 0.7121: the mean error variance is 1.3120 and the WLE variance
4.5568. It ranges from 0.7121 to 0.7125 using difficulties from the other
main-arm fits, and is 0.7153 using the Gaussian benchmark. Thus estimated
measurement error is about 29 percent of the across-person WLE variance in
this sample. The calculation conditions on the item estimates and does not
include their uncertainty or measure the precision of $K_{500}$.

\begin{table}[!htbp]
\centering
\caption{Weighted-likelihood estimates by raw score.}
\label{tab:H-irt-wle}
\small
\begin{tabular}{@{}rrrr@{}}
\toprule
Raw score & Respondents & WLE & Standard error \\
\midrule
0 & 3 & -5.053 & 1.580 \\
1 & 1 & -3.783 & 0.981 \\
2 & 2 & -3.078 & 0.821 \\
3 & 4 & -2.521 & 0.756 \\
4 & 2 & -2.015 & 0.734 \\
5 & 3 & -1.515 & 0.741 \\
6 & 9 & -0.980 & 0.776 \\
7 & 17 & -0.353 & 0.849 \\
8 & 52 & 0.487 & 0.969 \\
9 & 70 & 1.685 & 1.066 \\
10 & 116 & 2.848 & 1.030 \\
11 & 89 & 3.806 & 1.025 \\
12 & 69 & 4.785 & 1.138 \\
13 & 63 & 6.246 & 1.733 \\

\bottomrule
\end{tabular}
\par\medskip
\begin{minipage}{\textwidth}\small\textit{Note.}
WLEs use posterior mean item difficulties from the primary Dual-Anchor fit.
Warm's correction gives finite estimates at both extreme scores; no score
adjustment or model refit is used. The 63 perfect-score respondents have a
standard error of 1.733 logits, indicating limited resolution at the top of
the scale.
\end{minipage}
\end{table}

\FloatBarrier
\subsubsection{Prior Calibration}
\label{subsubsec:H-irt-priors}

The main count judgment is ``about five latent groups, between 2 and 10
with 80 percent probability.'' This illustrative judgment was made after
inspecting the data and fixed before fitting. The interval-width conversion
in Section~3.2 gives target variance 9.742. It specifies a variance rather
than enforcing the stated interval probability, so
Table~\ref{tab:H-irt-priors} reports the achieved probability of 2--10 groups.
The sensitivity judgment specifies an expected count of 25 and variance 60
(variance inflation factor 2.5).

Vague is $\mathrm{Gamma}(1,1)$ and the literature default is
$\mathrm{Gamma}(1,3)$, the concentration prior of
\citet{paganin2023computational}. We borrow only that concentration prior,
not their base measure. Although their text describes $b$ as a scale, the
moments in their Table~2 match the shape--rate parameterization used here.
The DORO-Unif projection failed the target-family adequacy check at $J=500$
and was not fitted.

For the main judgment, TSMM has a size-biased majority probability of 0.652,
above the 0.40 trigger. Under the fixed grid and selection rule in
Section~3.4, the eligible values are $\lambda=0.02$ through 0.05.
Selecting the largest gives Dual-Anchor with expected count 8.75 and majority
probability 0.388. We accepted this shift from five groups before fitting,
kept Dual-Anchor as primary, and retained TSMM for comparison. Neither the
judgment nor the selected weight was changed after examining the posteriors.

For the hard bound of 0.25, the count discrepancy decreases as the Gamma
prior concentrates. The solver reaches its domain boundary with $\alpha$
effectively fixed at 2 (standard deviation 0.001), giving expected count
11.59. This is reported only as a sensitivity analysis. Under the count-25
judgment, the TSMM majority probability is 0.042: the soft trigger does not
fire and the hard constraint is inactive, so all three calibrated roles
coincide.

\begin{table}[!htbp]
\centering
\caption{The six distinct Rasch hyperpriors and their count and weight implications.}
\label{tab:H-irt-priors}
\small
\setlength{\tabcolsep}{4pt}
\begin{tabular}{@{}lcccc@{}}
\toprule
Hyperprior & $(a,b)$ & $\E(K_{500})$ [90\%] & \shortstack{$\Pr(2\le K_{500}$\\$\le10)$} & \shortstack{$\Pr(W_{\mathrm{SB}}$\\${}>.5)$} \\
\midrule
Vague & (1, 1) & 6.44 [1, 17] & 0.684 & 0.591 \\
Literature default & (1, 3) & 3.03 [1, 8] & 0.676 & 0.812 \\
TSMM & (2.1062, 3.0789) & 5.00 [1, 11] & 0.849 & 0.652 \\
Dual-Anchor & (20.2968, 14.5089) & 8.75 [4, 14] & 0.736 & 0.388 \\
Dual-Anchor (hard) & $\alpha\approx2$ & 11.59 [7, 17] & 0.373 & 0.250 \\
Count 25 & (8.9037, 1.6256) & 25.00 [13, 39] & 0.015 & 0.042 \\
\bottomrule
\end{tabular}
\par\medskip
\begin{minipage}{\textwidth}\small\textit{Note.}
The first five rows are the main-arm roles. ``Count 25'' is the common
TSMM, Dual-Anchor and Dual-Anchor (hard) prior in the sensitivity arm. The
hard main-arm row uses the solver's Gamma parameters,
$a=3.27\times10^6$ and $b=1.63\times10^6$ (rounded); the compact notation
describes their effectively fixed-$\alpha$ limit.
The count interval is $[q_{.05},q_{.95}]$. Probabilities for 2--10 groups are
shown to assess the main judgment; that range is not the count-25 target.
\end{minipage}
\end{table}

\subsubsection{MCMC Schedule and Convergence}
\label{subsubsec:H-irt-convergence}
\label{subsubsec:H-irt-replication}

The two sets of five labeled fits contain only six distinct priors. Vague
and the literature default repeat across arms; within the count-25 arm,
Dual-Anchor and Dual-Anchor (hard) repeat TSMM. We ran all ten labeled fits
with four chains each. The four repeated-prior fits were exempt from reruns
under the original schedule and retain their initial diagnostics in
Table~\ref{tab:H-irt-diagnostics}. They do not supply the main-text results
or the posterior summaries below, which use the six converged distinct-prior
fits.

The initial schedule used 12,000 iterations per chain, 4,000 warmup and
thinning 2. The prespecified single rerun used 36,000 iterations, 12,000
warmup and thinning 6. No distinct-prior fit passed initially; only main-arm
TSMM passed after this rerun. The other five still failed mainly on the
largest observed allocation share: bulk effective sample sizes were
270--403 against a threshold of 400, with $\widehat R$ up to 1.027. Some
density tail effective sample sizes also remained below 400.

We therefore ran those five fits for 120,000 iterations with 60,000 warmup
and thinning 15. This third schedule departed from the prespecified
single-rerun policy. The data, priors, fitting code and diagnostic thresholds
were unchanged, and draws from earlier attempts were not pooled with later
ones. All five passed; none of the six distinct-prior fits was excluded.
Every final fit provides 4,000 draws per chain, or 16,000 in total.

We require rank-normalized, folded $\widehat R\le1.01$ and bulk and tail
effective sample sizes of at least 400 \citep{vehtari2021rank}, computed with
\texttt{posterior} 1.7.0. Checked quantities are $\alpha$, $K_{500}$,
the largest observed allocation share, the finite nonconstant log-likelihood,
13 item difficulties, 14 raw-score-group means, and density values where
the pooled mean density is at least 0.01. Individual traits and labeled atom
parameters are recorded separately. Respondents with the same raw score have
the same marginal posterior under the Rasch model
\citep{sanmartin2011bayesian}, which motivates the score-group checks.
For discrete quantities with tied tail quantiles, a non-finite tail ESS is
replaced by the smaller bulk ESS of the 5 and 95 percent tail indicators;
constant draws receive $\widehat R=1$. We also require finite draws,
counts below the component capacity, and less than 0.001 posterior mean
mass outside the density grid.

\begin{table}[!htbp]
\centering
\caption{Final Rasch diagnostics, including the four fits exempt from rerunning.}
\label{tab:H-irt-diagnostics}
\small
\setlength{\tabcolsep}{4pt}
\begin{tabular}{@{}llrrrr@{}}
\toprule
Arm & Hyperprior & Attempt & Max $\widehat R$ & Min bulk ESS & Min tail ESS \\
\midrule
Main & Vague & 3 & 1.0034 & 796 & 759 \\
Main & Literature default & 3 & 1.0035 & 829 & 818 \\
Main & TSMM & 2 & 1.0042 & 460 & 473 \\
Main & Dual-Anchor & 3 & 1.0024 & 1,039 & 1,252 \\
Main & Dual-Anchor (hard) & 3 & 1.0030 & 1,057 & 1,460 \\
Sens. & Vague & 1 & 1.0478 & 135 & 110 \\
Sens. & Literature default & 1 & 1.0551 & 72 & 98 \\
Sens. & TSMM & 3 & 1.0029 & 975 & 1,487 \\
Sens. & Dual-Anchor & 1 & 1.0674 & 46 & 68 \\
Sens. & Dual-Anchor (hard) & 1 & 1.0223 & 123 & 110 \\
\bottomrule
\end{tabular}
\par\medskip
\begin{minipage}{\textwidth}\small\textit{Note.}
Every fit has four chains and 16,000 retained draws. Extrema are over all
checked quantities. The four sensitivity-arm rows at attempt~1 repeat
priors fitted elsewhere and do not pass the criteria; only their diagnostics
are retained here. All five main-arm fits and sensitivity-arm TSMM pass.
\end{minipage}
\end{table}

\subsubsection{Posterior Counts, Weights and Respondent Estimates}
\label{subsubsec:H-irt-posterior}

For each draw, $K_{500}$ counts occupied labels and the largest observed
share is the largest label count divided by 500. To obtain the largest
population weight, we draw the occupied weights and residual mass from
their conditional Dirichlet distribution and expand the residual by
stick-breaking until no unseen atom can exceed the largest seen. This
retains the distinction between an observed share and a population weight.
Tables~\ref{tab:H-irt-posterior} and \ref{tab:H-irt-weights} give each
distinct-prior fit once.

\begin{table}[!htbp]
\centering
\caption{Posterior count and concentration summaries for the distinct Rasch priors.}
\label{tab:H-irt-posterior}
\small
\setlength{\tabcolsep}{4pt}
\begin{tabular}{@{}lcccc@{}}
\toprule
Hyperprior & \shortstack{$\E(K\mid y)$\\(MCSE)} & 95\% interval & \shortstack{$\Pr(K=1\mid y)$\\(MCSE)} & $\E(\alpha\mid y)$ [95\%] \\
\midrule
Vague & 4.43 (0.048) & [2, 10] & 0.006 (0.001) & 0.65 [0.07, 1.91] \\
Literature default & 3.51 (0.031) & [2, 8] & 0.010 (0.002) & 0.38 [0.04, 1.12] \\
TSMM & 4.30 (0.045) & [2, 9] & 0.003 (0.001) & 0.59 [0.12, 1.44] \\
Dual-Anchor & 6.73 (0.033) & [3, 12] & 0.000 (0.000) & 1.29 [0.79, 1.91] \\
Dual-Anchor (hard) & 8.99 (0.034) & [5, 14] & 0.000 (0.000) & 2.00 [2.00, 2.00] \\
Count 25 & 11.72 (0.067) & [5, 20] & 0.000 (0.000) & 3.00 [1.37, 5.20] \\
\bottomrule
\end{tabular}
\par\medskip
\begin{minipage}{\textwidth}\small\textit{Note.}
$K=K_{500}$. All rows use 16,000 draws from converged fits. ``Count 25''
is sensitivity-arm TSMM, whose prior also applies to its two Dual-Anchor
roles. MCSEs are computed from the four-chain array; zero denotes rounding
to the displayed precision.
\end{minipage}
\end{table}

\begin{table}[!htbp]
\centering
\caption{Posterior observed allocation shares and population weights.}
\label{tab:H-irt-weights}
\small
\setlength{\tabcolsep}{5pt}
\begin{tabular}{@{}lccc@{}}
\toprule
Hyperprior & \shortstack{Largest observed\\share [95\%]} & \shortstack{Largest population\\weight [95\%]} & $\Pr(W_{\max}>.5\mid y)$ \\
\midrule
Vague & 0.860 [0.48, 0.98] & 0.859 [0.48, 0.99] & 0.969 \\
Literature default & 0.886 [0.52, 0.99] & 0.885 [0.52, 0.99] & 0.984 \\
TSMM & 0.867 [0.50, 0.98] & 0.866 [0.50, 0.99] & 0.974 \\
Dual-Anchor & 0.797 [0.44, 0.97] & 0.796 [0.44, 0.97] & 0.935 \\
Dual-Anchor (hard) & 0.748 [0.40, 0.96] & 0.745 [0.40, 0.96] & 0.896 \\
Count 25 & 0.691 [0.37, 0.95] & 0.687 [0.37, 0.94] & 0.819 \\
\bottomrule
\end{tabular}
\par\medskip
\begin{minipage}{\textwidth}\small\textit{Note.}
Share and weight columns give posterior means and equal-tail 95 percent
intervals. The population calculation includes residual unoccupied mass.
The final column is a posterior probability of population dominance.
\end{minipage}
\end{table}

The count-25 prior gives posterior mean 11.72 groups with 95 percent interval
$[5,20]$, compared with 4.30 under the main-arm TSMM prior. Yet its largest
population weight still averages 0.687, with probability 0.819 of exceeding
one half. Increasing the stated count therefore produces more small
components while retaining a large dominant component.

The hyperprior has much less effect on person estimates. We summarize them
by raw score because within-score differences in estimated posterior means
arise only from Monte Carlo error. The posterior means increase strictly
across all 14 scores in all ten fits; the within-score equality check also
passes. For that check, the largest individual-mean deviation from its group
mean must be at most four times the largest individual MCSE in the group.

The root mean squared change from Vague to Dual-Anchor is 0.012 logits over
the 500 respondents, using score-group means weighted by group size.
The other main-arm priors give 0.007--0.014 logits. The largest
Dual-Anchor score-group shift is 0.031 logits at score 13, with MCSE 0.016;
the largest across the main arm is 0.045 at score 0 under Dual-Anchor
(hard), with MCSE 0.022. We combine MCSEs from the two independent fits by
the square root of their squared sum. Neither shift meets our reporting
criterion, $\mathrm{MCSE}<|\mathrm{shift}|/3$. Average interval-width
ratios relative to Vague are 0.996--0.999 for the other main-arm priors.
We do not rank respondents within the perfect-score group: 63 respondents
tie at score 13, making a top-decile comparison depend on Monte Carlo noise.

\subsubsection{The Fitted Latent Density}
\label{subsubsec:H-irt-density}

Figure~\ref{fig:H-irt-density} averages model-level fitted densities over
the posterior. It is not an empirical density of the 500 person posterior
means. The density grid spans $[-10,14]$ in steps of 0.025 logits; at least
0.999 of the posterior mean mass lies inside the grid for every production
chain. Before production we widened the planned grid from $[-6,6]$, which
omitted more than six percent in preliminary fits.

For comparison, we fit the same Rasch likelihood and item constraint with
$\eta_j\sim\mathcal N(\mu,\sigma_\eta^2)$,
$\mu\sim\mathcal N(0,3)$ and
$\sigma_\eta^2\sim\mathrm{InvGamma}(2.01,1.01)$.
This Gaussian benchmark uses four chains of 8,000 iterations, 4,000 warmup
and thinning 2, giving 8,000 retained draws. Its maximum $\widehat R$ is
1.0093 and minimum bulk and tail ESS are 636 and 1,350 over the checked
parameters, score-group means and density values. It passed at the first
attempt. Its posterior mean location is 3.06 logits and mean latent
standard deviation 2.10.

\begin{figure}[!htbp]
\centering
\includegraphics[width=\textwidth]{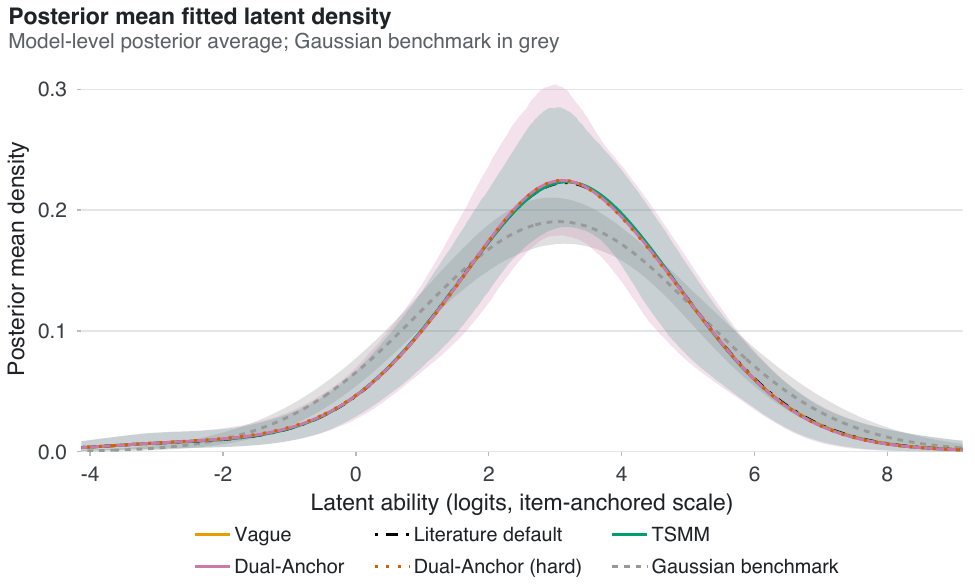}
\caption{Posterior mean fitted latent-ability densities under the five
main-arm hyperpriors, with the Gaussian benchmark in grey. Each curve
averages the model-level density, not the person posterior means. Bands
are pointwise 95 percent intervals for the Gaussian benchmark, TSMM and
Dual-Anchor. The $\mathrm{Gamma}(1,3)$ literature default follows
\citet{paganin2023computational}.}
\label{fig:H-irt-density}
\end{figure}

The curves are smooth and nearly normal even though the posterior counts
differ. Thirteen binary responses per person, WLE reliability 0.71 and 63
tied perfect scores provide limited information about the locations and
shapes of additional small components. Averaging over uncertainty in these
components smooths their features. The resulting curve does not establish
one normal latent population: every main-arm fit assigns less than 0.011
probability to a single occupied component.

A density of person posterior means would answer a different question.
Shrinkage compresses their empirical distribution, a problem discussed by
\citet{lee2025improving} for unit-specific posterior means. Such a density
represents neither within-person uncertainty nor the fitted population
distribution, and we do not plot it here.

\subsubsection{Reproducibility}
\label{subsubsec:H-irt-repro}

The replication package supplies the raw-score frequencies, item summaries,
exported item difficulties and analysis code needed to reproduce the WLE
scoring and reliability calculation. The base-R verifier conditions on the
exported item difficulties: it neither refits the Rasch model nor propagates
uncertainty in those estimates. Response-level records and person-level
posterior draws are not redistributed. A full refit therefore requires the
user to obtain the source responses through the Item Response Warehouse and
to retain its license and attribution.

\subsection{Data and Code Availability}
\label{subsec:H6-repro}

The public replication package is described in
Section~\ref{subsec:G7-repository}. Its data-preparation scripts obtain STAR
and Bangert--Drowns from \texttt{mlmRev} and \texttt{metadat}. OCRS is an
official public-use release available from AIR/ICPSR; users download the
impact-analysis file from the deposit and supply its path to the preparation
script. We do not duplicate its student-level file or copyrighted study
documentation. The standard rebuild uses the supplied school-, study- and
model-level summaries. Refitting a selected application model requires
\texttt{DPprior} and separately obtained source data.


\end{document}